\definecolor{strcolor}{rgb}{0.6, 0.2, 0.6}
\definecolor{commentcolor}{rgb}{0.3125, 0.5, 0.3125}
\definecolor{keycol}{rgb}{0, 0, 1}
\newtheorem{theorem}{Theorem}[section]
\newtheorem{corollary}[theorem]{Corollary}
\newtheorem{proposition}[theorem]{Proposition}
\newtheorem{remark}[theorem]{Remark}
\newtheorem{example}[theorem]{Example}
\newtheorem{Def}{Definition}[section]
\newcommand{\F}{\mathcal{F}}
\newcommand{\dif}{\mathrm{d}}
\definecolor{darkpastelgreen}{rgb}{0.01, 0.75, 0.24}
\newtheorem{assumption}{Assumption}[section]
\newtheorem{axiom}{Axiom}[section]
\newcommand {\bea}{\begin{eqnarray}}
	\newcommand {\eea}{\end{eqnarray}}
\begin{document}
\makeatletter

\title{Decentralized Annuity: A Quest for the Holy Grail of Lifetime Financial Security}


\author[Feng]{Runhuan Feng}
\author[Liang]{Zongxia Liang}
\author[Song]{Yilun Song}
\address{Department of Finance, School of Economics \& Management, Tsinghua University, China}%
\email{fengrh@sem.tsinghua.edu.cn}

\address{Department of Mathematical Sciences, Tsinghua University, China}%
\email{liangzongxia@tsinghua.edu.cn}

\address{Department of Mathematics, Tsinghua University, China}
\email{songyilun@math.pku.edu.cn}



	


\begin{abstract}
This paper presents a novel framework for decentralized annuities, aiming to address the limitations of traditional pension systems such as defined contribution (DC) and defined benefit (DB) plans, while providing lifetime financial support. It sheds light on often ignored pitfalls within current retirement schemes and introduces individual rationality properties. The research delves into various fairness concepts that underpin existing plans, emphasizing that decentralized annuities, while meeting similar fairness criteria, offer enhanced flexibility for individual rationality and improved social welfare for all participants. Using theoretical models and examples, we demonstrate the potential of decentralized annuities to outperform self-managed plans (DC) and to produce effects comparable to defined benefit (DB) plans, particularly within larger participant pools. The paper concludes by exploring the managerial implications of decentralized annuities and laying the groundwork for the further advancement of equitable and sustainable decentralized annuity systems.

\vspace{3mm}
\noindent \textbf{JEL Classification:} C61, D15, G11, G22. \newline
\noindent \textbf{Keywords}: risk sharing, tontine, decentralized annuity, defined contribution, defined benefit, retirement planning.

\end{abstract}

\maketitle

\baselineskip 17pt

\setcounter{equation}{0}
\section{Introduction}
	
The Holy Grail, a vessel reputed to possess miraculous powers of healing and eternal youth, has captivated Christian culture for centuries. The quest for the Holy Grail is regarded as a sacred and grand endeavor, where ``only the worthy may find it." 
In modern society, while forever youth and eternal life are unrealistic, it becomes an expectation of a capable government to provide its citizens, especially those in the bottom strata of the society, with lifetime economic security. The lack of economic security at old age has been a point of criticisms for governments around the world. According to a 2019 World Economic Forum report, 
 Americans, on average, outlive their assets by 8-10 years, with women facing even greater financial strain due to longer life expectancy. Similar issues are prevalent in many developed countries, and developing nations, lacking well-established national retirement systems, face even greater challenges. Consequently, many elderly individuals may face poverty in their advanced years.

Addressing financial security over a lifetime has become a fundamental aspect of modern living worldwide, whether through societal responsibility or personal pursuit. Over the past few centuries, governments, businesses, and scholars have been engaged in an ongoing quest for sustainable financial solutions to provide lifetime support, such as pensions, annuities, and other retirement plans.

\subsection{Background -- Centralized versus Decentralized Lifetime Financial Support}

The most classic and well-established method for providing lifetime financial support is the annuity, which dates back to the 17th century. In 1693, the British government introduced the life annuity scheme to help fund national debt. This scheme involved selling annuities to individuals who would receive annual payments for the rest of their lives. This type of annuity can be described as ``centralized," where a central entity, such as the British government, assumes all financial risks. If a policyholder lives a shorter life and the total annual payments received are less than the annuity's purchase price, the government profits from the difference. Conversely, if a policyholder lives longer and receives payments exceeding the annuity's price, the government incurs a loss. Overall, if the number of survivors exceeds expectations, the government faces a loss from the annuity fund. While such a scheme does provide an ``eternal" financial security to all annuitants, it could be a heavy burden for the annuity provider, as there are many uncertainties with the accumulation of annuity funds, such as equity risk, interest rate risk, inflation risk in addition to the longevity risks of all participants.
\vskip 5pt
Interestingly, another lifetime financial support mechanism, known as the tontine, also emerged in the 17th century. Lorenzo Tonti, an Italian banker after whom the scheme is named, is credited with creating the modern tontine structure. The first tontine was established in France in 1653 as a means for governments to raise funds. Tontines became popular in Europe for their ability to provide a steady income while also serving as an investment. Unlike centralized annuities, tontines operate in a ``decentralized" fashion. In a tontine, the fund is used to make payments to surviving members, with payments varying based on the number of survivors and the remaining balance of the fund. Governments acted as facilitators but assumed little to no financial risk. Tontine is touted as a ``better" alternative to annuity, because the government does not have any financial responsibility other than acting as a dilligent steward of the participants' funds. All risks are spread out among members.
\vskip 5pt
A growing body of literature explores these ``self-financing" and decentralized forms of mutual support for lifetime income. These schemes are based on the age-old principle of pooling resources to support the living members. Such pooling is efficient because every participant receives income for as long as they live, and resources are collectively utilized. Unlike centralized annuities, these decentralized schemes lack a central risk taker. However, a key scientific challenge is ensuring fair allocation of resources among members, even if some live shorter lives and receive less to the pool.

\subsection{Risk Sharing in Retirement Industry}

In general, there are two types of risks in retirement planning: the first type is investment risk. While people can choose higher-return investments such as stocks and funds, they could lose money and not recover their principal. Alternatively, they can opt for safer fixed deposits, but these may not keep up with inflation. While the amount of money remains the same, as prices rise, the real purchasing power of these deposits diminishes. Therefore, whether it is high-return stock and fund investments or low-return deposits, there is investment risk involved. The second type is longevity risk, which refers to the possibility that a person may outlive retirement assets. This risk is becoming increasingly serious in a society with increasing life expectancy. Without sufficient funds for retirement, it will inevitably lead to a decline in living standards or even financial hardship in old age.

\begin{figure}[htb]
\centering
\begin{subfigure}{0.45\textwidth}
    \includegraphics[scale=0.5]{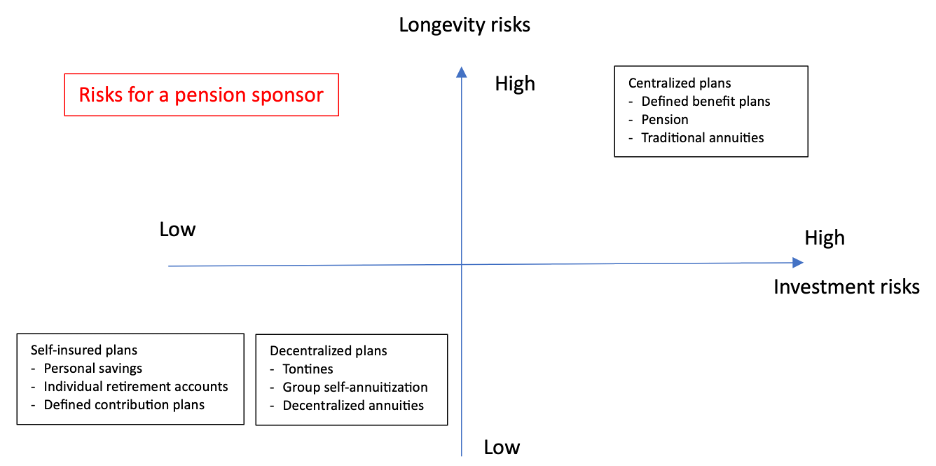}
    \caption{Perspective of a pension sponsor}
    \label{fig:sponsor}
\end{subfigure}
\hfil
\begin{subfigure}{0.45\textwidth}
    \includegraphics[scale=0.5]{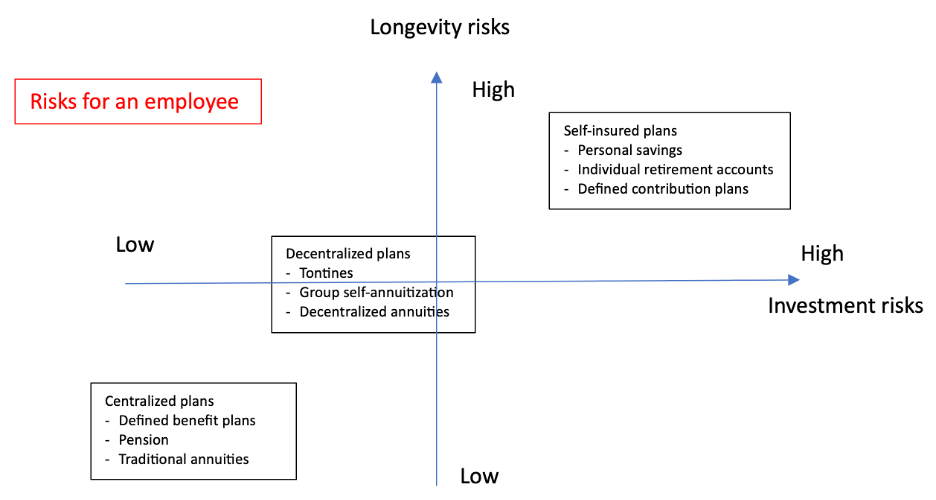}
    \caption{Perspective of an employee}
    \label{fig:employee}
\end{subfigure}        
\caption{Levels of investment and longevity risks.}
\label{fig:figures}
\end{figure}

The backbone of a modern society's economic security system is its social welfare and insurance programs, which protect their citizens against retirement risks and provide financial support to the elderly, disabled, and unemployed. Most countries employ a three-pillar system for retirement income: the first pillar consists of state-sponsored pensions, the second comprises employment-based retirement plans, and the third includes personal savings, such as individual retirement accounts with tax advantages and other investments. Although these pillars provide three different income sources, they represent two fundamentally distinct types of risk-sharing schemes -- defined benefit (DB) and defined contribution (DC). 

\subsubsection*{\bf DB -- centralized and costly plans to pension sponsors.} A defined benefit plan is a type of retirement plan in which a sponsor commits to providing retirees with a specific and predetermined benefit. The benefit is typically based on a formula that takes into account factors such as the retiree's salary history and years of service. The sponsor bears both investment risks and longevity risks, as it is responsible for ensuring that there are sufficient funds to meet the future benefit obligations. This means that the employer must contribute enough to the plan to fulfill the promised benefits, regardless of the plan's investment performance. State-sponsored pension plans and many employment-based plans are considered DB plans and are in essence centralized annuity systems. 

\subsubsection*{\bf DC -- self-insured and difficult-to-manage plans for employees.}

A defined contribution plan is a retirement plan in which the sponsor and employee, or both commit to making contributions into individual accounts established for each employee. The ultimate benefit that a participant receives from a defined contribution plan depends on the contributions made and the investment performance of those contributions. Examples of defined contribution plans include 401(k) plans and individual retirement accounts (IRAs). In these plans, investment risks and longevity risks are typically borne by the employee, as the eventual benefit is not predetermined and is based on the contributions and investment returns over time. Such plans are difficult to manage for employees, as they often do not have the financial knowledge or the tools to manage investment risks. Even if they do, there are macro economic factors beyond employees' control. Figures  \ref{fig:sponsor} and \ref{fig:employee} and show the level of risks involved for a pension sponsor and an employee respectively. 

\begin{figure}[htb]
\centering
\begin{subfigure}{0.9\textwidth}
    \centering
    \includegraphics[scale=0.7]{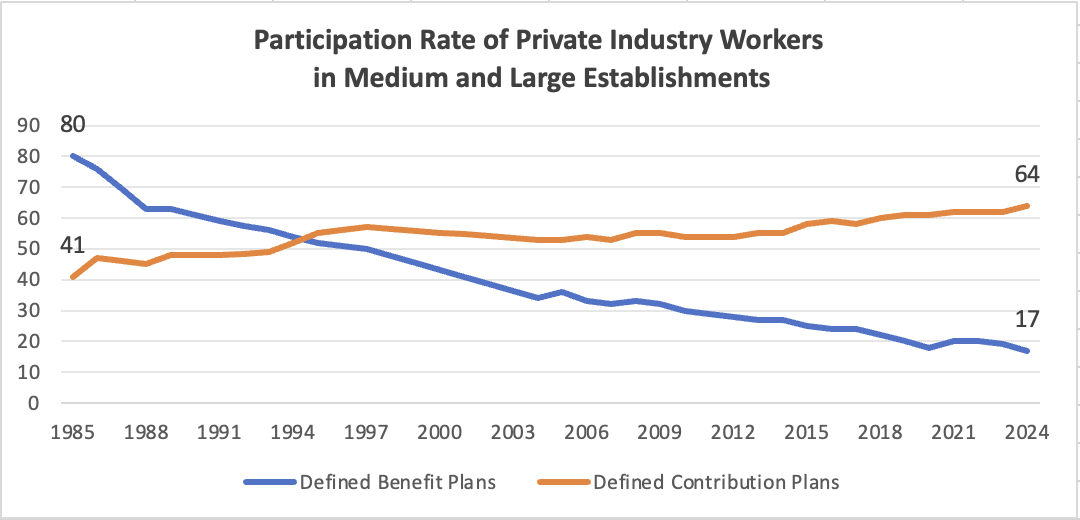}
    \caption{DB versus DC}
    \label{fig:dbdc}
\end{subfigure}
\begin{subfigure}{0.8\textwidth}
    \centering
    \includegraphics[scale=0.6]{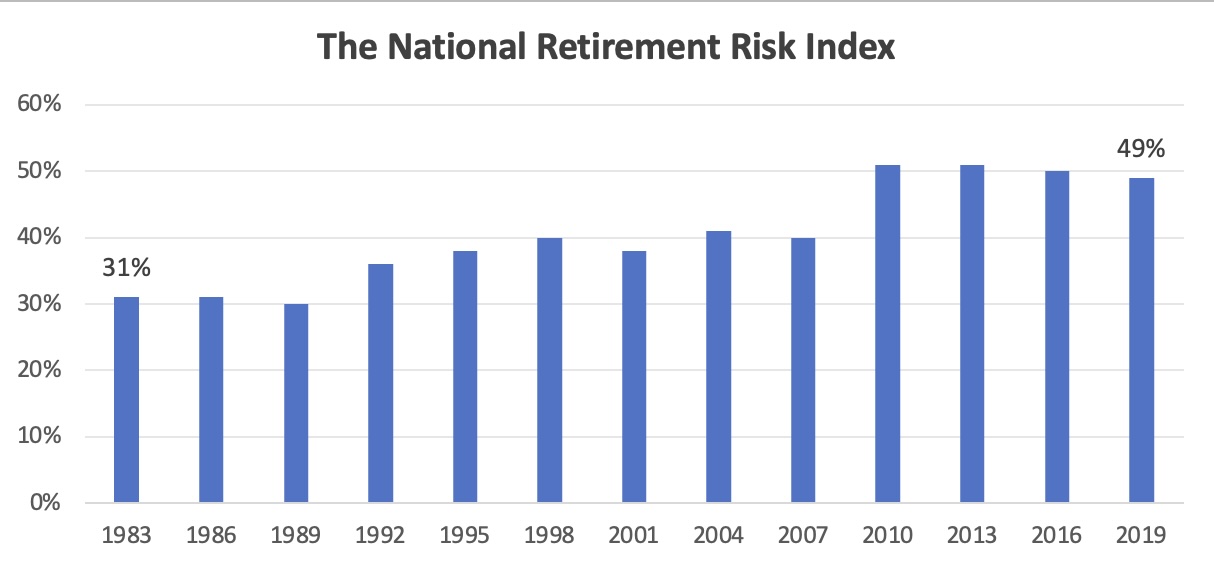}
    \caption{National retirement risk index}
    \label{fig:nrri}
\end{subfigure}        
\caption{Americans' worsening situation with retirement.}
\label{fig:figures}
\end{figure}


\subsubsection*{\bf Industry trend and market pains.}

While most employers offered DB plans during the 1960s to 1980s, the industry trend is for employers to de-risk their DB plans and move to DC plans. Figure \ref{fig:dbdc} illustrates this transition from DB to DC plans. DC place the responsibility of asset management on retirees. However, this trend has come at a cost to employees/retirees. Figure \ref{fig:nrri} shows the trend of National Retirement Risk Index (NRRI). The NRRI is a measure developed by the Center for Retirement Research at Boston College. It assesses the percentage of working-age households at risk of being unable to maintain their pre-retirement standard of living during retirement. The NRRI is based on data from the Federal Reserve's Survey of Consumer Finances and considers factors such as retirement savings, pensions, Social Security benefits, and housing equity. The growing trend of the index provides insights into the worsening financial preparedness of American households for retirement. This alarming trend underscores the urgent need for innovative financial solutions to address the shifting risk landscape.

\subsubsection*{\bf Is there a middle road? -- Yes, decentralized plans.}
Emerging alternatives in the literature have been proposed, known as decentralized risk-sharing plans. With pension sponsors alleviated of risk-bearing responsibilities, decentralized plans impose no greater burden than current DC plans. Through the absorption and sharing of risks among participants, the overall risk for each employee is notably reduced compared to self-managed plans, owing to the risk pooling effect. Consequently, decentralized plans offer a promising middle ground, mitigating risk for employees while relieving pension sponsors of responsibilities, potentially fostering a mutually beneficial scenario for both parties.

The concept of decentralized annuity introduces a novel theoretical framework for constructing decentralized plans aimed at pooling assets from a participant group to optimize their retirement benefits. This framework encompasses a range of plans, including tontines, group self-annuitization, fair transfer plans, and more. However, prevailing designs have predominantly been developed with actuarial fairness principles. This paper charts a new course in mechanism design by delving into individual rationality and economic welfare improvement for all participants.

\subsection{Contributions of This Paper}

This paper contributes to the existing literature in three key ways.

\subsubsection*{\bf Offering a new economic framework for alternatives to existing pension plans.}

While recent literature offers various solutions, there is a notable absence of standardized economic analysis for comparison. This paper introduces a utility-maximizing framework for analyzing a wide range of multi-period decentralized risk-sharing plans, addressing this gap. 

\subsubsection*{\bf Uncovering and addressing the pitfalls of some existing designs.}

The proposed framework enables us to examine and resolve several issues identified in the existing literature:
(1) {\it Overdraft by the Short-Lived:} Many tontine-like schemes, despite being designed with some notion of actuarial fairness, end up allowing participants with shorter lifespans to withdraw more than their initial investment. This undermines the purpose of longevity pooling, where long-lived participants should receive more to cover their higher accumulated cost of living. (2) {\it Loss of the Last Survivor:} Some tontine-like designs leave open the possibility that the last survivor may receive less than their initial investment. This is problematic because participants who expect to live longer would be better off using their own savings for retirement instead of participating in such schemes. (3) {\it Unattainability of Fairness:} Previous literature has overlooked the challenge of achieving fairness in risk-sharing, especially when only two survivors remain. In this paper, we propose an axiomatic approach to address these challenges within the new framework.

\subsubsection*{\bf Proposing new mechanisms for second-pillar pension design.}

Leveraging the new framework, we are able to explore a broader class of decentralized annuity designs. For example, all existing tontine designs in the literature rely on deterministic schedules to allocate initial assets throughout the entire time horizon. The new framework introduces a new feature of {\it dynamic asset allocation} over time according to the survivorship of the cohort and its financial status, thereby allowing more adaptable designs.

The concept of dynamic asset allocation shares similarities with the automatic balancing mechanism (ABM) often utilized in government-sponsored first-pillar systems. ABMs encompass predefined measures enacted in response to indicators that reflect the financial stability of the system. During periods of stress on a state pension system, parameters related to social security taxes and retirement benefits are modified to improve its financial standing. The designs outlined in this paper are tailored for employment-based pensions within the second pillar, providing innovative tools for consideration that complement existing literature.

\subsection{Literature review}

This work is closely tied to two streams of academic literature: {\bf Risk sharing}, {\bf Retirement planning.}

\subsubsection*{\bf Risk sharing.}

Risk sharing has been widely explored across various disciplines, including economics, finance, operational research, and actuarial science. The early theoretical groundwork on risk sharing is grounded in utility maximization under Pareto efficiency. For instance, \cite{pratt2000efficient} demonstrates that the problem of portfolio selection in a Pareto-optimal setting can be  explicitly resolved when utility functions are exponential, logarithmic, or of the same power. Over time, Pareto-optimal risk sharing has evolved into frameworks built on risk measures. Studies such as \cite{barrieu2005inf}, \cite{jouini2008optimal}, \cite{filipovic2008optimal}, and \cite{dana2010overlapping} explore how Pareto-optimal risk sharing under convex risk measures relates to inf-convolution. \cite{righi2022inf} extend this approach to account for potentially infinite risk measures. Recently, the focus has shifted to quantile-based risk measures like Value-at-Risk (VaR) and Expected Shortfall (ES). Explicit solutions for optimal risk sharing using these measures are found in works like \cite{cai2017pareto}, \cite{embrechts2018quantile}, \cite{embrechts2020quantile}, and \cite{liu2022inf}. Further, \cite{liu2024risk} tackles the inf-convolution problem using Lambda Value-at-Risk, assuming the Lambda functions are monotonic or right-continuous. Moreover, \cite{castagnoli2022star} introduce risk sharing based on star-shaped risk measures, showing that the inf-convolution of these measures can also be represented by a star-shaped measure. \cite{jiao2023axiomatictheoryanonymizedrisk} propose an axiomatic framework for anonymized risk sharing.

\vskip 5pt
The application of risk sharing has been explored in various contexts. For example, \cite{kley2016risk} model how large external losses are shared within the reinsurance market using a bipartite graph. Similarly, \cite{ferris2022dynamic} investigate how risk-averse agents make decisions under uncertain supply conditions, using coherent risk measures in a multi-stage environment. They develop an optimization system to minimize the combined risk-adjusted costs and disbenefits for agents, achieving Pareto-optimal outcomes when agents trade financial contracts to mitigate risk. Meanwhile, \cite{chen2013axiomatic} propose an axiomatic framework for managing systemic risk, focusing on the potential collapse of an economic system, and derives a unique solution for allocating systemic risk among firms. In project management, \cite{gutierrez2000analysis} show that risk pooling can shorten expected completion times for parallel projects, while breaking up serial projects reduces variance in completion times.
\vskip 5pt
This paper also contributes to the growing body of work on decentralized insurance, a field that differs from traditional insurance models. In decentralized insurance, risk is shared among network members, with extreme risks being transferred to reinsurance companies. This model is discussed in studies like \cite{abdikerimova2022peer} and further explored by \cite{denuit2022risk}, who provide an overview of the analytical properties of decentralized risk sharing. Various methods for optimizing risk sharing in decentralized insurance have been proposed, such as the use of conditional mean risk sharing by \cite{denuit2023risk}. Peer-to-peer insurance models are also explored in works like \cite{FenLiuTay} and \cite{AbdBooFen}. However, the conditions for admitting new members into decentralized insurance pools remain underexplored. Most existing studies rely on the fairness principle for pricing, while this paper is the first to introduce an equilibrium principle in the context of decentralized insurance.
\vskip 5pt
\subsubsection*{\bf Retirement planning.}

The second body of literature focuses on decentralized retirement plans and tontine-like products aimed at addressing longevity risks, with the primary objective of providing sustainable long-term payout streams to meet participants' financial needs post-retirement. Traditional life annuities offer fully guaranteed and predictable payouts, as discussed by studies like \cite{Yaari1965}, \cite{DBD2005}, \cite{PDHO2009}, \cite{SP2010}, and \cite{AG2021}. However, more recent decentralized retirement plans shift all or part of the longevity risk onto participants, thereby reducing the exposure of plan providers. One such example is the tontine, a pension concept dating back to the 1650s, which has experienced a resurgence in recent years \citep{Milevsky2015, MS2015, MS2016}. Other examples of decentralized retirement plans include survivor funds \citep{Sabin2010, FS2017}, group self-annuitization schemes \citep{PVD05}, and pooled annuity funds \citep{DGN2013}. Additionally, hybrid retirement plans that combine decentralized elements with traditional structures—such as the ``tonuity" described in \cite{CHK2019} and the ``antine" in \cite{CRS2020}—seek to merge the benefits of both approaches, offering enhanced flexibility and risk-sharing mechanisms.

\begin{figure}[h]
\[\mbox{Pitfalls of existing schemes} \rightarrow \mbox{Decentralized annuity (DA)}\rightarrow \mbox{Individual rationality}\] \[ \rightarrow \mbox{Fairness} \rightarrow \mbox{Welfare Improvement} \rightarrow \mbox{Comparison with defined contribution (DC)}\]
    \caption{Logical flows in this paper}
    \label{fig:flow}
\end{figure}

Figure \ref{fig:flow} shows the logical flow in the remainder of this paper. Section \ref{sec:pitfalls} initiates by addressing three overlooked rationality properties of existing designs, the absence of which may impede practical implementation. Subsequently, Section \ref{sec:da} introduces the concept of decentralized annuity to consolidate known designs, offering an axiomatic approach to address rationality properties. The discussion then progresses to Section \ref{sec:fair}, which outlines various notions of fairness and examines the conditions under which fairness can be achieved. This is followed by a design aimed at maximizing utility for participants. Finally, this paper concludes in Section \ref{sec:manage} with an exploration of the managerial implications of decentralized annuities. The examples presented demonstrate that decentralized annuity plans outperform self-managed DC plans and can yield effects akin to DB plans in significantly large participant pools, thereby establishing decentralized annuities as compelling alternatives to DC plans.

\section{ Examples : Pitfalls of Existing Tontines}\label{sec:pitfalls}

The concept of DA is introduced as a generalization of tontines. However, various unresolved technical issues within tontine schemes prompt the discussion of individual rationality in this paper.

In a classic tontine scheme, participants initially contribute their funds to a collective pool, which is then invested to generate financial returns. Over each period, a predetermined portion of the remaining funds is allocated as survival benefits to the living members. The specific method for the distribution of funds over time can vary based on the scheme's design. The primary objective of this distribution is to provide ongoing support for the surviving members. Historically, these distributions were derived from the interest earned on the invested pool funds. As the number of survivors decreases, the size of survival benefits increases. Modern iterations of tontines distribute both interest and a portion of the principal to maintain the size of survival benefits within a reasonable range. Importantly, upon the death of a participant, neither the deceased member nor their beneficiaries receive any payments from the pool.

A tontine serves as a mechanism for sharing longevity risk. The costs associated with living longer, in the form of survival benefits, are essentially borne by those with a shorter lifespan. This arrangement remains attractive to all participants, as everyone anticipates outlasting others and consequently receiving more from the pool than they contribute. It is a fair game to all, as only time will reveal who ultimately bears more of the financial burden. An illustration of such a scheme can be found in Figure \ref{fig:tontine}.

\begin{figure}[htb]
    \begin{tikzpicture}
	\coordinate[label=left:{\small Investments}] (invest) at (2.5,0);
	\coordinate[label=left:{\small Pool}] (pool) at (4,0);
	\coordinate[label=left:{ $\vdots$}] (dots1) at (0.5,-4);
	\coordinate[label=left:{ $\vdots$}] (dots2) at (1.5,-4);
	\coordinate[label=right:{\scriptsize aggregate}] (aggre) at (1.5,-4);
	\fill[draw=orange!60, fill=orange!5, thick] (3,-0.5) rectangle (4,-5.4);
	\coordinate[label=left:{$S$}] (S) at (3.75,-3);
	\foreach \i in {1,2,3,5} {
		\begin{scope}[yshift=\i*-1cm]
			\node at (0,0) {\small Peer $\ifnum\i=5 n\else\i\fi$};
			\fill[draw=blue!60, fill=blue!5, thick] (1,0.5) rectangle (2,-0.4);
			\node at (1.5,0) {$s_{\ifnum\i=5 n\else\i\fi}$};
			\draw[->] (2,0.05) -- (3,0.05);
		\end{scope}
	}
	\fill[draw=red!60, fill=red!5, thin] (3,-2) rectangle (4,-2);
	\draw[->] (4,-1.25) -- (5,-1.25);
	\coordinate[label=left:{\scriptsize total}] (tp1) at (5,-1);
	\coordinate[label=left:{\scriptsize payout}] (pay1) at (5.1,-1.5);
	\coordinate[label=left:{\small Time $1$}] (t1) at (6.15,0);
	\fill[draw=red!60, fill=red!5, thick] (5,-0.5) rectangle (6,-2);
	\coordinate[label=left:{ $d(1)$}] (d1) at (6,-1.25);
	\foreach \i in {1,2,3,5} {
	\begin{scope}[yshift=\i*-1cm]
		\fill[draw=red!60, fill=red!5, thick] (6.5,0.3) rectangle (7.5,-0.2);
		\node at (7,0.05) {\scriptsize $r_{\ifnum\i=5 n\else\i\fi}(1)$};
	\end{scope}
		\draw[->] (6,\i*-0.1-1) -- (6.5,-\i);
	}
	\coordinate[label=left:{ $\vdots$}] (dots11) at (7,-4);
	\coordinate[label=left:{\small Individual}] (id) at (8,0);
	\coordinate[label=left:{\small Payments}] (ip) at (7.95,-0.4);
	\draw[->] (4,-3.7) -- (5,-3.8);
	\coordinate[label=left:{\scriptsize remain}] (re1) at (5.1,-3.5);
	\fill[draw=orange!60, fill=orange!5, thick] (5,-2.1) rectangle (6,-5.5);
	\coordinate[label=left:{$S(1)$}] (S1) at (6,-3.8);
	\draw[draw=red,-,thin] (5,-3.55) -- (6,-3.55);
	\coordinate[label=left:{\small Time $2$}] (t2) at (9.65,0);
	\draw[->] (6,-3.4) -- (8.5,-3.4);
	\coordinate[label=left:{\scriptsize total}] (it) at (8.4,-3.2);
	\coordinate[label=left:{\scriptsize payout}] (ip) at (8.5,-3.6);
	\fill[draw=red!60, fill=red!5, thick] (8.5,-2.1) rectangle (9.5,-3.55);
	\coordinate[label=left:{ $d(2)$}] (d2) at (9.5,-2.85);
	\fill[draw=orange!60, fill=orange!5, thick] (8.5,-3.65) rectangle (9.5,-5.6);
	\coordinate[label=left:{ $S(2)$}] (S2) at (9.5,-4.6);
	\draw[->] (6,-5.3) -- (8.5,-5.4);
	\coordinate[label=left:{\scriptsize remain}] (r2) at (8.6,-5.2);
	\coordinate[label=left:{\small Individual}] (id) at (11.5,0);
	\coordinate[label=left:{\small Payments}] (ip) at (11.45,-0.4);
	\foreach \i in {1,3,5} {
		\begin{scope}[yshift=\i*-1cm]
			\fill[draw=red!60, fill=red!5, thick] (10,0.3) rectangle (11,-0.2);
			\node at (10.5,0.05) {\scriptsize $r_{\ifnum\i=5 n\else\i\fi}(1)$};
		\end{scope}
		\draw[->] (9.5,\i*-0.1-2.6) -- (10,-\i);
	}
	\fill[draw=black!60, fill=black!5, thick] (10,-1.7) rectangle (11,-2.2);
	\coordinate[label=left:{\small Dead}] (d1) at (11.05,-1.95);
	\coordinate[label=left:{ $\vdots$}] (dots2) at (10.5,-4);
	\coordinate[label=left:{ $\cdots\cdots$}] (cdots0) at (13,0);
	\foreach \i in {1,3,5} {
		\begin{scope}[yshift=\i*-1cm]
			\node at (12.33,0.05) { $\cdots\cdots$};
		\end{scope}
	\coordinate[label=left:{ $\vdots$}] (dots3) at (12.5,-4);
	}
	\coordinate[label=left:{\scriptsize no payments}] (np) at (13,-2);
\end{tikzpicture}
    \caption{An illustration of a classic tontine scheme}
    \label{fig:tontine}
\end{figure}
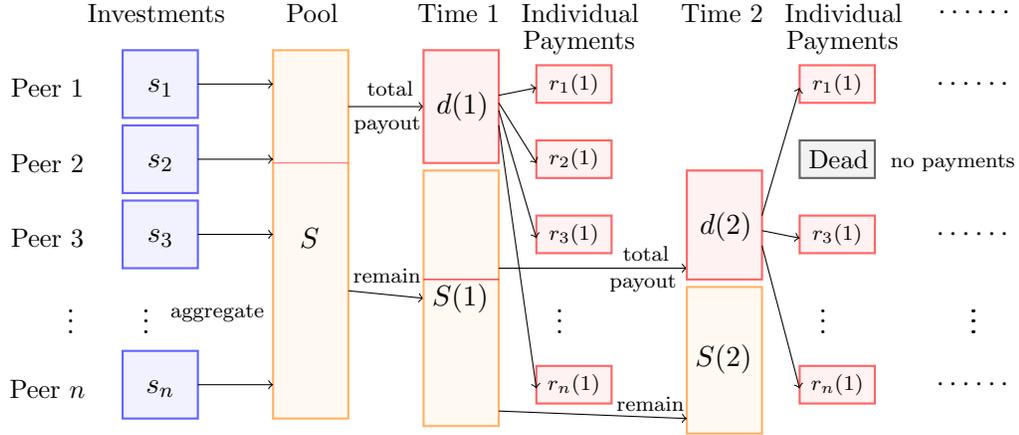
\vskip 5pt
While there is a broad range of tontine designs, each based on different notions of fairness, several pitfalls inherent in these designs have not been explicitly addressed in past literature. In this section, we provide examples to illustrate how these issues manifest in classical tontine models and emphasize the importance of establishing rationality criteria, which will be detailed in the following section.
\vskip 5pt

\begin{example}\label{ex1}(Overdraft by the short-lived)
Consider an equitable tontine with three participants, each initially investing $1,000$. The endowment is not invested and earns no interest. The total initial endowment of $3,000$ is distributed over time at a rate of $120$ per year for the next $25$ years. Benefits are allocated among survivors based on weighting coefficients of $1.2$, $1.0$, and $1.0$. Thus, if all three participants are alive, their annual payments would be $45$, $37.50$, and $37.50$, respectively.
\vskip 5pt
By the end of the $24$-th year, if all three participants are still alive, their cumulative payments would amount to $1,080$, $900$, and $900$. Suppose that each participant has an individual nominal account, initially endowed with their $1,000$ contribution. All survivor benefits are withdrawn from these nominal accounts. By the end of $24$ years, the balances in these accounts would be $-80$, $100$, and $100$. If the first participant dies in the $25$-th year, the remaining two participants' nominal accounts must be adjusted, reducing each from $100$ to $60$ before any further survival benefits are distributed.
\end{example}
Example \ref{ex1} demonstrates that while the first participant, who dies early, receives more in payments than their initial investment, the surviving participants end up covering a portion of the first participant’s overdraft. This results in diminished balances for the survivors, effectively subsidizing the overdraft payments of the deceased participant. Such an outcome undermines the purpose of longevity risk pooling, which is intended to support those who live longer, rather than to burden those who live longer with the costs associated with deceased participants. To address this issue and prevent irrational outcomes, this paper later introduces a rationality property to ensure that individual account balances remain non-negative. 

\vskip 5pt

\begin{example}[A loss for the last survivor] \label{ex2}

{\it Survivor}  is a popular American TV reality show that places a group of contestants in a remote location, typically a tropical island. The show's premise is for contestants to outwit, outplay, and outlast each other to win a cash prize of $1$ million and the title of ``Sole Survivor." Imagine, however, a variation where the cash prize is replaced with a negative amount. In this scenario, instead of rewarding the last survivor, they would be penalized by having to pay an unpaid bill accumulated by the other contestants. This raises the question: would anyone still compete for the title of sole survivor?  This version of the game essentially resembles a special form of tontine where no funds are distributed until only one survivor remains. Such a tontine model can lead to undesirable outcomes.

 Suppose that there are five participants in the tontine, each making an initial investment of $360$. The endowment earns no interest and is distributed evenly over time. The total payout rate is given by $d(t) = \frac{1}{30} \mathbb{I}_{\{t \in [0,30]\}} $. Survival benefits are allocated among the survivors based on weighting coefficients: $\pi_1 = 0.8, \pi_2 = \cdots = \pi_5 = 1.$

To avoid the problem of negative individual account balances, adjustments can be made at time 0 to ensure non-negative balances. As demonstrated in Lemma \ref{Mrep}, the account balances at time 0 can be set to:
\[ s_1(0) = 300, s_2(0) = \cdots = s_5(0) = 375. \]
In this reformulated scheme, no participant would have a negative balance.
\vskip 5pt
However, even with this adjustment, an undesirable outcome can still occur. For instance, if all five participants survive until $ t = 29$, their cumulative payments would be:
\[ s_1(29) = 290,  s_2(29) = \cdots = s_5(29) = 362.5. \]
At this point, only $60$ remains in the tontine system. If the first participant becomes the sole survivor, the maximum payment they could receive would be $60$, resulting in a total payout of $s_1(29) + 60 = 350$. This amount is less than their initial investment of $360$.
\vskip 5pt
\end{example}
Example \ref{ex2} illustrates a significant pitfall of this tontine model: the last survivor can end up as the only loser, which undermines the objective of supporting those who live longer.


\begin{example}[Unattainability of fairness]\label{ex3}
The classic notion of actuarial fairness is that the expected income should be equal to the expected outgo for each participant. However, as we will illustrate, achieving actuarial fairness with only two participants is practically impossible.

Consider two participants with initial investments of $s_1(0)$ and $s_2(0)$. No interest is earned on the fund. Let $p_1$ be the probability that the first participant outlives the second, and $p_2$ be the probability that the second outlives the first. The expected benefit of the first participant at the time of the first death is $p_1 (s_1(0) + s_2(0))$, and for the second participant, it is $p_2 (s_1(0) + s_2(0))$. 
\vskip 5pt
To achieve actuarial fairness, the expected value of the survival benefit should equal the initial investment. This gives us the equations:
\[ s_1(0) = p_1 (s_1(0) + s_2(0)),\quad s_2(0) = p_2 (s_1(0) + s_2(0)). \]
Then 
$s_1(0)/s_2(0) = p_1/p_2, $ which is practically useless condition. In a multi-period setting, it would be impossible to maintain such a ratio when there are only two survivors even if one can impose such a condition at the inception. Even if the two participants can adjust their investments in proportion to their probabilities of outlasting each other, such probabilities change over time. The delicate balance is impossible to keep over time. 
\end{example}

\vskip 5pt

\setcounter{equation}{0}
\section{Centralized versus Decentralized Annuity} \label{sec:da}

While the concept of decentralized annuity (DA) draws inspiration from tontines, its formulation is best presented as an extension of traditional annuities. As a result, we argue for bridging the gap between traditional (centralized) annuities and decentralized schemes such as tontines.

In the setting of a traditional annuity, a cohort of policyholders make a collective lump sum payment of $s_0$ in exchange of regular annuity payments for a period of time. In a continuous time setting, a centralized annuity (CA) is composed of two components:
\begin{itemize}
    \item {\bf Annuity payments in CA:} The annuity payments represent how much the group receives as survival benefits at all times, can be described by a function or a process $d=\{s_0 d(t), 0\le t < \infty\}$, where $\int_0^\infty e^{-\delta u} d(u)\mathrm{d} u=1$. Note that $s_0 d(u)$ here represents the total expected value of annuity payments made to all survivors at $u$.  The process $d$ depends on the type of annuity. Alternatively, we can also use the accumulated amount of annuity payments $R(t)=\int^t_0 d(s) \mathrm{d} s.$

    \item {\bf Cash values in CA:} Cash values represent how much the group can receive upon a withdrawal from the insurer should they decide to surrender. The cash value of the annuity for the entire group can be defined by
\[ S(t)e^{-\delta t}=s_0 \int^\infty_t  e^{-\delta u} d(u)\mathrm{d} u. \]  
Alternatively, we can write
\[ S(t) e^{-\delta t}=S(v)e^{-\delta v}- s_0\int^t_v e^{-\delta u} \mathrm{d} R(u).\]
\end{itemize}

\begin{remark}
    For example, consider a life annuity for all policyholders of age $x$ at inception. We set $d(u)= \;_up_x /D$, where $_up_x$ is the probability of survivorship at time $u$, $D$ is the present value of a life annuity $D=\int^\infty_0 e^{-\delta u}  \;_up_x  \mathrm{d} u.$  Each survivor receives $s_0/D$ for each instant $t$. If we denote the actual percentage of survivor in the original population by $\;_u\hat{p}_x$, then the actual total amount of annuity payments at time $t$ is given by $ s_0 \; _u\hat{p}_x/D$. In reality, $\;_u\hat{p}_x$ is often different from $\;_up_x$ and hence the insurer has to bear the burden of $(s_0/D) \int^\infty_0 (\;_u\hat{p}_x-\;_up_x) \mathrm{d}u. $
\end{remark}

The difference between a CA and a DA is illustrated in Figure \ref{fig:cada}. Suppose that there are $n$ agents in a group with initial deposits $s_i$, $i=1,\cdots,n$, respectively. In a centralized annuity system (including a DB plan), all deposits are sent to a central entity, an insurer or a pension fund, denoted by $0$, which makes annuity payments to all participants. When the total of all payments exceeds the anticipated amount due to longevity  and investment risks, the central entity has to bear the cost of the difference. In a DA system, while all of participants' payments are pooled and managed by a third party, the third-party manager does not take any risk for participants and merely acts as the fiduciary of the invested funds and facilitates the payouts to survivors. In contrast to DC plans, the funds are pooled and the entirety or a partial share of funds for the deceased are forfeited and saved for the survivors. Consequently, all longevity and investment risks are shared among themselves. More specifically, the risks are in essence passed to the short-lived from the long-lived. 


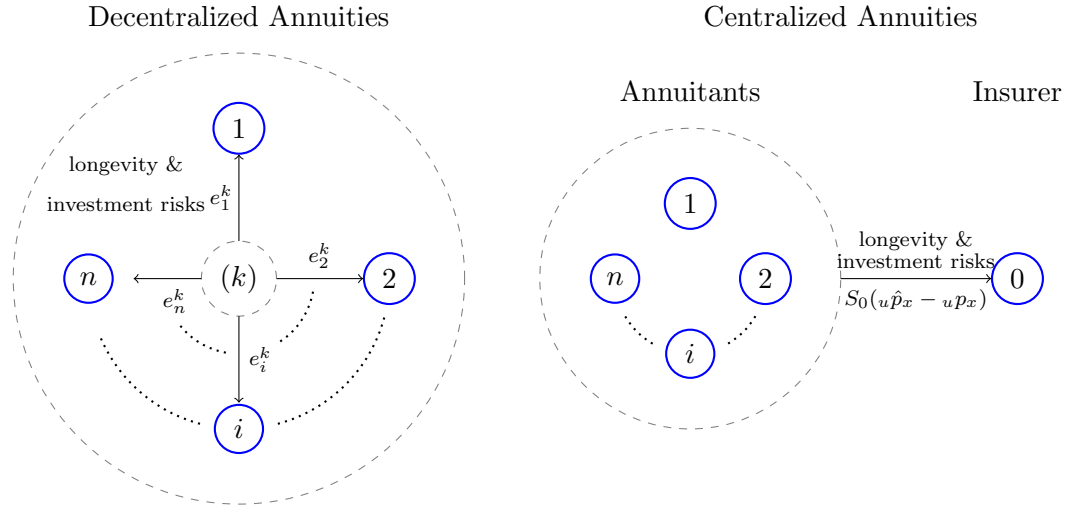
\begin{figure}
    \centering
    \begin{tikzpicture}
        \node at (0,3.5) {Decentralized Annuities};
	\draw[gray, dashed] (0,0) circle (3);
        \node at (-1.5,1.5) {\scriptsize longevity \&};
        \node at (-1.5,1) {\scriptsize investment risks};
	\node[circle, draw=gray, dashed] (k) at (0,0) {$(k)$};
	\node[circle, draw=blue, thick] (1) at (0,2) {$1$};
	\node[circle, draw=blue, thick] (2) at (2,0) {$2$};
	\draw[black, dotted, thick] ({2*sin(75)},{-2*sin(15)}) arc (-15:-75:2);
	\node[circle, draw=blue, thick] (i) at (0,-2) {$i$};
	\node[circle, draw=blue, thick] (nk) at (-2,0) {$n$};
	\draw[black, dotted, thick] ({-2*sin(15)},{-2*sin(75)}) arc (-105:-157.5:2);
	\draw[->] (0,0.5)--node[left]{\scriptsize$e_{1}^k$\!}(0,1.65);
	\draw[->] (0.5,0)--node[above]{\scriptsize$e_{2}^k$}(1.65,0);
	\draw[black, dotted, thick] ({sin(80)},{-sin(10)}) arc (-10:-60:1);
	\draw[->] (0,-0.5)--node[right]{\scriptsize $e_{i}^k$}(0,-1.65);
	\draw[black, dotted, thick] ({-sin(10)},{-sin(80)}) arc (-100:-142:1);
	\draw[->] (-0.5,0)--node[below]{\scriptsize$e_{n}^k$\!\!\!\!}(-1.4,0);
    \draw[gray, dashed] (6,0) circle (2);
    \node[circle, draw=blue, thick] (1) at (6,1) {$1$};
    \node[circle, draw=blue, thick] (2) at (7,0) {$2$};
    \node[circle, draw=blue, thick] (i) at (6,-1) {$i$};
    \node at (8,3.5) {Centralized Annuities};
    \node at (6,2.5) {Annuitants};
    \node at (10.35,2.5) {Insurer};
    \node[circle, draw=blue, thick] (n) at (5,0) {$n$};
    \draw[black, dotted, thick] ({6+sin(60)},{-0.5}) arc (-30:-60:1);
    \draw[black, dotted, thick] ({5.5},{-sin(60)}) arc (-120:-150:1);
    \node at (9,0.5) {\scriptsize longevity \&};
    \draw[->] (8,0)--node[above]{\scriptsize investment risks}node[below]{\scriptsize $S_0({}_u\hat{p}_{x}-{}_up_x)$}(10,0);
    \node[circle, draw=blue, thick] (0) at (10.35,0) {$0$};
\end{tikzpicture}
    \caption{Decentralized versus centralized annuities.}
    \label{fig:cada}
\end{figure}

The probability distributions of all participants' future lifetimes are assumed to be known. From each agent's perspective, the agent $i$ invests $s_i$ in the pool at the inception and in exchange receives a stream of (stochastic) payments, which are recorded by the cumulative payment $R_i(t)$ at time $t$. Denote by $\tau_i$ the death time of agent $i$, $i=1,\cdots,n$, and by $T_k$ the time of the $k$-th death in the group, $k=0,1,\cdots,n-1$. We use the order statistics $\big((1),(2),\cdots,(n)\big)$ to represent the order of deaths, where $\big((1),(2),\cdots,(n)\big)$ is a random permutation of $(1,2,\cdots,n)$ and consequently $T_k=\tau_{(k)}$. 

\begin{Def} 
A {\bf decentralized annuity} is given by the triple $(R, s, e),$ with annuity payments $R=\{R_i(t),~0\le t <\infty,~ i=1,\cdots, n\}$, cash values $s=\{s_i(t),~ 0\le t <\infty,~ i=1,\cdots, n\}$ and credit transfers $e=\{e^{(k)}_i,~ i=1,\cdots, n,~k=1,\cdots,n-1;~ e^0_{ij},~ i,j=1,\cdots, n\}$. Cash values between two consecutive deaths are determined by 
    \begin{equation}\label{rels}
    	s_i(t)e^{-\delta t}
    	=s_i(T_{k-1})e^{-\delta T_{k-1}}-\int_{T_{k-1}}^t e^{-\delta u}\dif R_i(u),\	\ t\in[T_{k-1},T_k),\	\ i=1,\cdots,n,\	\ k=1,\cdots,n-1,
    \end{equation}
    and cash values upon some member's death are determined by
    \begin{equation}\label{trans0}
        s_i(0)=s_i+\sum_{j\neq i}\left[e_{ij}^0-
    e_{ji}^0\right]\triangleq s_i+e_i^0,\	\ i=1,\cdots,n,
    \end{equation}
    and
    \begin{equation} \label{rels2} s_i(T_k)\mathbb{I}_{\{T_k<\tau_i\}}=\left[s_i(T_k-)+e_{i}^{(k)}\right]\mathbb{I}_{\{T_k<\tau_i\}},\	\ i=1,\cdots,n,\	\ k=1,\cdots,n-1, \end{equation} where the credit transfers $e_{i}^{(k)}$ must satisfy
\begin{equation}\nonumber
		\sum_{i\neq (k)}e_{i}^{(k)}\mathbb{I}_{\{T_k<\tau_i\}}=s_{(k)}(T_k-),\	\ 
		 k=1, \cdots,n-1.
	\end{equation}
\end{Def}

Let us elaborate on the three components. To illustrate the structure of a DA, we show the three components in Figure \ref{fig:decann}. 

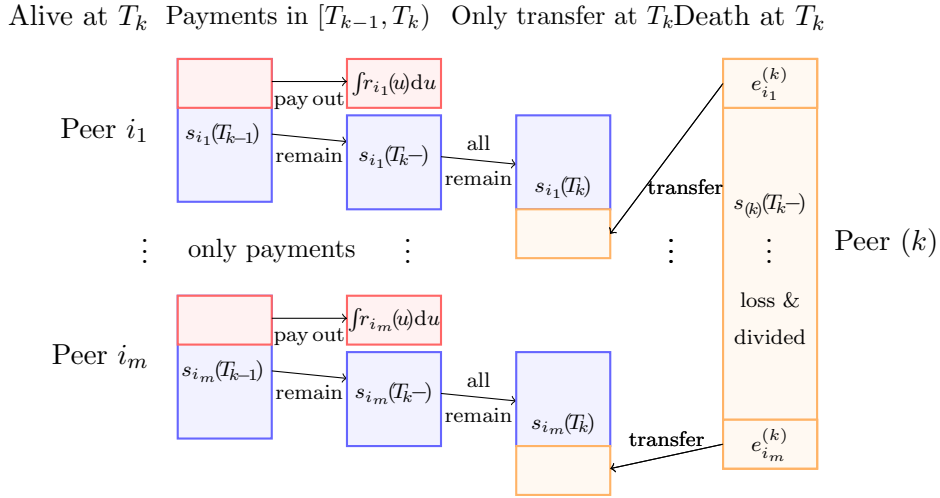
\begin{figure}[h]
    \centering
\begin{tikzpicture}
	\coordinate[label=left:{ Alive at $T_k$}] (alive) at (0,0);
	\coordinate[label=left:{\small Payments in $[T_{k-1},T_k)$}] (accountis) at (3.75,0);
	\coordinate[label=left:{\small Only transfer at $T_k$}] (pay) at (7,0);
	\coordinate[label=left:{Death at $T_k$}] (death) at (9,0);
	\coordinate[label=left:{ Peer $i_1$}] (i1) at (0,-1.5);
	\coordinate[label=left:{ $\vdots$}] (dot) at (0,-3);
	\coordinate[label=left:{\small only payments}] (dot) at (2.75,-3.1);
	\coordinate[label=left:{Peer $(k)$}] (j) at (10.5,-3);
	\coordinate[label=left:{Peer $i_{m}$}] (im) at (0,-4.5);
	\coordinate[label=left:{ $\vdots$}] (dot3) at (3.5,-3);
	\fill[draw=orange!60, fill=orange!5, thick] (7.5,-0.55) rectangle (8.75,-6);
	\node at (8.125, -2.5) {\scriptsize{$s_{(\!k\!)\!}(\! T_{\!k}\!-\!)$}};
	\coordinate[label=left:{ $\vdots$}] (dot7) at (8.3,-3);
	\node at (8.125, -3.75) {\scriptsize loss \&};
	\node at (8.125, -4.25) {\scriptsize divided};
	\foreach \i in {1, 4} {
		\begin{scope}[yshift=\i*-1.05cm]
			\fill[draw=blue!60, fill=blue!5, thick] (0.25,0.5) rectangle (1.5,-1.4);
			\fill[draw=red!60, fill=red!5, thick] (0.25,0.5) rectangle (1.5,-0.15);
			\node at (0.85, -0.5) {\scriptsize{$s_{i_{\ifnum\i=4 m\else \i\fi}}\!(\! T_{\!k\!-\!1}\!)$}};
			\fill[draw=red!60, fill=red!5, thick] (2.5,0.5) rectangle (3.75,-0.15);
			\draw[->] (1.5,0.2) --node[below]{\scriptsize pay\! out} (2.5,0.2);
			\node at (3.125, 0.175) {\scriptsize{$\int\!\! r_{i_{\ifnum\i=4 m\else \i\fi}}\!(\! u\!)\dif u$}};
			\draw[->] (1.5,-0.5) --node[below]{\scriptsize remain} (2.5,-0.6);
			\fill[draw=blue!60, fill=blue!5, thick] (2.5,-0.25) rectangle (3.75,-1.5);
			\node at (3.125, -0.8) {\scriptsize{$s_{i_{\ifnum\i=4 m\else \i\fi}}\!(\! T_{\!k}\!-\!)$}};
			\draw[->] (3.75,-0.8) --node[above]{\scriptsize all}node[below]{\scriptsize remain} (4.75,-0.9);
			\fill[draw=blue!60, fill=blue!5, thick] (4.75,-0.25) rectangle (6,-1.5);
			\node at (5.375, -1.2) {\scriptsize{$s_{i_{\ifnum\i=4 m\else \i\fi}}\!(\! T_{\!k}\!)$}};
			\fill[draw=orange!60, fill=orange!5, thick] (4.75,-1.5) rectangle (6,-2.15);
		\end{scope}
		\fill[draw=orange!60, fill=orange!5, thick] (7.5,-0.55) rectangle (8.75,-1.2);
		\node at (8.125, -0.875) {\scriptsize{$e^{(k)}_{i_1}$}};
		\draw[->] (7.5,-0.875) --  (6,-2.875);
		\node at (7,-2.25) {\scriptsize transfer};
		\coordinate[label=left:{ $\vdots$}] (dot6) at (7,-3);
		\fill[draw=orange!60, fill=orange!5, thick] (7.5,-5.35) rectangle (8.75,-6);
		\node at (8.125, -5.675) {\scriptsize{$e^{(k)}_{i_m}$}};
		\draw[->] (7.5,-5.675) --node[above]{\scriptsize transfer} (6,-6);
		%
	}
\end{tikzpicture}
    \caption{Cash flows in a decentralized annuity}
    \label{fig:decann}
\end{figure}

    \begin{itemize}
        \item {\bf Annuity payments in DA.} The accumulated amount, $R_i$, can be re-calculated upon a death based on peer $i$'s cash value. These payments are expected to be paid out of participants' cash values. For example, $R_i(t)=\int_0^t r_i(u)\dif u$, where $$r_i(u)\mathbb{I}_{\{u\in(T_{k-1},T_k)\}}=s_i(T_{k-1})\exp\left\{-\theta_{k-1}(u-T_{k-1})\right\},$$ where $\theta_{k-1}$ is an extra parameter to control the exponential decay of account balance.
        
        \item {\bf Credit transfers in DA.}
        Credit transfers occur only upon the death of a peer. At the $k$-th death $T_k$, the deceased peer's account balance $s_{(k)}(T_k-)$ is divided into $e_i^{(k)}\mathbb{I}_{\{T_k<\tau_i\}}$, $i=1,\cdots,n$, which are distributed to surviving peers' accounts, respectively. These credit transfers are sent to individuals through Eq \eqref{rels2} as long as they equal to the deceased peer's account balance, i.e., $\sum_{i\neq (k)}e_{i}^{(k)}\mathbb{I}_{\{T_k<\tau_i\}}=s_{(k)}(T_k-)$. In particular, at time $t=0$, there are no deceased participants and the sum of the credit transfers are supposed to be $0$, which is satisfied by the amounts $e^0_{ij}$'s in Eq. (\ref{trans0}). Note that the amounts can be pairwise among all participants at time $0$ but only between the deceased and survivors at other times.

        \item {\bf Cash values in DA.} In a CA, the cash value at any given time is simply determined by the current value of initial investment less the expected accumulated value of  annuity payments. In a DA, the cash value includes not only the initial investment less the accumulated value of annuity payments but also the sum of all discounted credit transfers. The cash value $s_i(t)$ between each two adjacent death times $(T_{k-1},T_k)$ is determined by cash values $s_i(T_{k-1})$ from the last death less the accumulated value of annuity payments $R_i(t)$ since the last death, through the relation Eq.\eqref{rels}. The account value $s_i(T_k)$ at any death time $T_k$ is determined by the account balances $s_i(T_k-)$ just before death time $T_k$ and the credit transfers $\{e_{i}^{(k)},j=1,  \cdots,n\}$, through the relation Eq.\eqref{rels2}. 
    \end{itemize}

    \begin{example}[Continuous annuity payments] \label{eg:cont} In the decentralized annuity with continuous payments, the peer $i$'s payments $R_i(t)$ are given by
\begin{equation*}
		R_i(t)=\int_0^t r_i(u)\dif u+s_i(T)\mathbb{I}_{\{t=T<\tau_i\}},
\end{equation*} 
where annuity payments $r_i(t)$ in the period $[T_k,T_{k+1})$ are determined by $T_{k-1}$ and account balances $s_i(T_k)$ after the credit transfer, which means that $r_i(t)\mathbb{I}_{\{t\in[T_k,T_{k+1})\}}$ is $\F_{T_{k-1}}$-measurable, i.e., there exists deterministic payout functions $r_i^k(t)$ such that 
\begin{equation}\label{DAr}
    r_i(t)\mathbb{I}_{\{t\in[T_k,T_{k+1})\}}=r_i^k(t-T_k)s_i(T_k)\mathbb{I}_{\{t\in[T_k,T_{k+1})\}}.
\end{equation}
For given credit transfers $\{e^{(k)}_i\}$, account balances $s_i(t)$ are given by
\begin{equation}\nonumber
    \left\{
    \begin{split}
    &s_i(t)\mathbb{I}_{\{t\in[T_{k-1},T_k)\}}=s_i(T_{k-1})e^{\delta(t-T_{k-1})}-\int_{T_{k-1}}^t e^{\delta(t-u)}r_i(u)\dif u,\\
    &s_i(T_k)=\left[s_i(T_{k-1})e^{\delta(T_k-T_{k-1})}-\int_{T_{k-1}}^{T_k} e^{\delta(t-u)}r_i(u)\dif u+e_{i}^{(k)}\right]\mathbb{I}_{\{\tau_i>T_k\}}.
    \end{split}
    \right.
\end{equation}
Moreover,\  
the credit transfers $e^{(k)}_i$ are also determined by annuity payments $r_i$ and cash values $s_i(T_k)$, $i=1,\cdots,n$, \ $k=0,\cdots,n-1$. 
\end{example}
\subsection{Individual rationality}

To address the concerns outlined in Section \ref{sec:pitfalls}, we propose several individual rationality criteria, the first of which is a temporal requirement and the last two are spatial requirements.

\subsubsection*{Temporal rationality.} First and foremost, there should not be any loss for the last survivor. As alluded to earlier in Example \ref{ex2}, a violation can lead to a last survivor suffering a loss, which defeats the purpose of longevity pooling.
\begin{axiom}[no loss for the last survivor]\label{NL}
The longest living agent's lifetime payments should be no less than the agent's initial investment, i.e.,
\begin{equation}\nonumber
\int_0^{T_{n-1}}e^{-\delta u}\dif R_i(u)~ \mathbb{I}_{\{\tau_i>T_{n-1}\}}\geq s_i\mathbb{I}_{\{\tau_i>T_{n-1}\}},\ \ i=1,\cdots,n.
\end{equation}
\end{axiom}
\begin{theorem} \label{thm3.2}  Let  Axiom \ref{NL} hold. Then  every agent's total discounted account transfers are non-negative, i.e.,
\begin{equation}\nonumber
e_i^0+\sum_{k=1}^{n-1}e_{i}^{(k)}e^{-\delta T_k}\mathbb{I}_{\{\tau_i>T_k\}}\geq 0,\ \ i=1,\cdots,n.
\end{equation}
\end{theorem}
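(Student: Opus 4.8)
The plan is to collapse the recursive definition of the cash values into one telescoped identity and then read off the sign from Axiom \ref{NL}. The convenient bookkeeping device is the discounted total-value process
\[
\Phi_i(t)\;=\;e^{-\delta t}s_i(t)+\int_0^t e^{-\delta u}\,\dif R_i(u),
\]
which tracks what agent $i$ still holds plus what they have already drawn. First I would observe that \eqref{rels} gives $\dif\!\left[e^{-\delta t}s_i(t)\right]=-e^{-\delta t}\dif R_i(t)$ on each interval $[T_{k-1},T_k)$, so $\Phi_i$ is \emph{constant between consecutive deaths}. Then the jump relation \eqref{rels2} shows that at every death $T_k$ which agent $i$ survives, $\Phi_i$ increases by exactly the discounted credit transfer $e_i^{(k)}e^{-\delta T_k}$. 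Starting from \eqref{trans0}, which gives $\Phi_i(0)=s_i(0)=s_i+e_i^0$, and telescoping these two facts up to just before agent $i$'s own death yields
\[
e_i^0+\sum_{k=1}^{n-1}e_i^{(k)}e^{-\delta T_k}\,\mathbb{I}_{\{\tau_i>T_k\}}\;=\;\Phi_i(\tau_i-)-s_i,
\]
so the theorem is equivalent to the single pathwise inequality $\Phi_i(\tau_i-)\ge s_i$ for every $i$.

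On the event $\{\tau_i>T_{n-1}\}$ that agent $i$ is the sole survivor, no death occurs in $(T_{n-1},\tau_i)$, hence $\Phi_i(\tau_i-)=\Phi_i(T_{n-1})=e^{-\delta T_{n-1}}s_i(T_{n-1})+\int_0^{T_{n-1}}e^{-\delta u}\dif R_i(u)$. Discarding the (non-negative) terminal balance and invoking Axiom \ref{NL} gives $\Phi_i(T_{n-1})\ge\int_0^{T_{n-1}}e^{-\delta u}\dif R_i(u)\ge s_i$, which disposes of this case at once and already proves the theorem for whoever ends up living longest.

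For an agent who is not the last survivor I would argue by monotonicity. The credit transfers are non-negative: by the constraint $\sum_{j\neq(k)}e_j^{(k)}\mathbb{I}_{\{T_k<\tau_j\}}=s_{(k)}(T_k-)$, a deceased member's non-negative account balance is apportioned among the survivors, so each received share satisfies $e_i^{(k)}\ge 0$. Combined with the previous step this makes $\Phi_i$ non-decreasing, so $\Phi_i(\tau_i-)\ge\Phi_i(0)=s_i+e_i^0$, and it suffices to prove $e_i^0\ge 0$ for every $i$.

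Establishing $e_i^0\ge 0$ is the hard part, precisely because Axiom \ref{NL} only binds whoever turns out to be the sole survivor and says nothing about an agent who dies early and collects few or no transfers. The approach I would take is to apply Axiom \ref{NL} on the event that agent $i$ outlives everyone and pair it with the non-negativity of the account balance \emph{before any transfer is received}: prior to the first death agent $i$'s balance is funded only by $s_i(0)$, so $\int_0^{T_1}e^{-\delta u}\dif R_i(u)\le s_i(0)=s_i+e_i^0$, and in the two-agent case $T_1=T_{n-1}$ already forces $s_i\le s_i(0)$, i.e.\ $e_i^0\ge 0$. The obstacle is that for $n\ge 3$ the balance bound and Axiom \ref{NL} only pin down the \emph{combined} quantity $e_i^0+\sum_k e_i^{(k)}e^{-\delta T_k}$ for the survivor, so isolating the sign of $e_i^0$ alone seems to require the standing non-negativity of account balances (the spatial rationality property motivated by Example \ref{ex1}) together with the almost-sure validity of Axiom \ref{NL} across all orders of death; making that reduction airtight is where I expect the real work to lie.
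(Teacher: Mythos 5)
Your first two paragraphs are, in substance, the paper's own proof. The process $\Phi_i(t)=e^{-\delta t}s_i(t)+\int_0^t e^{-\delta u}\dif R_i(u)$ is constant between deaths by \eqref{rels}, jumps by $e_i^{(k)}e^{-\delta T_k}$ at each death that agent $i$ survives by \eqref{rels2}, and starts at $s_i+e_i^0$ by \eqref{trans0}; telescoping it reproduces exactly the chain of substitutions in Appendix \ref{condA1}, and on $\{\tau_i>T_{n-1}\}$ Axiom \ref{NL} turns the resulting identity into the claimed inequality. One small wrinkle: you ``discard the non-negative terminal balance'' $s_i(T_{n-1})$, but non-negativity of cash values is Axiom \ref{AB}, which is not assumed in Theorem \ref{thm3.2}. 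The paper avoids needing it through the convention, visible in Example \ref{eg:cont} and in the first line of its proof, that $R_i$ includes the payout of the last survivor's remaining balance at $T_{n-1}$, so that $\Phi_i(T_{n-1})=\int_0^{T_{n-1}}e^{-\delta u}\dif R_i(u)$ and there is nothing to discard.

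The remainder of the proposal contains a genuine error. Your monotonicity step claims that the clearing condition $\sum_{j\neq (k)}e_j^{(k)}\mathbb{I}_{\{T_k<\tau_j\}}=s_{(k)}(T_k-)$ forces $e_i^{(k)}\ge 0$. It does not: the deceased's balance $s_{(k)}(T_k-)$ need not be non-negative (that is again Axiom \ref{AB}, and Example \ref{ex1} shows it can fail), and even when it is non-negative, only the \emph{sum} of the transfers is pinned down, so individual transfers can still be negative. Non-negativity of each $e_i^{(k)}$ is precisely Axiom \ref{MC}, a strictly stronger hypothesis, and the paper states explicitly after Figure \ref{fig:axiom1} that Axiom \ref{NL} does not require per-period transfers to be non-negative.

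Moreover, the extension you are attempting in the last paragraph --- the inequality holding pathwise for agents who die before $T_{n-1}$ --- is not merely hard; it is false, so no repair can succeed. Take $n=3$, $\delta=0$, $s_1=s_2=s_3=100$, $e_i^0=0$, no payments before $T_2$; at the first death give $-50$ to the surviving agent with the smaller index and $150$ to the other, and at the second death give the deceased's entire balance to the last survivor, who then collects everything. In every ordering of deaths the last survivor receives $300\ge 100$, so Axiom \ref{NL} holds on all paths, yet the agent who receives the $-50$ transfer and then dies second has total discounted transfers $-50<0$. This confirms that Theorem \ref{thm3.2} must be read the way the paper proves it and the way Figure \ref{fig:axiom1} describes it: as a statement about the transfers collected by the last survivor, i.e.\ on the event $\{\tau_i>T_{n-1}\}$, where all the indicators equal one. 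On that reading, your first two paragraphs already complete the proof, and the ``real work'' you anticipate at the end does not exist.
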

\begin{proof} 
    See Appendix \ref{condA1}. 
\end{proof}

As illustrated in Figure \ref{fig:axiom1}, Axiom \ref{NL} is a temporal requirement under which the sum of discounted credit transfers to the last survivor, who collects them in all periods, must be non-negative. Note that, however, this axiom does not require that the credit transfer in each period to be non-negative.

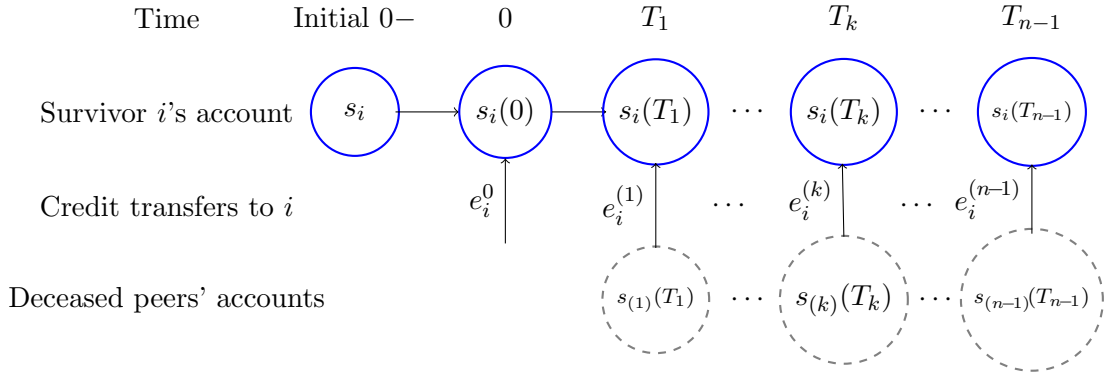
\begin{figure}[h]
    \centering
    \begin{tikzpicture}
	\node at (-2.5,1) {Time};
	\node at (0,1) {Initial $0-$};
	\node at (2,1) {$0$};
	\node at (4,1) {$T_1$};
	\node at (6.5,1) {$T_k$};
	\node at (9,1) {$T_{n-1}$};
	\node at (-2.5,-0.25) {Survivor $i$'s account};
	\node at (-2.5,-1.5) {Credit transfers to $i$};
	\node at (-2.5,-2.75) {Deceased peers' accounts};
\node[circle, draw=blue, thick] (00) at (0,-0.25) {\   \ ~~~~~~$s_i$~~~~~~\   \ };
	\draw[->] (0.54,-0.25)--(1.38,-0.25);
	\node[circle, draw=blue, thick] (0) at (2,-0.25) {$s_i(0)$};
	\draw[->] (2,-2)--node[left]{$e_i^0$}(2,-0.87);
	\draw[->] (2.62,-0.25)--(3.34,-0.25);
	\node[circle, draw=blue, thick] (1) at (4,-0.25) {$s_i(T_1)$};
	\node[circle, draw=gray, dashed, thick] (0) at (4,-2.75) {\scriptsize $s_{( 1)}(T_1)$};
	\draw[->] (4,-2.06)--node[left]{$e_{i}^{(1)}$}(4,-0.93);
	\node at (5.25,-0.25) {$\cdots$};
	\node[circle, draw=blue, thick] (k) at (6.5,-0.25) {$s_i(T_k)$};
	\node at (5.25,-2.75) {$\cdots$};
	\node[circle, draw=gray, dashed, thick] (kd) at (6.5,-2.75) {$s_{( k)}(T_k)$};
	\node at (5,-1.5) {$\cdots$};
	\draw[->] (6.5,-1.91)--node[left]{$e_{i}^{(k)}$}(6.48,-0.94);
	\node at (7.75,-0.25) {$\cdots$};
	\node[circle, draw=blue, thick] (n) at (9,-0.25) {\scriptsize$s_i(T_{n\!-\!1})$};
	\node at (7.75,-2.75) {$\cdots$};
	\node[circle, draw=gray, dashed, thick] (nd) at (9,-2.75) {\scriptsize$s_{(n\!-\!1)}\!(T_{n\!-\!1})$};
	\node at (7.5,-1.5) {$\cdots$};
	\draw[->] (9,-1.87)--node[left]{$e_{i}^{(n\!-\!1)}$}(9,-0.95);
\end{tikzpicture}
    \caption{Visualization of Axiom 1: no loss for the last survivor.}
    \label{fig:axiom1}
\end{figure}

\subsubsection*{Spatial rationality.} A credit transfer can potentially be negative, depending on the model design. As such, agents' cash values can depreciate over time. It becomes critical to ensure that cash values for all participants stay non-negative. If cash values fall below zero, then some members might end up overdrawing from their fair share of the annuity system. In the unfortunate event of a member's death with negative cash values, it would unfairly burden the remaining participants.

\begin{axiom}[non-negative cash values]\label{AB}
For a given time $t \in [0,T_{n-1}]$, account balances of all peers must be non-negative,  i.e.,
	$s_i(t)\geq 0,\   \    \ i=1,\cdots,n.$
\end{axiom}


Recall that upon each death, credit transfers are made out of the deceased's account. According to a clearing condition, the sum of transfers must be equal to the account balance, i.e. $s_{(k)}(T_k-)=\sum\limits_{i\neq (k)}e_{i}^{(k)}$. Hence, Axiom \ref{AB} is equivalent to the total credit transfer from the deceased to survivors at a given time being non-negative. The axiom is shown in Figure \ref{fig:nonnegcre}.

\begin{figure}
    \centering
    \begin{tikzpicture}
	\node[circle, draw=gray, dashed] (k) at (0,0) {$(k)$};
	\node[circle, draw=blue, thick] (1) at (0,2) {$1$};
	\node[circle, draw=blue, thick] (2) at (1.414,1.414) {2};
	\node[circle, draw=blue, thick] (3) at (2,0) {$3$};
	\draw[black, dotted, thick] ({2*sin(75)},{-2*sin(15)}) arc (-15:-75:2);
	\node[circle, draw=blue, thick] (i) at (0,-2) {$i$};
	\node[circle, draw=blue, thick] (nk) at (-2,0) {$n\!-\! k$};
	\draw[black, dotted, thick] ({-2*sin(15)},{-2*sin(75)}) arc (-105:-157.5:2);
	\draw[->] (0,0.5)--node[left]{\scriptsize$e_{1}^{(k)}$\!}(0,1.65);
	\draw[->] ({0.5*sin(45)},{0.5*cos(45)})--node[above left]{\scriptsize$e_{2}^{(k)}$\!\!\!\!\!}({1.65*sin(45)},{1.65*sin(45)});
	\draw[->] (0.5,0)--node[above]{\scriptsize$e_{3}^{(k)}$}(1.65,0);
	\draw[black, dotted, thick] ({sin(80)},{-sin(10)}) arc (-10:-60:1);
	\draw[->] (0,-0.5)--node[right]{\scriptsize $e_{i}^{(k)}$}(0,-1.65);
	\draw[black, dotted, thick] ({-sin(10)},{-sin(80)}) arc (-100:-142:1);
	\draw[->] (-0.5,0)--node[below]{\scriptsize$e_{n\!-\! k}^{(k)}$\!\!\!\!}(-1.4,0);
    \node at (7,2) {Axiom 2: Non-negative total credit transfers at $T_k$};
    \node at (6,1) {$\sum\limits_{i=1}^n e^{(k)}_i\geq 0$, $k=1,\cdots,n$.};
    \node at (7,0) {Axiom 3: Non-negative credit transfers at $T_k$};
    \node at (6,-1) {$e^{(k)}_i\geq 0$, $i,k=1,\cdots,n$.};
\end{tikzpicture}
    \caption{Visualization of Axiom 2: non-negative total credit transfer \& Axiom 3: non-negative mortality credit}
    \label{fig:nonnegcre}
\end{figure}
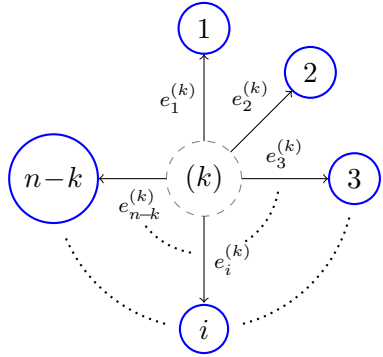

\begin{remark} Consider the case of a decentralized annuity in Example \ref{eg:cont}.
 If it satisfies Axiom \ref{AB}, then the continuous payout function must also satisfy:  $	\int_0^\infty r_i^k(u)\dif u\leq 1$. Note that $r^k_i(u)$ is a portion of the agent's remaining balance to be paid out as annuity payments. It is clear that the total percentage of payout should not exceed 1.
\end{remark}

If there exists a negative account transfer to agent $i$ at time $T_k$, any surviving agent $i$ could receive a negative mortality credit. When it happens, the survivor suffers a loss from someone's death, which is considered unreasonable in a longevity pool. As such, we propose the third rational property of non-negative mortality credit.
\begin{axiom}[non-negative mortality credit]\label{MC}
	Upon the $k$-th death, credit transfers to survivors should all be non-negative, i.e.,
 $e_{i}^{(k)}\geq 0,\ i=1,\cdots,n,$ 
	and  $e_i^0=0$, $i=1,\cdots,n$.
\end{axiom}
It should be pointed out that both Axioms \ref{AB} and \ref{MC} are spatial requirements at a given time. Note, however, spatio- and temporal requirements are interconnected. If Axiom \ref{MC} holds for every period, i.e., $k=1,\cdots,n-1,$ then it implies  Axioms \ref{NL} and \ref{AB}. Hence if Axiom \ref{MC} holds for all periods, then it is also equivalent to
\begin{equation}
	\left\{
	\begin{split}
		&s_i(0)=s_i,\\
		&s_i(T_k)\mathbb{I}_{\{\tau_i>T_k\}}\geq s_i(T_k-)\mathbb{I}_{\{\tau_i>T_k\}},\	\ k=1,\cdots,n-1.
	\end{split}
	\right.
\end{equation}
It means that cash values for all survivors should never decrease upon each transfer. For simplicity, we shall no longer distinguish spatio- and temporal requirements for the rest of the paper and assume that they hold for all periods and all participants. We provide several typical examples of the decentralized annuity satisfying Axioms 1-3 in Appendix \ref{examplesAx2-3}.

For clarity for future discussions, we shall define a proper decentralized annuity. The properness offers a mechanism to address the aforementioned shortcomings of known retirement plans in the existing literature. Details can be found in Appendix \ref{examplesAx2-3}. For the rest of this paper, we shall only consider proper decentralized annuities.

\begin{Def}
    A decentralized annuity is said to be proper if it satisfies Axiom \ref{MC} for all periods.
\end{Def}

\setcounter{equation}{0}

\section{Fairness and welfare improvement}\label{sec:fair}

In Section \ref{sec:manage}, we will demonstrate that numerous existing decentralized retirement plans are indeed plagued by the aforementioned pitfalls, and we will emphasize that axiom \ref{MC} holds, the key to addressing these issues. Prior to offering an example of decentralized annuity that effectively mitigates these pitfalls, we will direct our focus towards another crucial property. It is worth noting that existing plans are frequently conceived and developed based on various fairness conditions. Our objective in this paper is to consolidate and expand upon their fairness concepts, thereby establishing a basis for a comprehensive comparison among existing plans.


\subsection{\bf Fairness conditions}

We show four types of fairness conditions, which all either appeared or inspired by those in the existing literature. The first two, namely ``lifetime fairness"  and ``equitability"  are largely temporal concepts, whereas the latter two, namely ``periodic fairness" and ``instantaneous fairness" are mainly spatial concepts.

\begin{assumption}{\bf {Lifetime Fairness.} }Every agent's payments satisfy
	\begin{equation}\label{WF}
		\mathbb{E}\left[\int_0^{\tau_i} e^{-\delta u}\dif R_i(u)\right]=s_i,\	\ i=1,\cdots,n.
	\end{equation}
\end{assumption}

This concept of fairness is rooted in the classic notion that, on average, lifetime payments from the system should equal the contributors' contributions. It is important to note that in this definition, the emphasis is not on directly comparing agents, but rather on ensuring that individuals' earned rights are commensurate with their obligations. 

\begin{assumption}{\bf {Equitability.}} There exists a constant $\epsilon>0$ such that every agent's payments satisfy
	\begin{equation}
		\mathbb{E}\left[\int_0^{\tau_i} e^{-\delta u}\dif R_i(u)\right]=(1-\epsilon)s_i,\	\ i=1,\cdots,n.
	\end{equation}
\end{assumption}

The concept of lifetime fairness, while ideal, can be restrictive in certain applications. To address this, the concept of equitability was developed in \cite{MS2016}, acknowledging that participants' lifetime benefits may not necessarily match their contributions. Instead, a compromise was reached, stipulating that on average, participants can retain a $(1-\epsilon)$ proportion of their initial investments. In essence, although they may all incur some losses, the loss should be equal in proportion to their initial deposits.




It is important to keep in mind that we only consider the design of a retirement plan until the last person standing. When we consider spatial concepts of fairness, we define such concepts up to time $T_{n-1}$. This type of fairness first appeared in \cite{Sabin2010} and ensures that over each period divided by the time of someone's death each participant's account should be equal to its value on average before and after credit transfers.

\begin{assumption}\label{AssPF}{\bf {Periodic Fairness.}} In any period $[T_{k-1},T_k]$, $k=1,\cdots,n-1$, the sum of any agent's expected payments and future benefits equal to the shares in the agent's account at the beginning of the period, 
	i.e.,
	\begin{equation}
			\mathbb{E}\left.\left[\int_{T_{k-1}}^{T_k}\!\! e^{-\delta u}\dif R_i(u)\!+\! s_i(T_{k})e^{-\delta T_k}\right|\mathcal{F}_{T_{k-1}}\right]\!\!=\! s_i(T_{k-1})e^{-\delta T_{k-1}},\	\ i=1,\cdots,n.
	\end{equation}
\end{assumption}

Recall that, in the period $(T_{k-1},T_k)$, the cash value is given by 
\[s_i(t)e^{-\delta t}=s_i(T_{k-1})e^{-\delta T_{k-1}}-\int_{T_{k-1}}^t e^{-\delta u}\dif R_i(u),\	\ t\in(T_{k-1},T_k),\] 
which corresponds to a drawdown of the individual's account. Therefore, the action in the period between two consecutive deaths does not affect the property of fairness, as one consumes from his/her own account. 
As such, it is enough to propose the following equivalent condition of periodic fairness on $s_i(T_k-)$ and $s_i(T_k)$.

\begin{assumption}{\bf {Periodic Fairness} (Simplified).} At the end of any period $T_k$, $k=1,\cdots,n-1$, the expected value of any agent's account must be equal before and after a credit transfer, 
	i.e.,
	\begin{equation}\label{PFT}
		\mathbb{E}\left[s_i(T_k-)e^{-\delta T_k}|\mathcal{F}_{T_{k-1}}\right]=\mathbb{E}\left[s_i(T_k)e^{-\delta T_k}\mathbb{I}{\{\tau_i\!>\!T_k\}}|\mathcal{F}_{T_{k-1}}\right],\	\ i=1,\cdots,n.
	\end{equation}
\end{assumption}


The left-hand side of Eq. \eqref{PFT} denotes the expected value of peer $i$'s account before a credit transfer, while the right-hand side represents the expected value of the account after the $k$-th death time. This includes two scenarios: an increased account value if the annuitant is still alive, or a value of zero if the annuitant has passed away, i.e., $s_i(T_k)=s_i\mathbb{I}_{\{\tau_i> T_k\}}$. In essence, the annuitant either receives a credit transfer upon survival or forfeits the account value upon death.


\begin{assumption}{\bf Instantaneous Fairness.}
	For given time points $t_2>t_1\geq 0$, every agent's payments satisfy
	\begin{equation}\label{DFT}
		\mathbb{E}\left[\int_{t_1}^{t_2} e^{-\delta u}\dif R_i(u)\right]=\mathbb{E}\left[s_i(\tau_i-)e^{-\delta\tau_i}\mathbb{I}_{\{\tau_i\in(t_1,t_2)\}}\right],\	\ i=1,\cdots,n.
	\end{equation}
\end{assumption}
The left-hand side of Eq. \eqref{DFT} illustrates the expected value of accumulated payments over a fixed period $(t_1, t_2)$, while the right-hand side indicates the expected value of the forfeited account if the individual passes away during the same period. Fairness, in this context, implies that one's expected gain should correspond to the expected loss over the same period.

Furthermore, if the accumulated payment $R_i(t)$ is absolutely continuous, i.e., $R_i(t)=\int_0^t r_i(u)\,\mathrm{d}u$, then the instantaneous fairness condition is also equivalent to
\begin{equation}\label{IFT}
	\mathbb{E}\left[r_i(t)e^{-\delta t}\right]=\lim_{\Delta\downarrow 0}\frac{1}{\Delta}\mathbb{E}\left[s_i(\tau_i)e^{-\delta \tau_i}\mathbb{I}_{\{\tau_i\in(t,t+\Delta)\}}\right].
\end{equation}
This condition provides a clearer interpretation of the concept of instantaneous fairness, as it asserts that the expected value of gain at each instant $t$ should be equal to the expected value of loss.

Having introduced four distinct notions of fairness, it is important to note that they are closely interconnected. Assuming that periodic fairness and instantaneous fairness hold for all periods and instants, these four notions can be placed in an order of logical implication:
\[\mbox{Instantaneous Fairness} \implies \mbox{Periodic Fairness} \implies \mbox{Lifetime Fairness} \implies \mbox{Equitability}.\] Consequently, for the remainder of this paper, we focus on decentralized annuities that satisfy either instantaneous fairness or periodic fairness.

\subsection{\bf Instantaneously fair annuities}

In this subsection, we offer a concrete example of an instantaneously fair annuity with continuous payments. We assume that credit transfers are determined by information up to time of previous death and that at the time of $k$-th death, i.e., $e_i^{(k)}$ is $\F_{T_{k-1}}\bigvee\sigma(T_k,(k))$-measurable. 

\begin{theorem}
    [\bf Instantaneously fair proper decentralized annuity]\label{RI}
A decentralized annuity with continuous payments is said to be instantaneously fair if and only if the payments are given by
\begin{equation}\label{PFR}
		r_i(u)=\sum_{k=1}^{n-1}s_i(T_{k-1})f_i^k(u)e^{\delta u}\mathbb{I}_{\{u\in[T_{k-1},T_k)\}},
	\end{equation}
	where $f_i^k$ is the probability density function of $F_i^k(u)\triangleq\mathbb{P}\left\{\tau_i\leq u\right|\F_{T_{k-1}}\}$. Moreover, the instantaneously fair decentralized annuity satisfies Axiom \ref{MC} if and only if
	\begin{equation}\label{PIE}
		\sum_{j\neq i} s_j(T_{k-1}) f_j^k(t)\geq s_i(T_{k-1}) f_i^k(t),~ \forall t\geq T_{k-1},~ i=1,\cdots,n,~ k=1,\cdots,n-1,
	\end{equation}  
 and credit transfers $\{e_i^{(k)}\}$ are given by
\begin{equation}\nonumber
        e_i^{(k)}=\alpha_i^{(k)}(T_k)s_{(k)}(T_k-),
    \end{equation}
    where the coefficients $\alpha_i^j$ in the cases $\{(k)=j\}$ are $\F_{T_{k-1}}$-measurable and determined by
    \begin{equation}\label{PIalpha}
        \begin{split}
            &\sum_{j\neq i}\alpha_i^j(t)s_j(T_{k-1})f_j^k(t)=s_i(T_{k-1})f_i^k(t),\\
            &\sum_{j\neq i}\alpha_i^j(t)=1,
        \end{split}
    \end{equation}
    %
\end{theorem}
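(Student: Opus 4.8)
The plan is to prove Theorem \ref{RI} in three logically separate pieces, matching the three claims in the statement: first the characterization \eqref{PFR} of instantaneously fair payments, then the equivalence of Axiom \ref{MC} with the inequality \eqref{PIE}, and finally the determination of the credit transfers via \eqref{PIalpha}.

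\textbf{Step 1: Characterizing instantaneous fairness.} I would start from the equivalent infinitesimal form \eqref{IFT} of instantaneous fairness, which is available since the payments are absolutely continuous. The key computation is the right-hand side limit $\lim_{\Delta\downarrow 0}\frac{1}{\Delta}\mathbb{E}\left[s_i(\tau_i-)e^{-\delta\tau_i}\mathbb{I}_{\{\tau_i\in(t,t+\Delta)\}}\right]$. On the event $\{u\in[T_{k-1},T_k)\}$, the cash value $s_i(u-)$ just before a potential death of agent $i$ is $\mathcal{F}_{T_{k-1}}$-measurable through the drawdown relation \eqref{rels}, so conditioning on $\mathcal{F}_{T_{k-1}}$ I can pull $s_i(T_{k-1})$ out and identify the limit with $s_i(T_{k-1})f_i^k(t)e^{-\delta t}$ times the discounted cash-value factor, where $f_i^k$ is the conditional density of $\tau_i$ given $\mathcal{F}_{T_{k-1}}$. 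Matching this against the left-hand side $\mathbb{E}[r_i(t)e^{-\delta t}]$ and using \eqref{DAr} to write $r_i(t)=r_i^k(t-T_k)s_i(T_k)$ on each interval yields, after cancelling the discount factor and the (arbitrary) conditioning, the pointwise identity $r_i(u)=s_i(T_{k-1})f_i^k(u)e^{\delta u}$ on $[T_{k-1},T_k)$, which is exactly \eqref{PFR}. The ``only if'' and ``if'' directions both follow because \eqref{IFT} must hold for every $t$, making the matching an equality of densities rather than merely of integrals.

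\textbf{Step 2: Axiom \ref{MC} as inequality \eqref{PIE}.} By the proper-annuity reformulation given just before the Definition of properness, Axiom \ref{MC} for all periods is equivalent to $e_i^0=0$ together with $s_i(T_k)\mathbb{I}_{\{\tau_i>T_k\}}\geq s_i(T_k-)\mathbb{I}_{\{\tau_i>T_k\}}$, i.e. every surviving agent's credit transfer $e_i^{(k)}$ is non-negative. The clearing condition forces $\sum_{i\neq(k)}e_i^{(k)}=s_{(k)}(T_k-)$, so the transfers redistribute the deceased's balance. Using the fairness form \eqref{PFR}, the ``instantaneous gain rate'' of agent $i$ at time $t$ is proportional to $s_i(T_{k-1})f_i^k(t)$, and fairness ties each agent's received credit to the others' aggregate gain. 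The condition that a non-negative allocation exists which simultaneously clears the deceased's balance and keeps each agent fair reduces precisely to requiring that no single agent's gain rate exceeds the sum of all the others', namely $\sum_{j\neq i}s_j(T_{k-1})f_j^k(t)\geq s_i(T_{k-1})f_i^k(t)$, which is \eqref{PIE}. I would argue both implications: feasibility of the non-negative transfer system gives the inequality, and conversely the inequality lets me construct the coefficients in \eqref{PIalpha}.

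\textbf{Step 3: Solving for the transfer coefficients.} Writing $e_i^{(k)}=\alpha_i^{(k)}(T_k)s_{(k)}(T_k-)$ and imposing periodic/instantaneous fairness on the post-death account balance produces the linear system \eqref{PIalpha}: the first equation says the fair expected credit to $i$ matches $i$'s own forfeiture risk weight $s_i(T_{k-1})f_i^k(t)$, while the second is just the clearing/normalization $\sum_{j\neq i}\alpha_i^j=1$. I would verify that \eqref{PIE} guarantees the resulting $\alpha_i^j$ are non-negative, closing the loop with Axiom \ref{MC}. The main obstacle I anticipate is Step 1: rigorously justifying the interchange of limit, conditional expectation, and the density extraction in the $\frac{1}{\Delta}$-limit, in particular showing that $s_i(\tau_i-)$ on the interval $[T_{k-1},T_k)$ can be treated as $\mathcal{F}_{T_{k-1}}$-measurable up to the death event and that the conditional survival density $f_i^k$ emerges cleanly; handling the measurability bookkeeping across the random death times $T_k$ and the order statistics, while keeping track of the indicator $\mathbb{I}_{\{\tau_i>T_{k-1}\}}$, is where the delicate work lies.
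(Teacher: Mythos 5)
Your Steps 2 and 3 follow essentially the same route as the paper's own proof: condition on the identity of the deceased and on the death time, use the assumed $\F_{T_{k-1}}\bigvee\sigma(T_k,(k))$-measurability to write $e_i^{(k)}=\alpha_i^{(k)}(T_k)s_{(k)}(T_k-)$, read off the linear system \eqref{PIalpha}, and reduce non-negativity of the transfers to \eqref{PIE}. Like you, the paper does not re-prove the feasibility equivalence (non-negative solvability of \eqref{PIalpha} iff \eqref{PIE}) but attributes it to \cite{Sabin2010}, so that part of your sketch is on par with the original.

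The genuine gap is in Step 1. Instantaneous fairness \eqref{IFT} ties the payment rate to the \emph{current}, drawn-down balance, not to $s_i(T_{k-1})$: carrying out the $\frac{1}{\Delta}$-limit gives, as in Eq. \eqref{account} of the paper's proof,
\begin{equation*}
s_i(t)\,\mathbb{I}_{\{T_{k-1}<\tau_i<T_k\}}=r_i(t)\,\frac{1-F_i^k(t)}{f_i^k(t)},
\end{equation*}
while the self-financing relation \eqref{rels} gives (suppressing discounting) $s_i(t)=s_i(T_{k-1})-\int_{T_{k-1}}^{t}r_i(u)\,\dif u$. Your claim that $\mathcal{F}_{T_{k-1}}$-measurability lets you ``pull $s_i(T_{k-1})$ out'' conflates $s_i(t-)$ with $s_i(T_{k-1})$: measurability only says that $s_i(t)$ is a deterministic function given $\mathcal{F}_{T_{k-1}}$, but that function depends on the unknown $r_i$ itself, so fairness alone does not produce \eqref{PFR}. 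One must couple the two displayed relations and solve the resulting ODE, which is exactly the paper's chain \eqref{payment}--\eqref{solstrong}: setting $y(t)=\int_{T_{k-1}}^t r_i(u)\,\dif u$, the system gives $y'(t)=\bigl(s_i(T_{k-1})-y(t)\bigr)f_i^k(t)/\bigl(1-F_i^k(t)\bigr)$, whose unique solution with $y(T_{k-1})=0$ yields $r_i(t)=s_i(T_{k-1})f_i^k(t)$ (with the $e^{\delta t}$ factor once discounting is restored). This ODE step is also what carries the ``only if'' direction: your appeal to ``equality of densities rather than of integrals'' justifies pointwise matching in $t$, but it does not eliminate the dependence on the unknown current balance, so uniqueness of the payout form does not follow from your argument as written. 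The measurability and limit-interchange issues you single out as the main obstacle are comparatively routine; the missing ingredient is the coupling of fairness with the drawdown equation and the solution of the resulting ODE.
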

\begin{proof}
  See Appendix \ref{ProofRI}.  
\end{proof}
Theorem \ref{RI} shows that if Eq. (\ref{PIE}) is satisfied, there exists non-negative $\alpha_i^j$ satisfying Eq. (\ref{PIalpha}), i.e., there exist non-negative credit transfers $e_i^{(k)}$ satisfying Axiom \ref{MC}. Note that, however, the solutions $\{\alpha^j_i, i,j=1,\cdots,n\}$ of Eq. (\ref{PIalpha}) are not unique for $3$ or more survivors, and we will discuss in Appendix \ref{ProofRI} how to determine the coefficients when they are not unique.

\vskip 5pt

\begin{example}[Case of $3$ peers with constant forces of mortality]
    In the DA scheme, there are $3$ peers with initial investments $s_1$, $s_2$, $s_3$ and constant forces of  mortality, $\lambda_1$, $\lambda_2$, $\lambda_3$, i.e., $f_i^k(t)=\lambda_i e^{-\lambda_i t}$, $i=1,2,3$. Based on Theorem \ref{RI}, the fairness and individual rationalities in Eqs. (\ref{PFR}) and (\ref{PIE}) are given by 
    \begin{align*}
        &r_i(u)=\lambda_i s_i e^{(\delta-\lambda_i) u},\ \ u\leq T_1,\  \ i=1,2,3,\\
        &2\lambda_i s_i e^{-\lambda_i t}\leq \sum_{j=1}^3 \lambda_j s_j e^{-\lambda_j t},\  \ t\geq 0,\    \ i=1,2,3,
    \end{align*}
and the coefficients $\alpha_i^j(t)$ are determined by
\begin{equation}\label{example:Ins3}
    \begin{split}
    &\sum_{j\neq i}\lambda_j s_j e^{-\lambda_j t}\alpha_i^j(t)=\lambda_i s_i e^{-\lambda_i t},\  \ t\geq0,\  \ i=1,2,3,\\
    &\sum_{i\neq j}\alpha_i^j(t)=1,\  \ t\geq0.\  \ i=1,2,3.
    \end{split}
\end{equation}

Because the solution of the system of linear equations (\ref{example:Ins3}) is not unique, we choose the one minimizing the quadratic sum of the expected credit transfers, given by 
\begin{equation*}
    \alpha_i^j(t)=\frac{\lambda_i s_i e^{-\lambda_i t}+\lambda_j s_j e^{-\lambda_j t}-\lambda_l s_l e^{-\lambda_l t}}{2\lambda_j s_j e^{-\lambda_j t}},\	\ \forall(i,j,l) \text{ being a permutation of } (1,2,3).
\end{equation*}
And the credit transfers are given by
\begin{equation*}
    e_i^{(1)}=\frac{\lambda_i}{2\lambda_{(1)}} s_i e^{-\lambda_i t}+\frac12 s_{(1)} e^{-\lambda_{(1)} t}-\frac{\lambda_l}{2\lambda_{(1)}} s_l e^{-\lambda_l t},\	\ \forall\left(i,(1),l\right) \text{ being a permutation of } (1,2,3).
\end{equation*}
\end{example}

A word of caution is warranted. When there are only two survivors, $i_1$ and $i_2$, the conditions in Eq. (\ref{PIE}) are equivalent to
\begin{equation*}
f_{i_1}^{n-1}(t)s_{i_1}(T_{n-2})=f_{i_2}^{n-1}(t)s_{i_2}(T_{n-2}),\ \ \forall t\geq T_{n-2}.
\end{equation*}

  It is impossible to satisfy this condition unless the survival distributions are uniform. Therefore, the only way to ensure instantaneous fairness with non-negative mortality credit is to dissolve the scheme when there are only two survivors. In the following section, we will consider the more relaxed periodic fairness, for which the scheme can continue until the last survivor.

\subsection{\bf Periodically Fair Annuities}\label{43}

In this subsection, we begin with the case of two survivors and then extend it to the general case with an arbitrary number of participants.

\subsubsection{Case of two peers -- resolving the unattainability of fairness}

A decentralized annuity with only two peers ends upon the first death. The survivor assumes the balances in both the survivor's own account and the deceased's account.  Then we derive the periodically fair proper DA with two peers.
\begin{theorem}\label{PF2P}
    A decentralized annuity with continuous payments
satisfies periodic fairness if and only if the payments are given by
\begin{equation}\nonumber
    \left\{\begin{split}
    &r_i(t)=r_i^1(t)s_i\mathbb{I}_{\{\tau_i>t\}},  &t\in[0,T_1],\  \ i=1,2,\\
    &R_i(t)\!=\!\int_0^t\! r_i^1(u)s_i\dif u\!+\!\left[s_1\!+\!s_2\!-\!\int_0^{T_1} \!\left(s_1 r_1^1(u)\!+\!s_2 r_2^1(u)\right)\dif u\right]\mathbb{I}_{\{t=T_1<\tau_i\}},  &t\in[0,T_1],\  \ i=1,2,
    \end{split}\right.
\end{equation}
where $r_1^1$ and $r_2^1$ satisfy
\begin{equation}\label{2theta}
	\begin{split}
		\mathbb{E}\left[R_1(T_1)\mathbb{I}_{\{\tau_1<\tau_2\}}\right]=s_1\mathbb{P}\{\tau_1<\tau_2\}-\delta,\\
		\mathbb{E}\left[R_2(T_1)\mathbb{I}_{\{\tau_1>\tau_2\}}\right]=s_2\mathbb{P}\{\tau_1>\tau_2\}-\delta
	\end{split}
\end{equation} for some $\delta.$
The credit transfers are given by
\begin{align*}
    e_1^2=s_2\left(1-\int_0^{T_1}r_2^1(t)\dif t\right)\mathbb{I}_{\{\tau_2<\tau_1\}},\\
    e_2^1=s_1\left(1-\int_0^{T_1}r_1^1(t)\dif t\right)\mathbb{I}_{\{\tau_1<\tau_2\}}.
\end{align*}
The decentralized annuity is proper if and only if $0\leq\delta\leq s_1\mathbb{P}\{\tau_1<\tau_2\}\wedge s_2\mathbb{P}\{\tau_1>\tau_2\}$. 
\end{theorem}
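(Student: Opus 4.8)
The plan is to separate the \emph{structural form} of $(r_i,R_i)$, which is forced by the continuous-payment prescription and the two-peer dissolution rule alone, from the genuinely analytic content, which is the constraint \eqref{2theta}. \textbf{Step 1 (form).} With $n=2$ there is a single inter-death period $[0,T_1]$ on which both peers are alive, so \eqref{DAr} gives $r_i(t)=r_i^1(t)\,s_i(0)=r_i^1(t)\,s_i$, using $s_i(0)=s_i$ for a proper DA. Since the scheme terminates at $T_1$ and, by the clearing condition, the lone survivor absorbs the whole of the decedent's balance, the jump of $R_i$ at $T_1$ must equal the entire remaining pool $s_1(T_1-)+s_2(T_1-)$; inserting the drawdown identity \eqref{rels} for each $s_i(T_1-)$ reproduces the bracketed lump term $s_1+s_2-\int_0^{T_1}\big(s_1r_1^1(u)+s_2r_2^1(u)\big)\dif u$. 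This establishes the displayed $r_i,R_i$ with no appeal to fairness, so the entire content of the ``if and only if'' is \eqref{2theta}.

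\textbf{Step 2 (periodic fairness $\Leftrightarrow$ \eqref{2theta}).} I would write the simplified condition \eqref{PFT} for $k=1$ (conditioning on the trivial $\mathcal F_{T_0}$). For the survivor $s_i(T_1)=s_i(T_1-)+s_{(1)}(T_1-)$ and for the decedent $s_i(T_1)=0$, so peer $1$'s identity reads $\mathbb{E}[s_1(T_1-)e^{-\delta T_1}]=\mathbb{E}[(s_1(T_1-)+s_2(T_1-))e^{-\delta T_1}\mathbb{I}_{\{\tau_1>\tau_2\}}]$. Cancelling the common term collapses this to the single equation $\mathbb{E}[s_1(T_1-)e^{-\delta T_1}\mathbb{I}_{\{\tau_1<\tau_2\}}]=\mathbb{E}[s_2(T_1-)e^{-\delta T_1}\mathbb{I}_{\{\tau_2<\tau_1\}}]$, and peer $2$'s identity is seen to reduce to the very same equation. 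Rewriting each balance on the event that peer $i$ dies first via $s_i(T_1-)e^{-\delta T_1}=s_i-\int_0^{T_1}e^{-\delta u}\dif R_i(u)$ turns the two sides into $s_1\mathbb{P}\{\tau_1<\tau_2\}-\mathbb{E}[R_1(T_1)\mathbb{I}_{\{\tau_1<\tau_2\}}]$ and $s_2\mathbb{P}\{\tau_1>\tau_2\}-\mathbb{E}[R_2(T_1)\mathbb{I}_{\{\tau_1>\tau_2\}}]$ (here $R_i(T_1)$ denotes the discounted accrued payment, the decedent carrying no lump). Their equality is exactly the statement that both equal a common value, which I would christen $\delta$; this is precisely \eqref{2theta}.

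\textbf{Steps 3--4 (transfers and properness).} The transfer formulas drop out of the clearing condition: when peer $2$ dies first the survivor receives $e_1^2=s_2(T_1-)\mathbb{I}_{\{\tau_2<\tau_1\}}=s_2\big(1-\int_0^{T_1}r_2^1(u)\dif u\big)\mathbb{I}_{\{\tau_2<\tau_1\}}$, and symmetrically for $e_2^1$. For properness (Axiom \ref{MC} in the only period $k=1$) the pivotal identity, furnished by Step 2, is that the discounted expectation of each transfer equals the common defect, $\mathbb{E}[e_1^2e^{-\delta T_1}]=\mathbb{E}[e_2^1e^{-\delta T_1}]=\delta$. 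Necessity is then immediate: a proper DA has $e_1^2,e_2^1\ge0$ surely, forcing $\delta\ge0$, while $s_i(T_1-)e^{-\delta T_1}\le s_i$ (payments are nonnegative) forces $\delta\le s_i\mathbb{P}\{\text{peer } i \text{ dies first}\}$, hence $\delta\le s_1\mathbb{P}\{\tau_1<\tau_2\}\wedge s_2\mathbb{P}\{\tau_1>\tau_2\}$. For sufficiency I would run a construction: as $r_i^1$ ranges over nonnegative non-overdrawing shapes, peer $i$'s defect sweeps the whole interval $[0,s_i\mathbb{P}\{\text{peer }i\text{ dies first}\}]$, so any admissible $\delta$ can be matched by choosing $r_1^1,r_2^1$ independently, keeping both balances and hence both transfers nonnegative.

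The main obstacle is Step 2: one must correctly read $s_i(T_1)$ as the \emph{post-transfer} pool value rather than the post-payout zero, track the discount factors faithfully, and recognize that the two peers' fairness conditions are not independent but collapse to one equation whose only content is that the two ``defects'' coincide---this common value being the free parameter $\delta$, which is distinct from the discount rate. A secondary care point is that the displayed $R_i$ and transfer formulas suppress explicit discount factors, so in the sufficiency construction one should verify the genuine pointwise non-overdraw condition $s_i(t)\ge0$ for $t\le T_1$, not merely the in-expectation bound that delivers the range of $\delta$.
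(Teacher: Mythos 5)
Your proof is correct and follows essentially the same route as the paper's: the paper proves the general $n$-agent Theorem \ref{RP} (periodic fairness $\Leftrightarrow$ expected forfeited balance equals expected credit transfer, then existence of non-negative transfer coefficients) and simply cites it ``in the $2$-agent case,'' which when specialized is exactly your reduction to the single ``equal defects'' equation defining $\delta$, the clearing-forced transfers, and the $\delta$-range for properness. If anything, your explicit tracking of the suppressed discount factors, the observation that the two peers' fairness conditions collapse to one equation, and your insistence on the pointwise (rather than in-expectation) non-overdraw condition in the sufficiency step are more careful than the paper's own treatment.
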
 
\begin{proof}
    See Appendix \ref{thm4.2} in the $2$-agent case. 
\end{proof}
Based on Theorem \ref{PF2P}, the term $\delta$ represents the expected survival benefits and deceased loss of both the two peers, i.e.,
\begin{equation*}
    \delta=\mathbb{E}\left[\left(R_i(T_1)-s_i\right)\mathbb{I}_{\{\tau_i>T_1\}}\right]=\mathbb{E}\left[\left(s_i-R_i(T_1)\right)\mathbb{I}_{\{\tau_i=T_1\}}\right],\    \ i=1,2,
\end{equation*}
satisfying $\delta\in \left[0,s_1\mathbb{P}\{\tau_1<\tau_2\}\wedge s_2\mathbb{P}\{\tau_1>\tau_2\}\right]$ to ensure positive net benefits for the longevity peer, which also implies the periodic fairness condition
\begin{equation*}
	\left\{
	\begin{split}
		\mathbb{E}[R_1(\tau_1)]=s_1,\\
		\mathbb{E}[R_2(\tau_2)]=s_2.
	\end{split}
	\right.
\end{equation*}
There are two special cases to be considered:
\begin{itemize}
    \item {\bf No risk-sharing:} If $\delta=0$, we have $R_i(0+)=s_i$, $\forall t\geq 0$, $i=1,2$, 
    which means that each participant's initial deposit is immediately refunded. Hence there is no pooling of resources. 
    \item {\bf Maximum risk-sharing:} If $\delta=s_1\mathbb{P}\{\tau_1<\tau_2\}\wedge s_2\mathbb{P}\{\tau_1>\tau_2\}$, we have either $r_1^1(t)=0$, $\forall t\geq 0$ or $r_2^1(t)=0$. $\forall t\geq 0$, 
    It means at least one of the two participants gives up all consumptions. A payment only occurs upon the first death. This results in the maximum amount of risk sharing.
\end{itemize}

\subsubsection{General case with $n$ agents}
In the $n$-agent case, denote by $R_{i}^k=\int_{T_{k-1}}^{T_k}e^{-\delta u}\dif R_i(u)$ the payments of the peer $i$ in $(T_{k-1},T_k]$, and we obtain the representation for periodically fair decentralized annuities in the following theorem.
\begin{theorem}[\bf Periodically fair proper decentralized annuity]\label{RP}
	There exists a decentralized annuity with payments $R_i$ satisfying periodic fairness and Axiom \ref{MC} if and only if the payments satisfy
	\begin{equation}\label{PFE}
		\sum_{j\neq i}p_j^k\left[s_j(T_{k-1})-ER_j^k\right]\geq p_i^k\left[s_i(T_{k-1})-ER_i^k\right],\ \ i=1,\cdots,n,
	\end{equation}  
	where $p_i^k=\mathbb{P}\{T_{k-1}<\tau_i\leq T_k|\mathcal{F}_{T_{k-1}}\}$ and $ER_i^k=\mathbb{E}\left[\left.R_i^k\right|\mathcal{F}_{T_{k-1}},T_k=\tau_i\right]$. The credit transfers $e_i^{(k)}=\alpha_i^{(k)} s_{(k)}(T_k-)$ are determined by
	\begin{equation}\label{PCT}
		\left\{
		\begin{split}
			&\sum_{j\neq i}p_j^k\left[s_j(T_{k-1})-ER_j^k\right]\alpha_{i}^j=p_i^k\left[s_i(T_{k-1})-ER_i^k\right],\	\ i=1\cdots,n,\\
			&\sum_{i\neq j}\alpha_{i}^j=1,\	\ j=1, \cdots,n.
		\end{split}
		\right.
	\end{equation}
\end{theorem}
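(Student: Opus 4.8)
The plan is to prove both directions at once, by turning the (simplified) periodic fairness condition into an explicit linear constraint on the mortality credits and then recognizing the resulting feasibility question as a transportation problem with a forbidden diagonal. Throughout, the argument is carried out period by period, conditionally on $\mathcal{F}_{T_{k-1}}$, so that the construction at the $k$-th death only involves already-realized quantities.

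First I would translate periodic fairness into a statement about the transfers $e_i^{(k)}$. The drawdown relation \eqref{rels} gives $s_i(T_k-)e^{-\delta T_k}=s_i(T_{k-1})e^{-\delta T_{k-1}}-R_i^k$, and \eqref{rels2} gives $s_i(T_k)\mathbb{I}_{\{\tau_i>T_k\}}=(s_i(T_k-)+e_i^{(k)})\mathbb{I}_{\{\tau_i>T_k\}}$. Substituting both into \eqref{PFT}, taking conditional expectations, and using $1-\mathbb{P}\{\tau_i>T_k\mid\mathcal{F}_{T_{k-1}}\}=p_i^k$ together with $\mathbb{E}[R_i^k\mathbb{I}_{\{\tau_i\le T_k\}}\mid\mathcal{F}_{T_{k-1}}]=p_i^k\,ER_i^k$ (here $\{T_{k-1}<\tau_i\le T_k\}=\{\tau_i=T_k\}$), all the common terms cancel and the identity collapses to
\begin{equation}\nonumber
\mathbb{E}\left[e_i^{(k)}e^{-\delta T_k}\mathbb{I}_{\{\tau_i>T_k\}}\mid\mathcal{F}_{T_{k-1}}\right]=p_i^k\left[s_i(T_{k-1})e^{-\delta T_{k-1}}-ER_i^k\right],\ \ i=1,\cdots,n.
\end{equation}
I would then insert the parametrization $e_i^{(k)}=\alpha_i^{(k)}s_{(k)}(T_k-)$, condition on the identity of the deceased via $\{(k)=j\}=\{\tau_j=T_k\}$, and use the $\mathcal{F}_{T_{k-1}}$-measurability of each fraction $\alpha_i^j$ to pull it out of the expectation. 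The same computation applied to the deceased $j$ evaluates $\mathbb{E}[s_j(T_k-)e^{-\delta T_k}\mathbb{I}_{\{\tau_j=T_k\}}\mid\mathcal{F}_{T_{k-1}}]=p_j^k[s_j(T_{k-1})e^{-\delta T_{k-1}}-ER_j^k]$. Writing $v_i^k:=p_i^k[s_i(T_{k-1})e^{-\delta T_{k-1}}-ER_i^k]$ (the discount factor $e^{-\delta T_{k-1}}$ absorbed to match the notation of \eqref{PFE}), periodic fairness is then exactly the first line of \eqref{PCT}, namely $\sum_{j\neq i}\alpha_i^j v_j^k=v_i^k$, while the clearing condition $\sum_{i\neq(k)}e_i^{(k)}=s_{(k)}(T_k-)$ is exactly the second line $\sum_{i\neq j}\alpha_i^j=1$, and Axiom \ref{MC} is precisely the requirement $\alpha_i^j\ge 0$.

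The heart of the proof is then the solvability of \eqref{PCT} under $\alpha_i^j\ge 0$, which I would recast as a transportation feasibility problem. Setting $x_{ij}:=\alpha_i^j v_j^k\ge 0$ with $x_{ii}:=0$, the two lines of \eqref{PCT} say precisely that $(x_{ij})$ is a nonnegative matrix with zero diagonal whose $i$-th row sum and $i$-th column sum both equal $v_i^k$ (note $v_i^k\ge 0$, since payouts never exceed the current balance, so $ER_i^k\le s_i(T_{k-1})e^{-\delta T_{k-1}}$). By the max-flow--min-cut feasibility criterion for transportation arrays with forbidden diagonal cells, such a matrix exists if and only if no margin exceeds the sum of the others, i.e. $v_i^k\le\sum_{j\neq i}v_j^k$ for every $i$ --- which is exactly \eqref{PFE}. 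Necessity is immediate from $x_{ij}\le\sum_{i'\neq j}x_{i'j}=v_j^k$, whence $v_i^k=\sum_{j\neq i}x_{ij}\le\sum_{j\neq i}v_j^k$; for sufficiency I would run the standard source--sink network (source to each row with capacity $v_i^k$, row $i$ to column $j\neq i$ uncapacitated, column $j$ to sink with capacity $v_j^k$), show its minimum cut equals $2\sum_\ell v_\ell^k-2\max_\ell v_\ell^k$, and verify that \eqref{PFE} forces this to be at least $\sum_\ell v_\ell^k$, so the full flow is achievable. Recovering $\alpha_i^j=x_{ij}/v_j^k$ completes the construction.

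The main obstacle is the sufficiency half of this last step: confirming that the inequality \eqref{PFE} genuinely yields a \emph{nonnegative} transfer scheme and doing so in a way that returns usable coefficients. Because the solution is non-unique for three or more survivors (as already noted after Theorem \ref{RI}), I must describe the whole feasible polytope rather than a single $(\alpha_i^j)$, and must check that whatever selection rule is used preserves $\mathcal{F}_{T_{k-1}}$-measurability. Secondary care is needed with the degenerate case $v_j^k=0$, where column $j$ is vacuous, the ratio $\alpha_i^j=x_{ij}/v_j^k$ is undefined, and the normalization $\sum_{i\neq j}\alpha_i^j=1$ may be satisfied by any admissible choice, as well as with the consistent bookkeeping of the discount factors folded into $v_i^k$.
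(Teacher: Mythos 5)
Your proposal is correct, and its first half---translating the simplified periodic fairness condition \eqref{PFT} into the linear system \eqref{PCT} by conditioning on the identity of the deceased, pulling the $\mathcal{F}_{T_{k-1}}$-measurable coefficients $\alpha_i^j$ out of the conditional expectations, and identifying Axiom \ref{MC} with $\alpha_i^j\ge 0$ and the clearing condition with $\sum_{i\neq j}\alpha_i^j=1$---is essentially the paper's own argument. Where you genuinely diverge is the feasibility step: the paper does not prove that \eqref{PFE} is equivalent to the existence of nonnegative solutions of \eqref{PCT}; in the parallel proof of Theorem \ref{RI} it cites \cite{Sabin2010} for this fact, and in the proof of Theorem \ref{RP} it simply asserts the sufficiency direction, leaving necessity implicit. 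You instead prove both directions from scratch by recasting \eqref{PCT} as a transportation problem with forbidden diagonal (setting $x_{ij}=\alpha_i^j v_j^k$ so that row and column margins both equal $v_i^k$), getting necessity from the trivial bound $x_{ij}\le v_j^k$ and sufficiency from max-flow--min-cut. This buys a self-contained proof, an explicit characterization of the whole feasible polytope (relevant since the paper itself stresses non-uniqueness for three or more survivors), and an honest treatment of the degenerate margins $v_j^k=0$ and of the discounting bookkeeping that the paper's appendix glosses over. One small imprecision: the minimum cut of your network is $\min\left(\sum_\ell v_\ell^k,\ 2\sum_\ell v_\ell^k-2\max_\ell v_\ell^k\right)$ rather than always $2\sum_\ell v_\ell^k-2\max_\ell v_\ell^k$ (the cut isolating the source has capacity $\sum_\ell v_\ell^k$); this does not affect your conclusion, since under \eqref{PFE} both candidate cuts are at least $\sum_\ell v_\ell^k$, so the full flow, and hence a nonnegative solution, exists exactly when \eqref{PFE} holds.
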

\begin{proof}
   See Appendix \ref{thm4.2}.
\end{proof}
Based on Theorem \ref{RP}, in the periodically fair proper decentralized annuity with continuous payments, the payments in $(T_{k-1},T_k]$ are $R_i^k=s_i(T_{k-1})\int_0^{T_k-T_{k-1}}e^{-\delta u}r_i^k(u)\dif u$, and the expected cash values just before $T_k$ are give by
\begin{align*}
    p_j^k\left[s_j(T_{k-1})-ER_j^k\right]&=s_i(T_{k-1})\int_0^\infty\int_t^\infty e^{-\delta u}r_i^k(u)\dif u \lambda_j(t)\mathbb{P}\{T_k-T_{k-1}> t\}\dif t\\
    &=s_i(T_{k-1})\int_0^\infty\int_0^u \lambda_j(t)\mathbb{P}\{T_k-T_{k-1}> t\}\dif t e^{-\delta u}r_i^k(u)\dif u,
\end{align*}
as such, there always exist $\{\theta_i^k,\ \ i=1,\cdots, n\}$ such that $r_i^k(u)=\theta_i^k e^{-\theta_i^k u}$, $i=1,\cdots,n$, satisfy Eq. (\ref{PFE}), which constitute a periodically fair decentralized annuity. For the credit transfers $e_i^{(k)}$ and the coefficients $\alpha_i^j$, the solution of Eq. (\ref{PCT}) is also not unique, and we give an example of constant mortality forces to show a method to derive $e_i^{(k)}$ and $\alpha_i^j$.
\vskip 5pt
\begin{example}\label{PFnpeer}
 Assume that, in the decentralized annuity, there are $n$ peers with constant mortality forces $\lambda_1$, $\cdots$, $\lambda_n$ and initial investments $s_1$, $\cdots$, $s_n$. 
  Then we choose the payout functions as $r_i^k(u)=\theta_i^k e^{-\theta_i^k u}$, $i=1,\cdots,n$, and the expected cash values just before $T_k$ are give by
    \begin{equation*}
        p_j^k\left[s_j(T_{k-1})-ER_j^k\right]=\frac{\lambda_j}{\theta_j^k+\sum\limits_{i=1}^n \lambda_i}s_j(T_{k-1}).
    \end{equation*}
    Based on Theorem \ref{RP}, the decentralized annuity is periodically fair and satisfies Axiom \ref{MC} if and only if 
    \begin{align*}
        2\frac{\lambda_i}{\theta_i^k+\sum\limits_{l=1}^n \lambda_l}s_i(T_{k-1})\leq \sum_{j=1}^n \frac{\lambda_j}{\theta_j^k+\sum\limits_{i=l}^n \lambda_l}s_j(T_{k-1}),\  \ t\geq 0,\    \ i=1,\cdots,n, 
    \end{align*}
and the coefficients $\alpha_i^j$ are determined by
\begin{equation}\label{example:nconstlam}
    \begin{split}
    &\sum_{j\neq i}\frac{\lambda_j}{\theta_j^k+\sum\limits_{l=l}^n \lambda_l}s_j(T_{k-1})\alpha_i^j=\frac{\lambda_i}{\theta_i^k+\sum\limits_{l=l}^n \lambda_l}s_i(T_{k-1}),\  \ t\geq0,\  \ i=1,2,3,\\
    &\sum_{i\neq j}\alpha_i^j=1,\  \ t\geq0,\  \ i=1,2,3,
    \end{split}
\end{equation}
which is equivalent to
\begin{equation}\nonumber
	\begin{pmatrix}
		0 & \alpha_{12} & \cdots & \alpha_{1n} \\
		\alpha_{21} & 0 & \cdots & \alpha_{2n} \\
		\vdots & \vdots & \ddots & \vdots \\
        \alpha_{n1} & \alpha_{n2} &  \cdots & 0 \\
	\end{pmatrix}
	\begin{pmatrix}
		\frac{\lambda_1}{\theta_1^k+\sum\limits_{l=l}^n \lambda_l}s_1(T_{k-1}) \\ 
        \frac{\lambda_2}{\theta_2^k+\sum\limits_{l=l}^n \lambda_l}s_2(T_{k-1})\\ 
        \vdots \\ 
        \frac{\lambda_n}{\theta_n^k+\sum\limits_{l=l}^n \lambda_l}s_n(T_{k-1}) 
	\end{pmatrix}
	=\begin{pmatrix}
		\frac{\lambda_1}{\theta_1^k+\sum\limits_{l=l}^n \lambda_l}s_1(T_{k-1}) \\ 
        \frac{\lambda_2}{\theta_2^k+\sum\limits_{l=l}^n \lambda_l}s_2(T_{k-1})\\ 
        \vdots \\ 
        \frac{\lambda_n}{\theta_n^k+\sum\limits_{l=l}^n \lambda_l}s_n(T_{k-1}) 
	\end{pmatrix},
\end{equation}
and
\begin{equation}\nonumber
	\begin{pmatrix}
		1 & 1 & \cdots & 1
	\end{pmatrix}
	\begin{pmatrix}
		0 & \alpha_{12} & \cdots & \alpha_{1n} \\
		\alpha_{21} & 0 & \cdots & \alpha_{2n} \\
		\vdots & \vdots & \ddots & \vdots \\
        \alpha_{n1} & \alpha_{n2} &  \cdots & 0 \\
	\end{pmatrix}
	=\begin{pmatrix}
		1 & 1 & \cdots & 1
	\end{pmatrix},
\end{equation} 
provided that 
\begin{equation}\label{example:large}
    (n-2)(LS_i^k+LS_j^k)\geq \sum_{l\neq i,j}LS_l^k,\   \ i,j=1,\cdots,n,\   \ j\neq i,
\end{equation}
where $LS_i^k=\frac{\lambda_i}{\theta_i^k+\sum\limits_{l=l}^n \lambda_l}s_i(T_{k-1})$ and the coefficients are chosen as
\begin{equation*}
    \alpha_i^j=\frac{LS_i^k+LS_j^k}{(n-1)LS_j^k}-\frac{1}{(n-1)(n-2)}\sum_{l\neq i,j}LS_l^k,
\end{equation*}
and  the credit transfers $e_i^{(k)}$ are non-negative and satisfy Eq. (\ref{PCT}), i.e., the periodically fair decentralized annuity satisfies Axiom \ref{MC}.

In particular, if we choose $\theta_i^k=\theta^k$, $i=1,\cdots,n$, then the coefficients are 
\begin{equation}\label{alphak}
    \alpha_i^{(k)}=\frac{1}{n-k-2}\left[1+\frac{\lambda_i s_i(T_{k-1})}{\lambda_{(k)}s_{(k)}(T_{k-1})}-\frac{1}{n-k-1}\sum_{l=1}^n\frac{\lambda_l s_l(T_{k-1})\mathbb{I}_{\{\tau_l>T_{k-1}\}}}{\lambda_{(k)}s_{(k)}(T_{k-1})}\right],
\end{equation}
and the credit transfers are 
\begin{equation*}
    e_i^{(k)}\!=\!\frac{1}{n\!-\!k\!-\!2}s_{(k)}(T_{k-1})\!+\!\frac{1}{n\!-\!k\!-\!2}\frac{\lambda_i}{\lambda_{(k)}}s_i(T_{k-1})\!-\!\frac{1}{(n\!-\!k\!-\!1)(n\!-\!k\!-\!2)}\sum_{l=1}^n\frac{\lambda_l }{\lambda_{(k)}}s_l(T_{k-1})\mathbb{I}_{\{\tau_l>T_{k-1}\}}.
\end{equation*}
As such, the cash values after the credit transfers are
\begin{equation*}
    s_i(T_k)=s_i(T_{k-1})e^{-\theta_k(T_k-T_{k-1})}\mathbb{I}_{\{\tau_i>T_k\}}+e_i^{(k)}.
\end{equation*}

\end{example}

\subsection{Welfare improvement}

In the subsection \ref{43}, we demonstrated that periodically fair decentralized annuities provide significant flexibility for creating desirable designs. This section delves into an examination of an optimal decentralized annuity designed to maximize a participant's expected lifetime utility. The annuity scheme in question entails providing peer $i$ with a continuous stream of payments, denoted as $\{r_i(t), 0\le t \le \tau_{i}\}$. It is important to note, however, that in the event that peer $i$ outlasts all other participants, the remaining balance is disbursed to the last survivor as a lump sum. Given the small probability of being the last survivor and the emphasis on living benefits, we opt to exclude consideration of the last survivor's lump sum payment in this particular strategy.  In other words, every peer's strategy is determined by
\begin{equation}\label{optDF}
	\begin{split}
		\max_{r_i(t), s_i(t)} \mathbb{E}&\left[
		\int_0^{T_{n-1}}e^{-\delta u}\frac{r_i(u)^{1-\gamma}}{1-\gamma}\dif u\right],\\
		\text{s.t.}\	\ &r_i(t), s_i(t) \text{ satisfy Eqs. (\ref{PFE}) and (\ref{PCT})},
	\end{split}	
\end{equation}
where $\gamma$ is the risk aversion coefficient and $\delta$ is the force of interest.  
Recall $r_i^k(u)\triangleq \frac{r_i(T_{k-1}+u)}{s_i(T_{k-1})}$ with the restriction
\begin{equation*}
	\int_0^\infty e^{-\delta u}r_i^k(u)\dif u\leq 1.
\end{equation*}
\begin{theorem}\label{optDF}
    The optimal payout function $r_i^k$ is given by
\begin{equation} \label{eq:optpay}
	r_i^k(u)=\nu_i^k\left({}_u p_{T_{k-1}}^k\right)^{\frac{1}{\gamma}},
\end{equation}
where ${}_t p_x^k=\mathbb{P}\{T_k>u|\F_{T_{k-1}}\}$, and 
the constant $\nu_i^k$ is given by
\begin{equation}\nonumber
	\nu_i^k=\frac{1}{\int_0^\infty e^{-\delta u}\left({}_u p_{T_{k-1}}^k\right)^{\frac{1}{\gamma}}\dif u}.
\end{equation}
Then the optimal payments $r_i$ are given by
\begin{equation}\label{ropt}
	r_i(t)\mathbb{I}_{\{t\in[T_{k-1},T_k)\}}=r_i^k(t-T_{k-1})s_i(T_{k-1})\mathbb{I}_{\{t\in[T_{k-1},T_k)\}},\  \ k=1,\cdots,n-1.
\end{equation}
\end{theorem}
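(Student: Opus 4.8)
The plan is to exploit the homogeneity of the CRRA objective to reduce the global maximization to a sequence of decoupled single-period variational problems, one for each inter-death interval, and then to solve each by a pointwise Lagrange argument.

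First I would decompose the lifetime integral over the random partition $0=T_0<T_1<\cdots<T_{n-1}$ and substitute the scaling relation $r_i(u)=r_i^k(u-T_{k-1})\,s_i(T_{k-1})$ on $[T_{k-1},T_k)$ from \eqref{ropt}. After the change of variable $v=u-T_{k-1}$, the period-$k$ contribution factors as
\[
e^{-\delta T_{k-1}}\,s_i(T_{k-1})^{1-\gamma}\int_0^{T_k-T_{k-1}} e^{-\delta v}\,\frac{r_i^k(v)^{1-\gamma}}{1-\gamma}\,\dif v,
\]
in which $e^{-\delta T_{k-1}}s_i(T_{k-1})^{1-\gamma}$ is a strictly positive $\mathcal{F}_{T_{k-1}}$-measurable factor. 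The objective is thus a sum of terms, each being this positive factor times a functional of the normalized shape $r_i^k$ alone, while the admissible set of $r_i^k$ is the state-independent budget set $\{r_i^k\ge 0:\ \int_0^\infty e^{-\delta v} r_i^k(v)\,\dif v\le 1\}$. Consequently the problem separates, and it suffices to maximize, for each $k$ and conditionally on $\mathcal{F}_{T_{k-1}}$, the functional $\mathbb{E}\big[\int_0^{T_k-T_{k-1}} e^{-\delta v}\,r_i^k(v)^{1-\gamma}/(1-\gamma)\,\dif v\mid\mathcal{F}_{T_{k-1}}\big]$ over that budget set.

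Next I would remove the random upper limit by Tonelli's theorem, writing the conditional expectation as
\[
\int_0^\infty e^{-\delta v}\,\frac{r_i^k(v)^{1-\gamma}}{1-\gamma}\,\mathbb{P}\{T_k-T_{k-1}>v\mid\mathcal{F}_{T_{k-1}}\}\,\dif v=\int_0^\infty e^{-\delta v}\,\frac{r_i^k(v)^{1-\gamma}}{1-\gamma}\,{}_v p_{T_{k-1}}^k\,\dif v,
\]
where ${}_v p_{T_{k-1}}^k$ is the $\mathcal{F}_{T_{k-1}}$-conditional survival function of the next inter-death time. Since $x\mapsto x^{1-\gamma}/(1-\gamma)$ is strictly concave and pointwise increasing for every $\gamma>0$, this integral functional is concave in $r_i^k$, the budget constraint binds, and a pointwise Lagrange condition characterizes the maximizer. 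Introducing a multiplier $\nu$ for $\int_0^\infty e^{-\delta v} r_i^k(v)\,\dif v=1$ and differentiating the integrand in $r_i^k(v)$ gives $e^{-\delta v} r_i^k(v)^{-\gamma}\,{}_v p_{T_{k-1}}^k=\nu\,e^{-\delta v}$, hence $r_i^k(v)=\nu^{-1/\gamma}\big({}_v p_{T_{k-1}}^k\big)^{1/\gamma}$. Setting $\nu_i^k=\nu^{-1/\gamma}$ yields \eqref{eq:optpay}, and the binding constraint $\nu_i^k\int_0^\infty e^{-\delta v}\big({}_v p_{T_{k-1}}^k\big)^{1/\gamma}\,\dif v=1$ fixes $\nu_i^k$ as the reciprocal of that integral; undoing the scaling recovers \eqref{ropt}.

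The step I expect to be the crux is the separation argument: $r_i^k$ enters not only the period-$k$ utility but also the end-of-period balance $s_i(T_k-)$ and hence, through the credit transfers of Theorem \ref{RP}, every later multiplier $s_i(T_{k-1})^{1-\gamma}$. What legitimizes the decoupling is precisely the scale invariance of CRRA preferences together with the fact that the feasibility set of the normalized shape $r_i^k$ is identical in every state and every period; I would argue carefully that, because each period's contribution is a nonnegative multiplier times a functional of $r_i^k$ subject to a state-free constraint, the shape maximizing each functional simultaneously maximizes the sum regardless of how the multipliers are realized. The remaining work is the bookkeeping of checking that the optimizer $r_i^k$ stays consistent with the periodic-fairness and credit-transfer conditions \eqref{PFE}--\eqref{PCT}.
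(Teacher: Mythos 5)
Your single-period analysis is precisely the Yaari-type argument that the paper itself points to (its own proof is a one-line citation of \citet{Yaari1965} and \citet{MS2015}), and that part is carried out correctly: Tonelli replaces the random upper limit by the survival weight ${}_v p^k_{T_{k-1}}$, concavity and a pointwise first-order condition give $r_i^k(v)\propto\bigl({}_v p^k_{T_{k-1}}\bigr)^{1/\gamma}$, and normalizing by the budget gives $\nu_i^k$; this reproduces \eqref{eq:optpay} and agrees with the constant-force case \eqref{clam}.

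The genuine gap is at the step you yourself flag as the crux, and the justification you offer there is not valid. The multipliers $e^{-\delta T_{k-1}}s_i(T_{k-1})^{1-\gamma}$ are not exogenous quantities that are merely ``realized'': for every later period they are decreasing functions of the period-$k$ spending, since
\begin{equation*}
 s_i(T_k-)=s_i(T_{k-1})\,e^{\delta(T_k-T_{k-1})}\Bigl(1-\int_0^{T_k-T_{k-1}}e^{-\delta v}r_i^k(v)\,\dif v\Bigr),\qquad s_i(T_k)=s_i(T_k-)+e_i^{(k)}.
\end{equation*}
Hence maximizing each period's functional separately need not maximize the sum; in particular, your assertion that the budget constraint binds is exactly where the continuation value of unspent balance enters, and the argument never weighs it. The failure is substantive, not merely formal. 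Take $n=3$ homogeneous peers with $\lambda_i=1$, $s_i=1$, $\delta=0$, $\gamma=1/2$, and equal-split transfers, which satisfy \eqref{PFE}--\eqref{PCT} for any common payout shape, so all strategies below lie in the feasible set of the stated optimization. With the claimed shape \eqref{eq:optpay} used in both periods, the lifetime objective equals $\sqrt{6}/2\approx 1.225$; replacing only the period-one shape by $0.9\,r_i^1$ (spending $90\%$ of the budget, still feasible, proper and periodically fair) while keeping the claimed shape in period two raises the objective to approximately $1.261$. So the formula does not solve the coupled lifetime problem as literally posed; it solves the period-by-period problem in which each period's starting balance is taken as given and the balance remaining at the next death is assigned no value. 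That myopic reading is the one under which the cited arguments of \citet{Yaari1965} and \citet{MS2015} (a single static budget constraint, with full forfeiture at death) actually apply, and it is the only reading under which the statement holds; the stronger global claim that your decoupling step tries to establish is false, so that step cannot be repaired.
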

\begin{proof}
    The proof is similar to \citet{Yaari1965} and \citet{MS2015}.
\end{proof}

The seminal work of \citet{Yaari1965} demonstrates that the optimal approach to maximizing lifetime utility involves the utilization of a traditional annuity, characterized by regular and constant payments. In contrast to this classical finding, our imposition of period fairness constraints yields a distinct outcome, as elucidated in Theorem \ref{optDF}. This result reveals that the payouts of an optimal decentralized annuity diminish with age. This observation aligns with the adage ``enjoy it while you can," suggesting that greater consumption is advisable during one's youth. As individuals advance in age, the likelihood of potentially forfeiting remaining funds due to unexpected mortality becomes a pertinent consideration.

Note that within the framework of optimizing lifetime utility, the coefficient $\gamma$ serves as a representation of the peer's inclination to participate in risk sharing. This concept can be elucidated through the examination of two extreme cases.
\begin{itemize}
    \item {\bf Risk-neutral case -- no risk-sharing.}
    
    Observe that, when $\gamma=0$, the optimization objective becomes
\begin{equation}
	\label{eq:gamma0}\mathbb{E}\left[
	\int_0^{T_{n-1}}e^{-\delta u}r_i(u)\dif u\right],
\end{equation}
which represents the accumulated payment of the decentralized annuity. However, as defined in Assumption \ref{AssPF}, the payments $\{r_i, i=1,\cdots, n\}$,  satisfy the condition
\begin{equation}\nonumber
\mathbb{E}\left[
	\int_0^{T_{n-1}}e^{-\delta u}r_i(u)\dif u+e^{-\delta T_{n-1}}\Delta R_i(T_{n-1})\right]=s_i.
\end{equation}
It is clear that the maximum of \eqref{eq:gamma0} is attained at $s_i$ if $r_i(t)\rightarrow \mathcal{D}(t)$, the Dirac function at $0$. Then all of the peer's investments are paid back to this peer at the moment just after the initial time, i.e. $R_i(0+)=s_i$. In summary, for $\gamma=0$, the risk-neutral peers are unwilling to contribute anything to risk sharing such that they only consume all of their own wealth since the initial time.  

    \item {\bf Extremely risk-averse case -- living off the interest.}
    
    For the extremely risk-averse peers, they prefer to share the longevity risks with others and reduce the consumption from their own accounts as much as possible. 
    If $\gamma\rightarrow+\infty$, Problem (\ref{optDF}) is equivalent to maximize the essential infimum of the payments $r_i$. Because $r_i$ satisfies Eq. (\ref{DAr}) with the restriction $\int\limits_0\limits^\infty r_i^k (t)\leq 1$, $k=1,\cdots, n-1$, the optimal payout functions are $r_i^k(t)=\delta$, $\forall t\geq 0$, $k=1,\cdots,n-1$, where the peers only get the risk-free interests of the investments while all of the capitals are used for risk sharing.

    \item {\bf Constant force of mortality case.}
    
    In a general case with $\gamma>0$, one has to make a trade-off between the extent of the risk sharing and the lifetime consumption.

For example, if the forces of mortality are constant, denoted by $\lambda_i$ for peer $i$, the optimal period payout function is given by
\begin{equation}\label{clam}
	r_i^k(u)=\left(\delta+\frac{\Lambda^k}{\gamma}\right)e^{-\frac{\Lambda_k}{\gamma}u},
\end{equation}
where $\Lambda^k=\sum\limits_{j=1}\limits^n \lambda_j\mathbb{I}_{\{\tau_j>T_{k-1}\}}$, and the survival probability is
\begin{equation}
	p_i^k=\mathbb{P}\{T_{k-1}<\tau_i\leq T_k|\F_{T_{k-1}}\}=\frac{\lambda_i}{\sum\limits_{j=1}^n \lambda_j\mathbb{I}_{\{\tau_j>T_{k-1}\}}},
\end{equation}
and the expected period payments are
\begin{equation}\label{qij}
	\begin{split}
	ER_i^k&=\mathbb{E}\left[\left.\int_{T_{k-1}}^{T_k}e^{-\delta u}r_i(u)\dif u\right|\F_{T_{k-1}}\right]
	=\left(1-\frac{\Lambda^k}{\delta+(1+\frac{1}{\gamma})\Lambda^k}\right)s_i(T_{k-1})\\
	&\triangleq (1-q_i^k)s_i(T_{k-1}),
	\end{split}
\end{equation}
which is decreasing with $\gamma$,  and we have $ER_i^k\rightarrow s_i(T_{k-1})$ when $\gamma\rightarrow 0$;  $ER_i^k\rightarrow\left(1-\frac{\lambda_i}{\delta+\Lambda^k}\right)s_i(T_{k-1})$ when $\gamma\rightarrow +\infty$. Further assume $\delta=0$, then $ER_i^k\rightarrow 0$ when $\gamma\rightarrow +\infty$. Therefore, the more risk-averse are the peers, the less they consume from their own accounts and the expected annuitized payments are. 

    \item {\bf Properness.}
    
    It is critical that we verify the properness condition in Eq. (\ref{PFE}). Based on Theorem \ref{RP}, there exists a decentralized annuity with payments given by Eq. (\ref{clam}) satisfying periodic fairness (Eq. (\ref{PFE})) if and only if the initial investments satisfy
\begin{equation}\label{opcond}
	\lambda_i s_i \mathbb{I}_{\{\tau_i>T_k\}}\leq \sum_{j\neq i}\lambda_j s_j \mathbb{I}_{\{\tau_j>T_k\}},\	\ i=1,\cdots,n,\   \ k=1,\cdots,n-2.
\end{equation}
Note that these conditions have to be checked at every death throughout the implementation of a decentralized annuity. They are always satisfied if $\lambda_i s_i+\lambda_j s_j\geq \lambda_l s_l$, $\forall i,j,l=1,\cdots,n$.
\end{itemize}

\section{Managerial insights}\label{sec:manage}
The decentralized annuity (DA) plans proposed present several advantages over decentralized contribution (DC) plans. First, as demonstrated in Section \ref{sec:dabetter}, we establish that a DA plan can consistently be offered on more favorable terms than a DC plan. Second, both DA and DC plans can be compared within their optimal schemes. Section \ref{sec:opt} is dedicated to showcasing the superior performance of the optimal DA plan in contrast to a DC account with optimal principal drawdown. At last, the DA framework encompasses several important classes of retirement plans in the existing literature, and the optimal DA schemes proposed in this paper effectively address their shortcomings. To maintain focus in this section, we will confine the discussion to DA vs DC plans and relegate the comparison with other plans to Appendix \ref{sec:comp}. To ensure a balanced discussion, we will outline the limitations of DA plans in Section \ref{sec:limit}.

\subsection{Is a DA plan always better than a DC plan?}\label{sec:dabetter}

In a DC plan, we assume that every peer $i$ has accumulated the principal $s_i$ by the start of his/her retirement. Let $\{c_i(t), t\le \tau_i\}$ be the individual's drawdown from the retirement savings. To avoid the individual outliving his/her retirement savings, we impose that the present value of all drawdowns should be no more than the initial principal, i.e.,
\begin{equation}\label{DCc}
	\int_0^\infty e^{-\delta t}c_i(t)\dif t\leq s_i.
\end{equation}
In a DA plan, every peer not only benefits from the peer's own investments, but also benefits from the deceased peers' credit transfers, and the exceed payments are shown in the following theorem.
\begin{theorem}\label{thm:DAbDC}
For any DC plan with the payments $c_i(t)$ satisfying Eq. (\ref{DCc}), there exists a DA plan with the payments $r_i(t)$ given by
\begin{equation}\nonumber
    r_i(t)\mathbb{I}_{\{t\in[T_{k-1},T_k)\}}=\frac{c_i(t)}{\int_{T_{k-1}}^\infty e^{-\delta (u-T_{k-1})}c_i(u)\dif u}s_i(T_{k-1}),
\end{equation}
which always exceed the DC principal drawdowns, i.e.,
\[r_i(u)-c_i(u)=\sum_{k=1}^{n-2}\frac{c_i(t)}{\int_{T_{k-1}}^\infty e^{-\delta (u-T_k)}c_i(u)\dif u} e^{(k)}_{i}\mathbb{I}_{\{T_k\leq u\}}\ge 0,\]  
where $e_i^{(k)}=\alpha_i^{(k)}\cdot s_{(k)}(T_k-)$, and $\alpha_i^{(k)}$ are determined by Eq. (\ref{example:nconstlam}).
\end{theorem}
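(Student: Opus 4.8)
The plan is to introduce a single scalar ratio comparing each peer's DA cash value to the remaining present value of the benchmark DC drawdown stream, show this ratio is $\ge 1$ at inception and never decreases, and then read off both the inequality and its explicit decomposition from this monotonicity. Throughout I would assume $c_i>0$ (so every normalizing integral is strictly positive) and, without loss of generality, that \eqref{DCc} binds, $\int_0^\infty e^{-\delta t}c_i(t)\dif t=s_i$, since a DC plan leaving budget unspent is weakly dominated; the inequality $r_i\ge c_i$ survives even when \eqref{DCc} is slack, at the cost of an extra non-negative term.

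First I would check that the proposed $r_i$ is the payout of a genuine periodically fair proper DA. In the normalized form \eqref{DAr} its relative rate on period $k$ is $r_i^k(v)=c_i(T_{k-1}+v)\big/\int_0^\infty e^{-\delta w}c_i(T_{k-1}+w)\dif w$, which is $\mathcal{F}_{T_{k-1}}$-measurable and satisfies $\int_0^\infty e^{-\delta v}r_i^k(v)\dif v=1$, so the properness restriction holds with equality. With these rates fixed, Theorem \ref{RP} furnishes a periodically fair DA whose transfers $e_i^{(k)}=\alpha_i^{(k)}s_{(k)}(T_k-)$ solve \eqref{PCT}; in the constant-force instance of Example \ref{PFnpeer}, provided the investment condition \eqref{example:large} holds, the coefficients \eqref{example:nconstlam} yield $e_i^{(k)}\ge 0$ (Axiom \ref{MC}). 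This non-negativity is the crux of the argument.

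The core step is a computation on each inter-death interval. Writing $P_i(t)=\int_t^\infty e^{-\delta v}c_i(v)\dif v$ and $\rho_i(k)=s_i(T_k)e^{-\delta T_k}/P_i(T_k)$, I would substitute the explicit $r_i$ into the cash-value recursion \eqref{rels} on $[T_{k-1},T_k)$; because the denominator of $r_i$ equals $e^{\delta T_{k-1}}P_i(T_{k-1})$, the drawdown integral telescopes and gives
\[
s_i(t)e^{-\delta t}=s_i(T_{k-1})e^{-\delta T_{k-1}}\,\frac{P_i(t)}{P_i(T_{k-1})},\qquad t\in[T_{k-1},T_k).
\]
Hence $s_i(t)e^{-\delta t}/P_i(t)\equiv\rho_i(k-1)$ is constant across the interval, and since \eqref{rels2} adds $e_i^{(k)}\ge 0$ to the balance at $T_k$ while $P_i$ is continuous there, $\rho_i(k)=\rho_i(k-1)+e_i^{(k)}e^{-\delta T_k}/P_i(T_k)\ge\rho_i(k-1)$. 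As $\rho_i(0)=s_i/P_i(0)=1$ under the binding budget, induction gives $\rho_i(k)\ge 1$ for all $k$.

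Finally I would read off the theorem. On $[T_{k-1},T_k)$ the definition of $r_i$ says precisely $r_i(u)=\rho_i(k-1)\,c_i(u)$, so $\rho_i(k-1)\ge 1$ gives $r_i(u)\ge c_i(u)$. Expanding the telescoped recursion, $\rho_i(k-1)-1=\sum_{m=1}^{k-1}e_i^{(m)}e^{-\delta T_m}/P_i(T_m)$, and using $e^{-\delta T_m}/P_i(T_m)=1\big/\int_{T_m}^\infty e^{-\delta(v-T_m)}c_i(v)\dif v$ converts this into the stated decomposition once $\sum_{m=1}^{k-1}$ is rewritten as $\sum_{m=1}^{n-2}\mathbb{I}_{\{T_m\le u\}}$ (valid for $u\in[T_{k-1},T_k)$), each summand being non-negative. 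I expect the main obstacle to be not the algebra but the justification that the transfers are non-negative: this requires invoking the periodic-fairness and Axiom \ref{MC} machinery of Theorem \ref{RP} and verifying the realizability condition \eqref{PFE} (equivalently \eqref{example:large}) for the induced payout rates, which is exactly what licenses the monotonicity of $\rho_i$ that drives everything else.
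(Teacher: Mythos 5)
Your proof is correct and takes essentially the same route as the paper: the paper's own proof is a one-line appeal to Theorem \ref{RP} together with the non-negative transfer coefficients of Eq.~(\ref{example:nconstlam}), which is precisely the machinery you invoke to secure $e_i^{(k)}\ge 0$, the step that drives the inequality. Your ratio bookkeeping $\rho_i(k)=s_i(T_k)e^{-\delta T_k}\big/\int_{T_k}^\infty e^{-\delta v}c_i(v)\,\dif v$ merely makes explicit the telescoping of the cash-value recursion \eqref{rels}--\eqref{rels2} that the paper leaves implicit, and in doing so produces the self-consistent form of the decomposition (your $\int_{T_m}^\infty e^{-\delta(v-T_m)}c_i(v)\,\dif v$ in place of the paper's mismatched $T_{k-1}$/$T_k$ indices), while correctly flagging that the realizability condition \eqref{PFE} is the genuine hypothesis hidden in the statement.
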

\begin{proof}
    Theorem \ref{thm:DAbDC} easily follows from Eq. (\ref{example:nconstlam}) and  Theorem \ref{RP}. 
\end{proof}

\subsection{Optimal principal drawdown strategies}\label{sec:opt}

It is crucial to acknowledge that a significant majority of Americans today, as well as individuals from other nations with DC plans, rely on principal drawdown as a key component of their retirement strategy. Understanding that individuals often adopt different approaches, leading to suboptimal lifetime utility, we consider an optimal principal drawdown strategy in this paper to ensure a fair comparison with the DA.

\subsubsection{Optimal DC plan.}

Under a DC plan, the optimal strategy is chosen to maximize the individual's lifetime utility, i.e.,
\begin{equation}
	\max\limits_{c_i(\cdot)}\mathbb{E}\left[\int_0^{\tau_i}e^{-\delta u}\frac{c_i(u)^{1-\gamma}}{1-\gamma}\dif u\right].
\end{equation}
As such, based on Theorem \ref{optDF},  the optimal drawdown function is given by
\begin{equation}\nonumber
	c_i(t)=\kappa_i \left({}_t p_{0,i}\right)^{\frac{1}{\gamma}}s_i,
\end{equation}
where ${}_t p_{i,x}=\mathbb{P}\left\{\tau_i>t|\tau_i>x\right\}$, and the constant $\kappa_i$ is given by
\begin{equation}\nonumber
	\kappa_i=\frac{1}{\int_0^\infty \left({}_u p_{0,i}\right)^{\frac{1}{\gamma}}\dif u}.
\end{equation}
The optimal drawdown strategy suggests that consumption should decrease with age. This intuition stems from the increasing risk of losing unspent principal as one grows older, making it optimal to spend more in youth and less in older age.

For the sake of simplicity in the forthcoming comparison, we assume a constant force of mortality 
$\lambda_i$ for peer $i$. Consequently, the optimal annuity payout function can be rewritten as follows:
\[ c_i(u) = \left(\delta + \frac{\lambda_i}{\gamma}\right)e^{-\frac{\lambda_i}{\gamma}u} s_i. \]

\subsubsection{Optimal DC vs optimal DA.}

As an extension of Theorem \ref{thm:DAbDC}, we can show that the accumulated payment from optimal DA plan always exceeds that of the optimal DC plan.

\begin{corollary}\label{thm:dabetter}
There exists a DA plan with the payments always exceeding the optimal DC principal drawdowns, i.e.,
\[r_i(u)-c_i(u)=\sum_{k=1}^{n-2}\left(\delta+\frac{\lambda_i}{\gamma}\right)e^{-\frac{\lambda_i}{\gamma}\left(u-T_k\right)} e^{(k)}_{i}\mathbb{I}_{\{T_k\leq u\}}\ge 0,\]  
where
\begin{align*}
    &r_i(u)\mathbb{I}_{\{u\in[T_{k-1},T_k)\}}=\left(\delta+\frac{\lambda_i}{\gamma}\right)s_i(T_{k-1})e^{-\frac{\lambda_i}{\gamma}\left(u-T_{k-1}\right)},\\
    &c_i(u)=\left(\delta+\frac{\lambda_i}{\gamma}\right)s_i e^{-\frac{\lambda_i}{\gamma}u},\\
    &e_i^{(k)}=\alpha_i^{(k)}\cdot s_{(k)}(T_k-),
\end{align*}
and $\alpha_i^{(k)}$ are determined by Eq. (\ref{example:nconstlam}).
\end{corollary}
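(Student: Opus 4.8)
The plan is to obtain this corollary as a direct specialization of Theorem \ref{thm:DAbDC}, feeding in the explicit optimal DC drawdown in place of an arbitrary admissible $c_i$. First I would recall from the optimal-DC derivation (based on Theorem \ref{optDF}) that, under a constant force of mortality $\lambda_i$, the optimizer is $c_i(u)=\left(\delta+\frac{\lambda_i}{\gamma}\right)e^{-\frac{\lambda_i}{\gamma}u}s_i$, and verify that this drawdown meets the admissibility constraint \eqref{DCc} with equality: a one-line integral gives $\int_0^\infty e^{-\delta u}c_i(u)\dif u=s_i\left(\delta+\frac{\lambda_i}{\gamma}\right)\big/\left(\delta+\frac{\lambda_i}{\gamma}\right)=s_i$. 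This places $c_i$ squarely within the hypotheses of Theorem \ref{thm:DAbDC}, so the existence of a dominating DA plan is immediate; the remaining work is purely to make the two displayed formulas explicit.

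Second, I would evaluate the normalizing integral appearing in the DA payment rule of Theorem \ref{thm:DAbDC}. Because the mortality force is constant, for each $k$ one has $\int_{T_{k-1}}^\infty e^{-\delta(u-T_{k-1})}c_i(u)\dif u=s_i e^{-\frac{\lambda_i}{\gamma}T_{k-1}}$, the factor $\delta+\frac{\lambda_i}{\gamma}$ cancelling against the reciprocal of the rate in the exponential. Substituting this into $r_i(t)\mathbb{I}_{\{t\in[T_{k-1},T_k)\}}=\frac{c_i(t)}{\int_{T_{k-1}}^\infty e^{-\delta(u-T_{k-1})}c_i(u)\dif u}s_i(T_{k-1})$ collapses the ratio to $\left(\delta+\frac{\lambda_i}{\gamma}\right)e^{-\frac{\lambda_i}{\gamma}(t-T_{k-1})}s_i(T_{k-1})$, which is exactly the stated form of $r_i$. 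The same computation with $T_{k-1}$ replaced by $T_k$ turns the coefficient of $e^{(k)}_i$ in the difference formula of Theorem \ref{thm:DAbDC} into $\left(\delta+\frac{\lambda_i}{\gamma}\right)e^{-\frac{\lambda_i}{\gamma}(u-T_k)}$, recovering the claimed expression for $r_i(u)-c_i(u)$.

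Third, I would establish the sign. Each summand carries the deterministic positive factor $\left(\delta+\frac{\lambda_i}{\gamma}\right)e^{-\frac{\lambda_i}{\gamma}(u-T_k)}\mathbb{I}_{\{T_k\le u\}}\ge 0$, so nonnegativity of $r_i(u)-c_i(u)$ reduces to nonnegativity of each credit transfer $e^{(k)}_i=\alpha_i^{(k)}s_{(k)}(T_k-)$. This is precisely the properness guarantee of Axiom \ref{MC}, realized here through the constant-force construction of Example \ref{PFnpeer}: the coefficients $\alpha_i^{(k)}$ solving \eqref{example:nconstlam} are nonnegative, hence $e^{(k)}_i\ge 0$, provided the initial data satisfy the compatibility condition ensuring a proper periodically fair annuity via Theorem \ref{RP} (the constant-force condition of Example \ref{PFnpeer}, cf.\ \eqref{opcond}).

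The computations are routine, so I do not anticipate a genuine obstacle; the one point requiring care is bookkeeping consistency between the two base points $T_{k-1}$ and $T_k$ that appear, respectively, in the payment rule and in the difference formula of Theorem \ref{thm:DAbDC}, together with checking that the scheme constructed from the optimal DC drawdown is indeed the proper periodically fair annuity governed by the $\alpha_i^{(k)}$ of \eqref{example:nconstlam}. Once that identification is secured, the corollary follows by assembling the three steps.
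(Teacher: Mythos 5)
Your proposal is correct and follows essentially the same route as the paper: the paper presents Corollary \ref{thm:dabetter} as a direct specialization of Theorem \ref{thm:DAbDC} to the constant-force optimal DC drawdown, with non-negativity coming from the properness of the credit transfers determined by Eq. (\ref{example:nconstlam}), which is exactly your three-step argument. Your explicit verification of the admissibility constraint \eqref{DCc}, the evaluation of the normalizing integral to $s_i e^{-\frac{\lambda_i}{\gamma}T_{k-1}}$, and the flagging of the compatibility condition (cf. \eqref{opcond}) needed for $e_i^{(k)}\geq 0$ supply details the paper leaves implicit.
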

\noindent Recall that the payments of the optimal DA plan are 
\begin{equation*}
    r_i(u)\mathbb{I}_{\{u\in[T_{k-1},T_k)\}}=\left(\delta+\frac{\Lambda^k}{\gamma}\right)s_i(T_{k-1})e^{-\frac{\Lambda^k}{\gamma}\left(u-T_{k-1}\right)},
\end{equation*}
which is similar with the DA plan always exceeding the optimal DC plan, while the ratio is $\frac{\Lambda^k}{\gamma}=\sum\limits_{j=1}\limits^n \frac{\lambda_j}{\gamma}\mathbb{I}_{\{\tau_j>T_{k-1}\}}$ instead of $\frac{\lambda_i}{\gamma}$. As a result, although the expected utilities of the optimal DA plan are larger than the optimal DC plan and any other DA plan, the payments of the optimal DA plan might be less than the DC plan in some specific cases. Thus, if we only concentrate on the expected utilities, the optimal DA plan is the best choice, while the DA plan given in Corollary \ref{thm:dabetter} becomes the right choice if it is treated as a uniformly improvement of the optimal DC plan.

\subsubsection{Small cohort.}

To illustrate the performance contrast between decentralized annuity (DA) plans and a defined contribution (DC) plan with principal drawdown, we initially analyze a scenario involving a small cohort of three individuals. This specific context highlights the relevance of DA plans within small and medium-sized enterprises with a limited number of employees. In such instances, implementing a defined benefit (DB) plan can be prohibitively costly for employers, as the cohort size is insufficient to leverage risk diversification benefits stemming from the law of large numbers.

\begin{example}
In the case with $3$ peers, based on Example \ref{PFnpeer} and Theorem \ref{optDF}, if there are $3$ peers in the system at the initial time and the system only operates until the first peer dies, the optimal payout functions are given by
\begin{equation*}
    r_i(t)=\left(\delta+\frac{\lambda_i}{\gamma}\right)s_i e^{-\frac{\lambda_i}{\gamma}t},
\end{equation*}
the coefficients $\alpha_{i}^j$ in the cases ${(k)=j}$ are given by
\begin{equation}
	\alpha_i^j=\frac{q_i^1 s_i+q_j^1 s_j-q_l^1 s_l}{2q_j^1s_j},
\end{equation}
where $(i,j,l)$ is a permutation of $(1, 2, 3)$, and $q_{i}^1$ are defined in Eq. (\ref{qij}), 

 However, at the first death $T_1$, the peer's cash value in the annuity payout function is given by
\begin{equation}\nonumber
	e^{-\delta T_1}s_i(T_1)=e^{-\left(\delta+\frac{\lambda_i}{\gamma}\right)T_1}s_i+e^{-\delta T_1}e_i^{(1)},
\end{equation}
and the payments of the decentralized annuity in $[T_1,T_2)$ are given by
\begin{equation}\nonumber
	\begin{split}
	r_i(u)&=s_i(T_1)r_i^1(u-T_1)
=\left(\delta+\frac{\lambda_i}{\gamma}\right)e^{-\frac{\lambda_i}{\gamma}u} s_i+\left(\delta+\frac{\lambda_i}{\gamma}\right)e^{-\frac{\lambda_i}{\gamma}\left(u-T_1\right)} e_i^{(1)}\\
	&=c_i(u)+\left(\delta+\frac{\lambda_i}{\gamma}\right)e^{-\frac{\lambda_i}{\gamma}\left(u-T_1\right)} e_i^{(1)},
	\end{split}
\end{equation}
which is always more than the optimal annuity payments in the self-financing annuity, where the excess payments of the decentralized annuity than the self-financing annuity are only contributed by the credit transfers from the deceased peer $(1)$.       

The cumulative payments $R_i(t)$ are given by
\begin{equation}
	R_i(t)=\left\{
	\begin{split}
		&\left(1-e^{-(\delta+\frac{\lambda_i}{\gamma})t}\right)s_i,\	\ &t<\tau_1\wedge\tau_2\wedge\tau_3,\\
		&\left(1-e^{-(\delta+\frac{\lambda_i}{\gamma})\tau_i}\right)s_i,\	\ &\tau_i\leq t\wedge\tau_j\wedge\tau_l,\\
		&s_i+e^{-(\delta+\frac{\lambda_i}{\gamma})\tau_i}\alpha_i^js_j,\	\ &\tau_j\leq t\wedge\tau_i\wedge\tau_l,\\
		&s_i+e^{-(\delta+\frac{\lambda_i}{\gamma})\tau_i}\alpha_i^ls_l,\	\ &\tau_l\leq t\wedge\tau_i\wedge\tau_j,
	\end{split}
	\right.
\end{equation}
whose expectation is 
\begin{equation}\nonumber
	\begin{split}
	\mathbb{E}[R_i(t)]&=\left(1-e^{-(\delta+\frac{\lambda_i}{\gamma}+\sum\limits_{l=1}^3\lambda_l)t}\right)s_i+e^{-(\delta+\frac{\lambda_i}{\gamma}+\sum\limits_{l=1}^3\lambda_l)t}q_i^1s_i\\
	&-\frac12 e^{-(\delta+\frac{\lambda_j}{\gamma}+\sum\limits_{l=1}^3\lambda_l)t}(q_i^1s_i+q_j^1s_j-q_k^1s_k)-\frac12 e^{-(\delta+\frac{\lambda_k}{\gamma}+\sum\limits_{l=1}^3\lambda_l)t}(q_i^1s_i-q_j^1s_j+q_k^1s_k).
	\end{split}
\end{equation}
\end{example}

\begin{figure}[h]
    \centering
    \includegraphics[width=0.5\linewidth]{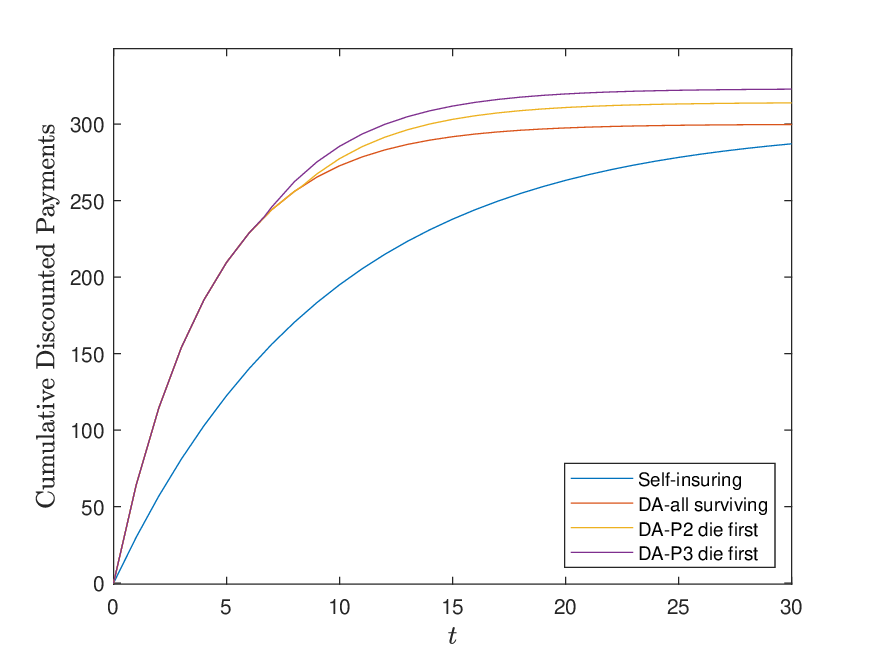}
    \caption{Paths of cumulative discounted payments in decentralized annuities and DC plans}
    \label{fig:DAvsDC}
\end{figure}

The numerical setup is as follows. The three individuals are assumed to make deposits into the DA fund by $s_1=300$, $s_2=270$, $s_3=255$, respectively. They are assumed to use the same discount and risk aversion coefficients, namely, $\delta=0.06$, $\gamma=\frac{2}{3}$. The three have different life expectancies, and their survival models are based on constant forces of mortality, $\lambda_1=0.03$, $\lambda_2=0.04$, $\lambda_3=0.05$. It can be shown that these parameters $\lambda_1s_1=9$, $\lambda_2s_2=10.8$, $\lambda_3s_3=12.75$ satisfy the periodic fairness condition (\ref{PFE}). 
Therefore, the transfer coefficients are given by $\alpha_{12}=\frac{7}{18}$, $\alpha_{32}=\frac{11}{18}$, $\alpha_{21}=\frac{7}{16}$, $\alpha_{31}=\frac{9}{16}$, $\alpha_{13}=\frac{9}{20}$, $\alpha_{23}=\frac{11}{20}$.

We first present different scenarios of DA payments in comparison with those from a self-managed DC plan. In Figure \ref{fig:DAvsDC}, three paths of cumulative discounted payments are shown from the perspectives of peer $\#1$ corresponding to scenarios where peers $\#2$ and $\#3$ die first and all survive over 30 time units. The highest-paying scenario occurs when peer 
$\#3$, with the highest force of mortality, dies. Because he dies early and leaves a large cash value, peers $\#1$ and $\#2$ can both benefit from sharing the unspent amount from peer $\#3$.
These second highest-paying scenario happens when peer $\#3$. The second highest-paying scenario happens when peer $\#2$ dies first leaving a modest amount of cash value for peers $\#1$ and $\#3$ to share. The case of cumulative discounted payments in a self-managed DC plan is clearly shown to offer less than cases of the DA plan because one relies entirely on his/her own account and enjoys no benefit of longevity pooling. If all of the three peers are surviving at the maturity date, the lifetime payments of both the DA plan and the DC plan will tend to the initial investment $s_i=300$, while the payments of the DA plan are larger at the earlier times, which makes the cumulative discounted payments of the DA plan larger than the DC plan before the maturity date. 
Because the initial investments do not satisfy $\lambda_i s_i=\lambda_j s_j$, $i,j=1,2,3$, the DA plan with the optimal payments given in Eq. (\ref{clam}) cannot be periodically fair and proper if it continues to the last survivor, and we consider that the DA plan stops upon the first death.

\begin{figure}[h]
    \centering
    \begin{subfigure}{0.31\textwidth}
    	\centering
\includegraphics[width=1\linewidth]{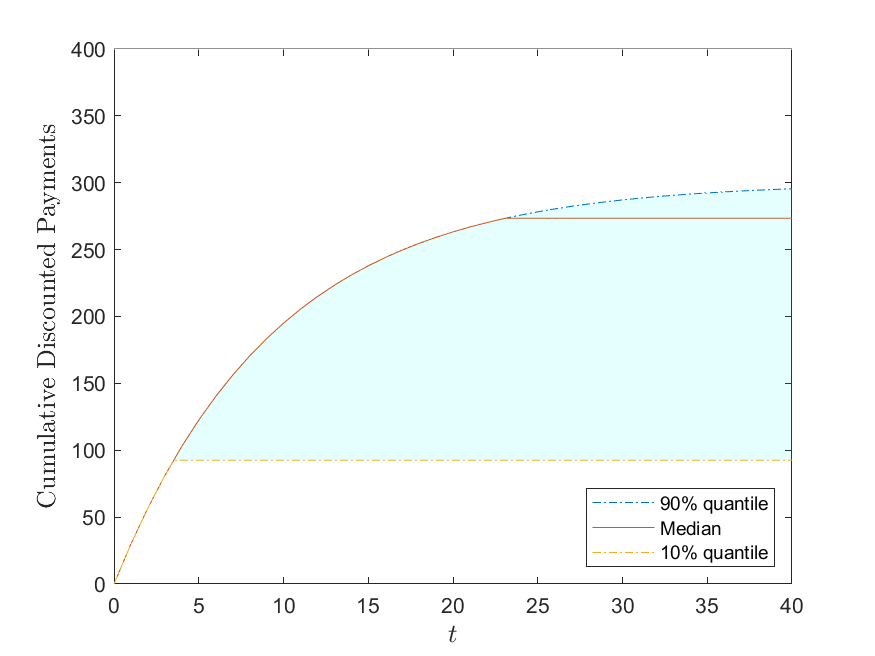}
    	\caption{DC plan
        }
    	\label{fig:DCoptimal}
	\end{subfigure}
    \begin{subfigure}{0.31\textwidth}
    	\centering
    	\includegraphics[width=1\linewidth]{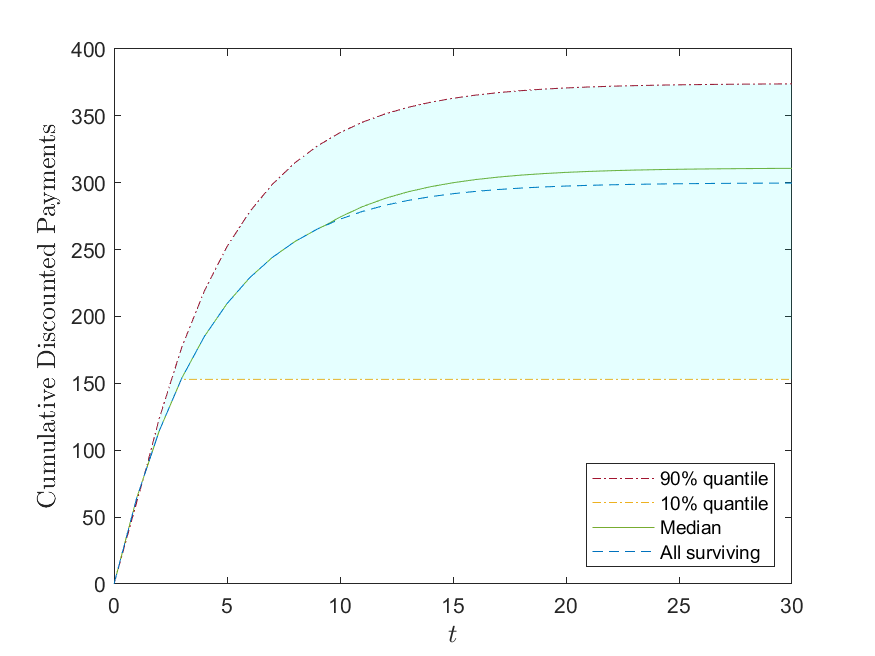}
    	\caption{$3$-peer DA plan
        }
    	\label{fig:DAoptimal}
    \end{subfigure}
    \begin{subfigure}{0.31\textwidth}
    	\centering
    	\includegraphics[width=1\linewidth]{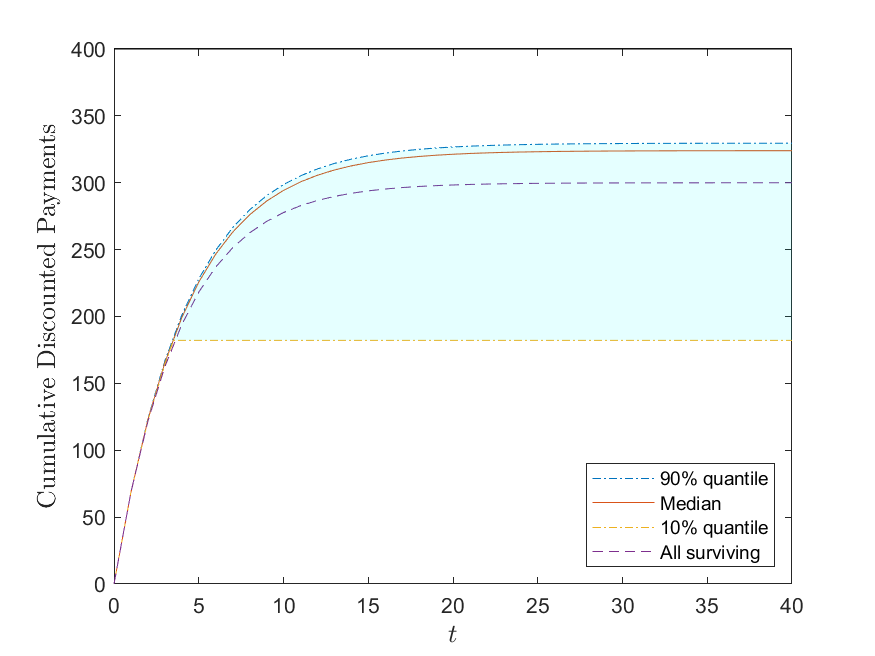}
    	\caption{$1000$-peer DA plan
        }
    	\label{fig:DAlarge}
    \end{subfigure}
    \caption{Quantiles of cumulative discounted payments in DA and DC plans}
\end{figure}

Unlike DB plans with guaranteed benefits, a DA plan does contain uncertainty for all participants. Hence, we ought to consider a wide range of scenarios in which one can expect to receive payments from a DA plan. In Figures \ref{fig:DCoptimal} and \ref{fig:DAoptimal}, we compare the confidence bands of cumulative discounted payments in the DA and DC plans. The shaded areas represent the $[10\%, 90\%]$-confidence bands. It is noteworthy that the lower $10\%$ quantile of payments from the DA plan is close to the maximum of the DC plan. This demonstrates that in all scenarios, payouts from a DA plan exceed those from a DC plan, even with an optimal principal drawdown strategy, which is consistent with the findings from Theorem \ref{thm:dabetter}.


\subsubsection{Large cohorts -- DB-like payments}
We also present a decentralized annuity organized for a large number of participants, where there are $5$ cohorts each with $200$ participants. A cohort is made up with participants of similar risk factors, such as age range, health status, etc., making them a homogeneous or quasi-homogeneous group. For simplicity, we shall assume again constant forces of mortality, given by $\lambda_1=0.02$, $\lambda_2=0.03$, $\lambda_3=0.04$, $\lambda_4=0.05$, $\lambda_5=0.06$, respectively. Participants in the same cohort are required to pay the same initial deposit. Consider them to be $s_1=400$, $s_2=300$, $s_3=270$, $s_4=255$, $s_5=200$, respectively. It is important to note that the mortality assumption can be easily adjusted and based on life tables calibrated for specific regions. We aim to investigate the performance of a decentralized annuity scheme for large cohorts in comparison with that for a small cohort.

In a way similar to DB plans, an increased number of participants reduces the uncertainty for annuity payments to all. As depicted in Fig. \ref{fig:DAlarge}, we consider the cumulative discounted payment of an individual from the second cohort with 
$\lambda_2=0.03$, who is also involved in the 3-peer case. A comparison of annuity payments in the 3-peer case and those in the $1,000$-peer case reveals that the $[10\%, 90\%]$ confidence band, indicated by the shaded area, is narrower for the $1,000$-peer case than it is for the 3-peer case.

\begin{figure}[htbp]
    \centering
    \begin{subfigure}{0.4\textwidth}
    	\centering
    	\includegraphics[width=1\linewidth]{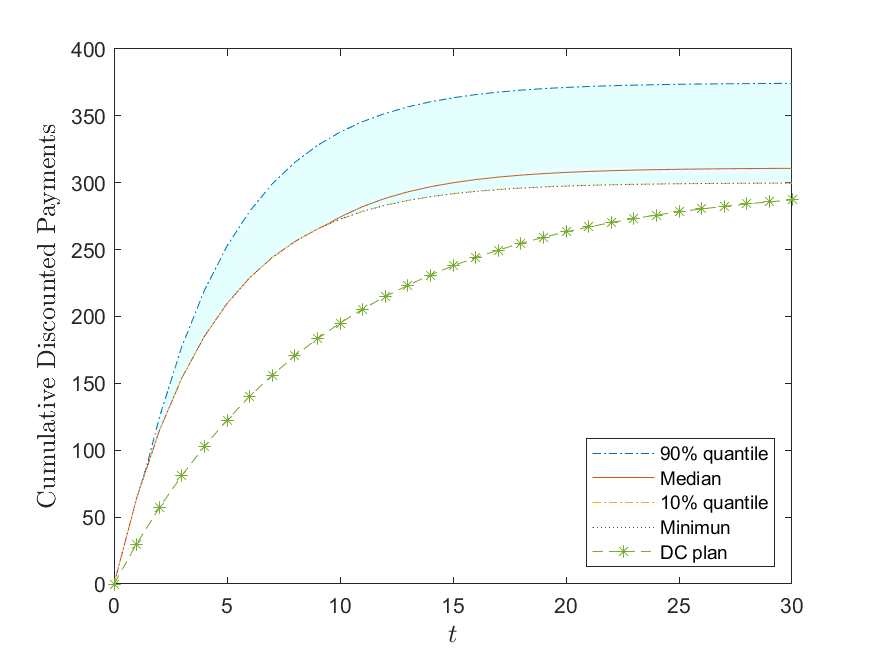}
    	\caption{$3$-peer DA plan v.s. DC plan}
    	\label{fig:DAoptimal34}
    \end{subfigure}
    \begin{subfigure}{0.4\textwidth}
    	\centering
    	\includegraphics[width=1\linewidth]{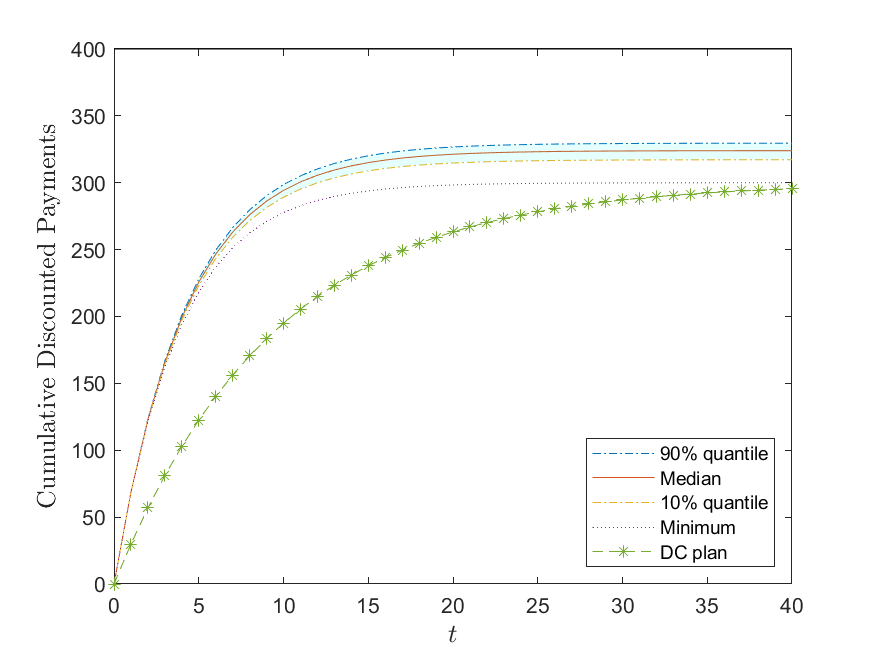}
    	\caption{$1000$-peer DA plan v.s. DC plan}
    	\label{fig:DAoptimal4}
	\end{subfigure}
    \caption{Quantiles of cumulative discounted payments in DA and DC plans conditional on survival}
\end{figure}

    To further demonstrate the impact of uncertainty reduction, we examine cumulative discounted payments conditional on survivorship. Figs. \ref{fig:DAoptimal34}-\ref{fig:DAoptimal4} depict the $[10\%, 90\%]$-confidence bands of DA plans with 3 and 1000 peers, respectively. It is evident that the $[10\%, 90\%]$-confidence band becomes narrower, and both the $[10\%, 90\%]$-bands and the medians increase with the growing number of participants. Notably, in both cases, the total payments exceed those in the case of an optimal DC plan.

Furthermore, DA plans not only offer higher payments but also yield greater lifetime utilities for participants. In Figures \ref{fig:DAutility34}-\ref{fig:DAutility4}, we compare the $[10\%, 90\%]$-confidence bands of cumulative discounted utilities in the DA and DC plans conditional on the event of the peer surviving. In a DC plan, cumulative discounted utilities are deterministic as long as the peer survives, with the $[10\%, 90\%]$-confidence band reduced to a single line. This illustrates that the utilities of the DA plan consistently surpass those of the optimal DC plan.

\begin{figure}[htbp]
    \centering
    \begin{subfigure}{0.45\textwidth}
    	\centering
        \includegraphics[width=1\linewidth]{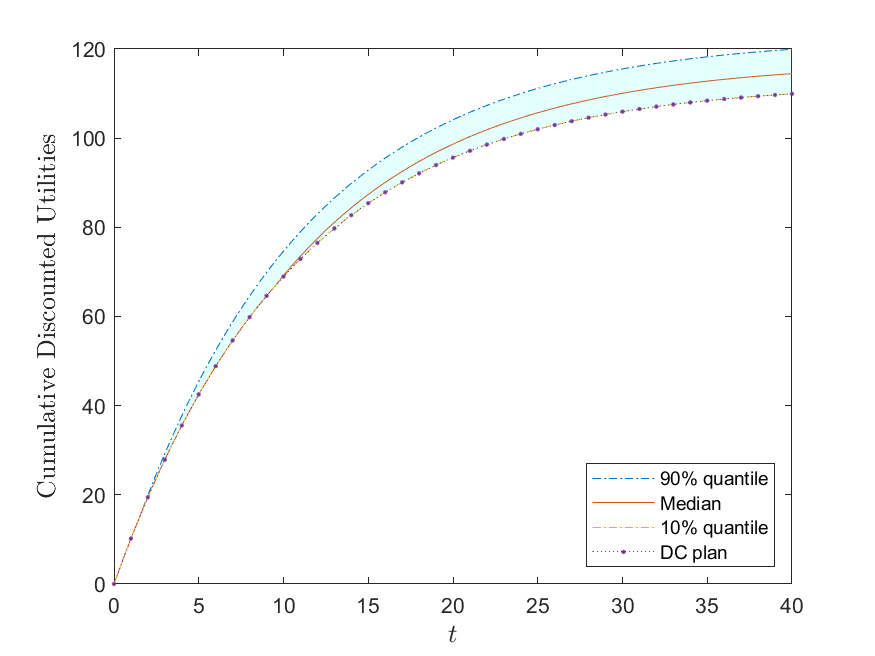}
    	\caption{$3$-peer DA plan v.s. DC plan}
    	\label{fig:DAutility34}
    \end{subfigure}
    \begin{subfigure}{0.45\textwidth}
    	\centering
    	\includegraphics[width=1\linewidth]{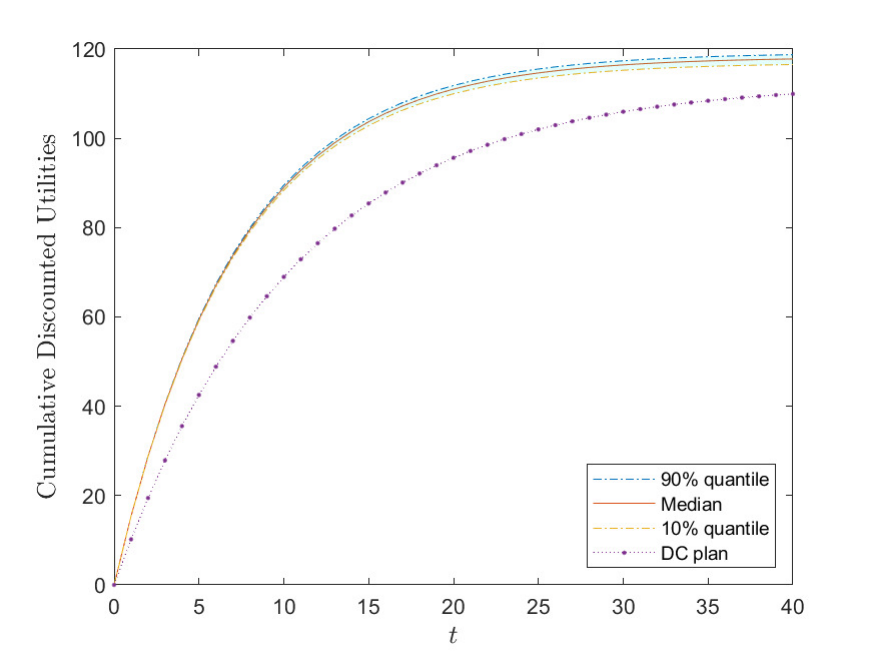}
    	\caption{$1000$-peer DA plan v.s. DC plan}
    	\label{fig:DAutility4}
	\end{subfigure}
    \caption{Quantiles of cumulative discounted utilities in DA and DC plans conditional on survival}
\end{figure}

\subsection{Limitations}
\label{sec:limit}
The outperformance of the DA plan comes with its own complexities. The proposed schemes are notably intricate and challenging to elucidate when compared with the more commonly used DB and DC plans. For instance, implementing a periodically fair proper DA plan with a large number of participants necessitates solving a class of extensive systems of linear equations with infinite non-negative solutions. While this paper puts forward a specific set of solutions outlined in Eq. (\ref{alphak}), the fulfillment of properness conditions as indicated in Eq. (\ref{example:large}) may prove challenging when there are very few survivors in a given cohort or when individual accounts are nearing depletion. Another potential critique is the potential inefficiency in rebalancing cash values following each death in a very large group. We aim to develop simplified algorithms to address these issues in future work.

\section{Conclusion}

This paper presents a novel framework for decentralized annuities, which is designed to address the limitations of traditional pension systems, such as Defined Contribution (DC) and Defined Benefit (DB) plans, while providing lifetime financial support for individuals. The paper highlights the often-overlooked pitfalls of current retirement schemes and introduces the concept of individual rationality for all participants. It is demonstrated that decentralized annuities have the potential to outperform self-managed plans (DC) and produce effects comparable to defined benefit (DB) plans, particularly in larger participant pools. Additionally, the paper lays the groundwork for the further advancement of equitable and sustainable retirement plans with more flexible design features.

One potential criticism of the DA plan proposed in this paper is that an annuitant can potentially lose a portion or all of their balance, whereas a DC account holder can pass on the unspent balance to an heir. However, it is important to note that a DC plan is not primarily designed for bequest motives. Not all holders designate beneficiaries, and even if they do, the heir may face substantial inheritance taxes. Other financial tools, such as trusts, are better suited for bequest purposes. Therefore, we argue that DA represents a superior alternative to DC for covering basic living costs, especially in the absence of strong bequest motives.

\noindent



\bibliographystyle{elsarticle-harv}
\bibliography{a}

\appendix
\setcounter{equation}{0}

\section{Comparison with existing decentralized plans}\label{sec:comp}
\vskip 5pt

In this section, we examine several examples of decentralized retirement plans from the existing literature that fall under the umbrella of decentralized annuities. We will then assess whether these plans meet individual rationality conditions and whether they satisfy various notions of fairness.

\subsection{Several typical examples of the decentralized annuity}\label{examplesAx2-3}
We present three exemplary plans from the existing literature, each representing a special case of decentralized annuity. While there are many more examples, we have selected these three to demonstrate the versatility of the proposed framework.

{\it An equitable tontine} optimizes the apportionment of the aggregate fund from all agents over an infinite horizon to maximize a representative agent's lifetime utility. Notably, the schedule for dispersing the aggregate fund is deterministic. The allotted fund for each period at the group level is then divided among all survivors in that period. Overall, the dispersion of fund at the group level over time and the allocation to survivors in each period are based on a temporal notion of fairness, referred to as equitability in this paper. It is a multi-period model in the sense that the rules are determined at the inception.

In contrast, {\it a fair transfer plan} operates differently. Unlike an equitable tontine, the aggregate fund from all agents is not apportioned over time according to a predetermined schedule. Instead, the fund is rebalanced upon each death in the group, with the deceased's account divided and distributed to all survivors' accounts. The rule of distribution is based on a spatial notion of fairness, which will be discussed later. This plan essentially consists of a sequence of single-period models of the same type, continuing with the same rule of distribution until the last survivor.

{\it Group-self annuitization} represents another decentralized plan where the fund makes payments to all survivors, with each payment's size determined so that the remaining fund provides the payments as the first installment of a life annuity for all survivors. As the fund balance changes due to investments and the number of survivors varies due to deaths, the ``life annuity" payment adjusts accordingly. The aim is to ensure that, regardless of the number of survivors, the remaining fund is prepared as a life annuity for all. Here, the size of each annuity payment is determined by a notion of temporal fairness, referred to as lifetime fairness.

In a nutshell, if the payments in every period are only paid immediately after the account transfers upon a member's death, the decentralized annuity can be viewed as the FTP introduced in \cite{Sabin2010}. 
If the group sums of all payments in all periods are deterministic and the ratios of mortality credits are proportional to some fixed constants among survivors throughout the scheme, the decentralized annuity reduces to the equitable retirement income tontines proposed in \cite{MS2016}. In the next subsection, it is discussed in detail that the three exemplary plans are all special cases of DA.

\subsection{Comparison of three examples on the individual rationalities}
In this section, we show that three kinds of retirement plans known in the literature, the equitable tontines, the group self-annuitization (GSA) plans and fair transfer tontines, are all examples of the decentralized annuity. We then offer the conditions under which these retirement plans satisfy Axioms \ref{AB} - \ref{MC}.

\subsubsection{{\bf Example I: Equitable retirement income tontines}}

In this subsection, we show that equitable retirement income tontines are special cases of DA plans and are not able to satisfy Axioms \ref{NL} - \ref{AB}. Note that equitable retirement income tontines provide continuous payments, while total payouts from the tontine at any time are deterministic. In this paper's notation, the corresponding payments are given by 
\begin{equation}\label{Mpay}
		r_i(t)=\frac{\pi_i s_i \mathbb{I}{\{\tau_i>t\}}}{\sum\limits_{j=1}^n\pi_j s_j \mathbb{I}{\{\tau_j>t\}}} d(t)S,\	\ i=1,\cdots,n,
\end{equation}
where $S=\sum\limits_{j=1}\limits^{n}s_j$, $d(t)$ is a deterministic function and $\{\pi_i,~i=1,\cdots,n\}$ are the constants all given by \cite{MS2016}. 
Denote by $S(t)=S\int\limits_t\limits^\infty e^{-\delta u}d(u)\dif u$ the total asset remaining in the tontine system at time $t$, and the lifetime fairness condition for peer $i$ is given by
\begin{equation}\label{Fair:equitable}
    \begin{split}
    s_i&=\mathbb{E}\left[\int_0^{T_{n-1}}e^{-\delta u}\dif R_i(u)\right]\\
    &=\mathbb{E}\left[\int_0^{\tau_i}e^{-\delta u} d(u)S\frac{\pi_i s_i }{\sum\limits_{j=1}^n\pi_j s_j \mathbb{I}{\{\tau_j>u\}}}\dif u+\int_{\tau_i}^\infty d(u)S \dif u~\mathbb{I}{\{\tau_i>T_{n-1}\}}\right]\\
    &=\int_0^\infty e^{-\delta u}f_i(u)d(u)S~\mathbb{E}\left[ \frac{ \pi_i s_i}{\pi_i s_i+\sum\limits_{j\neq i}\pi_j s_j \mathbb{I}{\{\tau_j>u\}}}\right]\dif u+\int_0^\infty f_i(t)\prod\limits_{j\neq i}\left(1-F_j(t)\right)S(t)\dif t.
    \end{split}
\end{equation}
The following proposition shows that the equitable tontine is a special case of DA, which is not able to satisfy Axioms \ref{NL} - \ref{AB} for heterogeneous participants.
\begin{proposition}\label{Mrep}
	Assume that the payments are the corresponding ones to the equitable tontine in Eq. (\ref{Mpay}), and then there exists a DA satisfying Axiom \ref{AB} 
	if and only if $s_i(t)$ is given by
	\begin{equation}\label{Mbal}
		s_i(t)\mathbb{I}_{\{t\in[T_k,T_{k+1})\}}=\frac{\pi_i s_i \mathbb{I}{\{\tau_i>T_k\}}}{\sum\limits_{j=1}^n\pi_j s_j \mathbb{I}{\{\tau_j>T_k\}}}S(t)\mathbb{I}_{\{t\in[T_k,T_{k+1})\}},
	\end{equation}
for $k=0,\cdots,n-2$. As such, the credit transfers are unique and given by
\begin{equation}\label{Mtrans}
	\begin{split}
		e_{i}^{(k)}&=\left[s_i(T_k)-s_i(T_k-)\right]\mathbb{I}_{\{\tau_i>T_k\}}\\
		&=\left[s_i(T_{k-1})-\Pi_i^{k-1}\left(S(T_k)-S(T_{k-1})\right)-\Pi_i^k S(T_k)\right]\mathbb{I}_{\{\tau_i>T_k\}},
	\end{split}
\end{equation}
where $\Pi_i^k=\frac{\pi_i s_i \mathbb{I}{\{\tau_i>T_k\}}}{\sum\limits_{j=1}^n\pi_j s_j \mathbb{I}{\{\tau_j>T_k\}}}$. Thus, the decentralized annuity reduces to the equitable retirement income tontine if and only if the payments $r_i(t)$ and the account transfers $e_{i}^{(k)}$ are given in Eqs. (\ref{Mpay}) and (\ref{Mtrans}), respectively. Moreover, if the participants are heterogeneous, the equitable tontines are not able to satisfy Axioms \ref{NL} and \ref{AB} at the same time. 
\end{proposition}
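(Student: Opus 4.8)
The plan is to exploit the one structural feature that distinguishes the equitable tontine, namely that the total instantaneous payout $\sum_{i} r_i(t) = d(t)S$ is \emph{deterministic}, and to couple it with the aggregate bookkeeping of cash values. \textbf{Step 1 (aggregate conservation).} I would first sum the between-death dynamics \eqref{rels} over $i$ and use the clearing condition $\sum_{i \neq (k)} e_i^{(k)}\mathbb{I}_{\{T_k < \tau_i\}} = s_{(k)}(T_k-)$ at each death (so that transfers neither create nor destroy value) to establish the single conservation law $\sum_{i} s_i(t) e^{-\delta t} = S(t)$ for every $t$, where only survivors contribute. This holds for \emph{any} admissible DA built on the payments \eqref{Mpay} and drives the rest of the argument.

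\textbf{Step 2 (forcing \eqref{Mbal}).} On a period $[T_k, T_{k+1})$ the survivor set is frozen, so $\mathbb{I}_{\{\tau_j>u\}} = \mathbb{I}_{\{\tau_j > T_k\}}$ and \eqref{Mpay} reduces to $r_i(u) = \Pi_i^k\, d(u) S$. Integrating \eqref{rels} yields $s_i(t)e^{-\delta t} = s_i(T_k)e^{-\delta T_k} - \Pi_i^k[S(T_k) - S(t)]$. The crux is that $s_i(T_k)$ is $\mathcal F_{T_k}$-measurable, hence committed \emph{before} the length of the current inter-death interval is known. Since $\{T_{k+1} > t_0\}$ has positive probability for every $t_0$, Axiom \ref{AB} must hold along paths on which the next death is arbitrarily late, where $S(t)$ can be made arbitrarily small; this forces the pointwise lower bound $s_i(T_k)e^{-\delta T_k} \ge \Pi_i^k S(T_k)$. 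Summing over $i$ and invoking Step 1 together with $\sum_i \Pi_i^k = 1$ collapses the inequality to an equality for each $i$, giving $s_i(T_k)e^{-\delta T_k} = \Pi_i^k S(T_k)$; substituting back recovers \eqref{Mbal} on the whole interval. Conversely, \eqref{Mbal} is manifestly non-negative and differentiates to exactly the drawdown in \eqref{rels}, so it furnishes a DA satisfying Axiom \ref{AB}. Uniqueness of the credit transfers then follows because \eqref{rels2} forces $e_i^{(k)} = [s_i(T_k) - s_i(T_k-)]\mathbb{I}_{\{\tau_i>T_k\}}$, with both $s_i(T_k)$ and $s_i(T_k-)$ now pinned down by \eqref{Mbal}, which is \eqref{Mtrans}; a short computation showing these $e_i^{(k)}$ sum to $s_{(k)}(T_k-)$ confirms the clearing condition.

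\textbf{Step 3 (incompatibility of Axioms \ref{NL} and \ref{AB}).} Taking $k=0$ in \eqref{Mbal} shows that Axiom \ref{AB} forces the reallocated initial balance $s_i(0) = \Pi_i^0 S = \frac{\pi_i s_i}{\sum_j \pi_j s_j}S$, which equals $s_i$ only when the weights are homogeneous. I would fix a peer $i^*$ of minimal weight; heterogeneity gives $\sum_j(\pi_j - \pi_{i^*})s_j > 0$, i.e.\ $\Pi_{i^*}^0 S < s_{i^*}$ strictly. Then I exhibit a positive-probability scenario violating Axiom \ref{NL}: the event that $i^*$ is the sole survivor and the first death occurs later than a large $t_0$. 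On this event the discounted lifetime payments to $i^*$ up to $T_{n-1}$, namely $\sum_{k=0}^{n-2}\Pi_{i^*}^k[S(T_k) - S(T_{k+1})]$, are bounded above by $\Pi_{i^*}^0 S + S(T_1) \le \Pi_{i^*}^0 S + S(t_0)$, using $\Pi_{i^*}^k \le 1$ and telescoping the tail sum. Choosing $t_0$ so large that $S(t_0) < s_{i^*} - \Pi_{i^*}^0 S$ makes these payments strictly below $s_{i^*}$, contradicting Axiom \ref{NL}. Since Step 2 shows every AB-satisfying DA has balances \eqref{Mbal}, this proves Axioms \ref{NL} and \ref{AB} cannot hold simultaneously for heterogeneous $\pi$.

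\textbf{Main obstacle.} The delicate point is Step 2: turning the mere non-negativity constraint into the \emph{exact} proportional form \eqref{Mbal} relies on the adaptedness subtlety that $s_i(T_k)$ is committed before $T_{k+1}$ is realized, so the bound must hold uniformly over arbitrarily long inter-death intervals; the aggregate identity of Step 1 is then what upgrades the family of inequalities to termwise equalities. The same ``send the death times to infinity'' device reappears in Step 3, where I must ensure the chosen scenario genuinely has positive probability and that the telescoping upper bound is controlled by $S(t_0)$. Keeping the discount factors $e^{\pm\delta t}$ consistent with the paper's convention for $S(t)$ is routine bookkeeping I would reconcile at the end.
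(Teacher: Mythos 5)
Your proof is correct and takes essentially the same route as the paper's: non-negativity of cash values (Axiom \ref{AB}) plus the aggregate conservation law $\sum_i s_i(t)e^{-\delta t}=S(t)$ yields the lower bounds $s_i(T_k)e^{-\delta T_k}\ge \Pi_i^k S(T_k)$, which summation upgrades to the equalities \eqref{Mbal}, and the heterogeneity argument in your Step 3 is exactly the paper's construction of a scenario with a late first death in which a low-weight peer's total payments are capped strictly below the initial investment, contradicting Axiom \ref{NL}. The only difference is presentational: you make explicit the adaptedness/``arbitrarily late death'' justification of the key inequality and the verification of the clearing condition, both of which the paper leaves implicit.
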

\begin{proof}
    For a DA satisfying Axiom \ref{AB}, if the payments are given in Eq. (\ref{Mpay}), we know 
    \begin{equation*}
        \begin{split}
        &s_i(T_k)\geq \int_{T_k}^\infty \frac{\pi_i s_i \mathbb{I}{\{\tau_i>T_k\}}}{\sum\limits_{j=1}^n\pi_j s_j \mathbb{I}{\{\tau_j>T_k\}}} d(t)S \dif t=\frac{\pi_i s_i \mathbb{I}{\{\tau_i>T_k\}}}{\sum\limits_{j=1}^n\pi_j s_j \mathbb{I}{\{\tau_j>T_k\}}}S(T_k),\  \ i=1,\cdots,n,\\ 
        &\sum_{i=1}^n s_i(T_k)=S(T_k),\ \ k=1,\cdots,n-1.
        \end{split}
    \end{equation*}
    As such, we have the fact that 
    \begin{equation*}
        s_i(T_k)=\frac{\pi_i s_i \mathbb{I}{\{\tau_i>T_k\}}}{\sum\limits_{j=1}^n\pi_j s_j \mathbb{I}{\{\tau_j>T_k\}}}S(T_k),\  \ i=1,\cdots,n, \    \ k=1,\cdots,n-1,
    \end{equation*}
    and the cash values $s_i(t)$ and the credit transfers $e_i^{(k)}$ are determined by
    \begin{equation*}
        \begin{split}
        &s_i(t)\mathbb{I}_{\{t\in[T_k,T_{k+1})\}}=s_i(T_k)-\int_{T_k}^t r_i(t)\dif t\  \mathbb{I}_{\{t\in[T_k,T_{k+1})\}}=\frac{\pi_i s_i \mathbb{I}{\{\tau_i>T_k\}}}{\sum\limits_{j=1}^n\pi_j s_j \mathbb{I}{\{\tau_j>T_k\}}}S(t)\mathbb{I}_{\{t\in[T_k,T_{k+1})\}},\\
        &e_i^{(k)}=\left[s_i(T_k-)-s_i(T_k)\right]\mathbb{I}_{\{\tau_i>T_k\}}=\left[s_i(T_{k-1})-\Pi_i^{k-1}\left(S(T_k)-S(T_{k-1})\right)-\Pi_i^k S(T_k)\right]\mathbb{I}_{\{\tau_i>T_k\}}.
        \end{split}    
    \end{equation*}

For the rational properties in Axiom \ref{NL}, if an equitable retirement income tontine satisfies Axiom \ref{AB}, the account balances and credit transfers are suppposed to be given by Eqs. (\ref{Mbal}) and (\ref{Mtrans}), while there always exists a case with positive probability that the equitable tontine fails to satisfy Axiom \ref{NL}, which naturally implies not satisfying Axiom \ref{MC}. In the case $k=0$, we know $T_0=0$, $\mathbb{I}_{\{\tau_i>0\}}\equiv 1$ and observe that 
\begin{equation}
	s_i(0)=\frac{\pi_i s_i}{\sum_{j=1}^n\pi_j s_j}\sum_{j=1}^{n}s_j.
\end{equation} 
For the heterogeneous participants, $\{\pi_i,~i=1,\cdots, n\}$ are not all the same, there exists some $i$ such that $s_i>s_i(0)$, which leads to the fact that the agent $i$'s total lifetime payments will be less than his/her initial investment in some cases even if the agent $i$ lives long enough (to the terminal time). As a result, there must be account transfers $e_i^0$ at time $0$, although there is no death. To be more specific, if all of the agents are surviving at time $t^*$ satisfying
\begin{equation}
	s_i-s_i(0)>\sum_{j=1}^{n}s_j\int_{t^*}^\infty d(u)\dif u,
\end{equation}
then, even if agent $i$ survives at the terminal time and all other agents die immediately just after $t^*$, the agent $i$' total payments will never exceed, i.e.,
\[s_i(0)+\sum_{j=1}^{n}s_j\int_{t^*}^\infty d(u)\dif u<s_i,\]
which contradicts to Axiom \ref{NL}. Thus, if the participants are heterogeneous, the equitable tontines are not able to satisfy Axioms \ref{NL} and \ref{AB} at the same time.
\end{proof}


\subsubsection{\bf Example II: GSA plans}

In this subsection, we show that the GSA plan with homogeneous peers is a special case of the DA plans and satisfies Axiom \ref{MC}. In the GSA plan proposed in \citet{PVD05}, if peers are homogeneous with the same initial investment $s_0$, the payments to every peer are given by
\begin{equation}\nonumber
	R_i(t)=\sum_{m\leq t}\Delta R_i(m)=\sum_{m\leq t}\frac{{}_mp_0\cdot n}{\sum\limits_{j=1}^n\mathbb{I}_{\{\tau_j>m\}}}R_0\mathbb{I}_{\{\tau_i>m\}},\	\ i=1,\cdots,n,
\end{equation}
where $R_0$ is given by
\begin{equation}\nonumber
	R_0=\frac{s_0}{\sum\limits_{t=0}^\infty {}_tp_0 e^{-\delta t}}.
\end{equation}
Total payments from the system can be described by
\begin{equation}
	\Delta R(m)=\sum_{i=1}^n\Delta R_i(m)={}_mp_0\cdot n R_0,
\end{equation}
which is a deterministic function of time $m$. As such, the GSA plan with homogeneous peers is the discrete-time version of the equitable tontine. Based on Proposition \ref{Mrep}, the GSA plan with homogeneous peers is also a special case of the decentralized annuity, and the account balances are given by
\begin{equation}\nonumber
	s_i(m)=\frac{\sum\limits_{t=m+1}^\infty {}_tp_0 e^{-\delta t}}{\sum\limits_{t=0}^\infty {}_tp_0 e^{-\delta t}}\frac{n}{\sum\limits_{j=1}^n\mathbb{I}_{\{\tau_j>t\}}}s_0\mathbb{I}_{\{\tau_i>t\}}.
\end{equation}
Because the GSA plan with homogeneous participants is even a special case of the equitable tontine with homogeneous participants, it satisfies Axiom \ref{MC}.
\hfill $\square$

\subsubsection{\bf Example III: fair transfer tontines}

In this subsection, we show that the fair transfer tontine is a special case of the DA plans and satisfies Axiom \ref{MC}. In the fair transfer tontines proposed in \cite{Sabin2010}, the payments only occurs at the death times $T_k$, $k=1,\cdots,n-1$, given by
\begin{equation}\nonumber
	\int_0^{T_k}e^{-\delta u}\dif R_i(u)=\sum_{l=1}^k \alpha_i^{(l)}(T_l)s_{(l)}\mathbb{I}_{\{\tau_i>T_l\}},
\end{equation}
and any agent's account transfer just equal to the payment at the moment, i.e.,
\[e_i^{(k)}\mathbb{I}_{\{T_k<\tau_i\}}=\Delta R_i(T_k)\triangleq \left[R_i(T_k)-R_i(T_k-)\right]\mathbb{I}_{\{\tau_i>T_k\}},\	\ i=1,\cdots,n,\	\ k=1,\cdots,n-1.\] 
As such, for $i=1,\cdots,n$, $k=1,\cdots,n-1$, the account balances $s_i(t)$ satisfy
\begin{equation}\nonumber
	\begin{split}
		s_i(t)e^{-\delta t}\mathbb{I}_{\{t\in(T_{k-1},T_k)\}}&=s_i(T_{k-1})e^{-\delta T_{k-1}}\mathbb{I}_{\{t\in(T_{k-1},T_k)\}}\\
		&=\left[s_i(T_{k-1})-\Delta R_i(T_{k-1})+e_i^{(k)}\mathbb{I}_{\{T_k<\tau_i\}}\right]e^{-\delta T_{k-1}}\mathbb{I}_{\{t\in(T_{k-1},T_k)\}}\\
		&=s_i(T_{k-1})e^{-\delta T_{k-1}}\mathbb{I}_{\{t\in(T_{k-1},T_k)\}}\\
		&\cdots\\
		&=s_i\mathbb{I}_{\{\tau_i>t\}}\mathbb{I}_{\{t\in(T_{k-1},T_k)\}},
	\end{split}
\end{equation}
which shows that $s_i(t)e^{-\delta t}=s_i \mathbb{I}_{\{\tau_i>t\}}$. We summarize the relationship between the fair transfer tontines in the following proposition.
\begin{proposition}\label{FTPrep}
	For any fair transfer tontine with payments $\Delta R_i(T_k)\!=\!\alpha_i^{\!(k)}(T_k)s_{(k)}e^{\delta T_k}\mathbb{I}_{\{T_k<\tau_i\}}$, where the coefficients $\alpha_i^{(k)}(t)=\sum\limits_{j\neq i}\alpha_i^{j}(t)\mathbb{I}_{\{(k)=j\}}$ are non-negative and satisfy
	\begin{equation}\nonumber
		\sum_{i\neq j}\alpha_i^j(t)=1,\	\ j=1,\cdots,n,\	\ t\geq 0,
	\end{equation} 
	the account transfers just equal to the payments at the moment, i.e.,  $e_i^{(k)}\!=\!\Delta R_i(T_k)$, and then the account balances are determined by $s_i(t)=s_i e^{\delta t}\mathbb{I}_{\{\tau_i>t\}}$. Therefore, the fair transfer tontine natually satisfies Axiom 
	\ref{MC}, and is a special case of DA with the cumulative payments $R_i(t)\!=\!\sum\limits_{k=1}\limits^{n-1}\Delta R_i(T_k)\mathbb{I}_{\{T_k\leq t<\tau_i\}}$ and the account transfers $e_i^{(k)}\!=\!\Delta R_i(T_k)$.  
    \end{proposition}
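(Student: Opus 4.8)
The plan is to verify the three assertions of the proposition in turn — the closed form $s_i(t)=s_i e^{\delta t}\mathbb{I}_{\{\tau_i>t\}}$ for the account balances, the clearing condition together with Axiom \ref{MC}, and the identification with a bona fide decentralized annuity — with the account-balance formula carrying most of the weight. I would establish that formula by induction on the death times $T_0=0<T_1<\cdots<T_{n-1}$. For the base case, since a fair transfer tontine makes no pairwise transfers at inception, $e_i^0=0$, so Eq.~\eqref{trans0} gives $s_i(0)=s_i$, which is the asserted value at $t=0$.

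For the inductive step I would exploit the defining feature of the scheme: payments are made only at death times, so $\dif R_i\equiv 0$ on each open interval $(T_{k-1},T_k)$. Consequently Eq.~\eqref{rels} forces the discounted balance $s_i(t)e^{-\delta t}$ to stay constant between consecutive deaths, equal to its value $s_i(T_{k-1})e^{-\delta T_{k-1}}$ at the start of the interval. The crucial point is what happens at a death time $T_k$ for a surviving peer $i$ (on $\{\tau_i>T_k\}$): the incoming credit transfer $e_i^{(k)}$ raises the account through Eq.~\eqref{rels2}, but by hypothesis the simultaneous payment equals this transfer, $\Delta R_i(T_k)=e_i^{(k)}\mathbb{I}_{\{\tau_i>T_k\}}$, and is withdrawn through the jump of $R_i$ entering Eq.~\eqref{rels}. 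These two contributions cancel exactly, so the discounted balance is unchanged across $T_k$. Chaining the ``constant between deaths'' and ``unchanged across each death'' facts back to the base case $s_i(0)=s_i$ yields $s_i(t)e^{-\delta t}=s_i\mathbb{I}_{\{\tau_i>t\}}$, i.e.\ the claimed formula; in particular $s_i(T_k-)=s_i e^{\delta T_k}\mathbb{I}_{\{\tau_i>T_k\}}$.

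With the balance formula in hand, the remaining assertions follow cleanly. For the clearing condition I would substitute $e_i^{(k)}=\alpha_i^{(k)}(T_k)s_{(k)}e^{\delta T_k}$ and use $s_{(k)}(T_k-)=s_{(k)}e^{\delta T_k}$ together with the column-sum normalisation $\sum_{i\neq j}\alpha_i^j=1$ to obtain $\sum_{i\neq(k)}e_i^{(k)}=s_{(k)}e^{\delta T_k}\sum_{i\neq(k)}\alpha_i^{(k)}(T_k)=s_{(k)}e^{\delta T_k}=s_{(k)}(T_k-)$, so the transfers indeed exhaust the deceased's account. For Axiom \ref{MC} I would note that each $e_i^{(k)}=\alpha_i^{(k)}(T_k)s_{(k)}e^{\delta T_k}\ge 0$, since $\alpha_i^{(k)}\ge 0$ and the remaining factors are positive, while $e_i^0=0$ by construction; hence the non-negative-mortality-credit axiom holds for every period. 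Finally, taking $R_i(t)=\sum_{k=1}^{n-1}\Delta R_i(T_k)\mathbb{I}_{\{T_k\le t<\tau_i\}}$ and $e_i^{(k)}=\Delta R_i(T_k)$ and checking that this pair satisfies Eqs.~\eqref{rels}--\eqref{rels2} exhibits the fair transfer tontine as a special case of a decentralized annuity.

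I expect the main obstacle to be the bookkeeping at a death time $T_k$: showing rigorously that the upward jump $e_i^{(k)}$ mandated by Eq.~\eqref{rels2} and the downward payment jump $\Delta R_i(T_k)$ entering through $\dif R_i$ in Eq.~\eqref{rels} cancel, so that the discounted balance is genuinely continuous across $T_k$ for survivors. This hinges on the precise placement of the jump of $R_i$ relative to the half-open interval $[T_{k-1},T_k)$ and on correctly carrying the indicator $\mathbb{I}_{\{\tau_i>T_k\}}$ that distinguishes survivors (whose balance is preserved) from the deceased peer $(k)$ (whose entire balance $s_{(k)}(T_k-)$ is redistributed and whose post-death account is set to zero). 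Once this cancellation is pinned down, the rest is routine substitution.
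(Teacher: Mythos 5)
Your proposal is correct and follows essentially the same route as the paper: the paper's argument is exactly the telescoping computation you formalize as an induction — discounted balances are constant between deaths because payments occur only at death times, and at each death the incoming transfer $e_i^{(k)}$ cancels the simultaneous payment $\Delta R_i(T_k)$, giving $s_i(t)e^{-\delta t}=s_i\mathbb{I}_{\{\tau_i>t\}}$, after which Axiom \ref{MC} follows from $\alpha_i^{(k)}\ge 0$. Your explicit verification of the clearing condition $\sum_{i\neq(k)}e_i^{(k)}=s_{(k)}(T_k-)$ via the normalisation $\sum_{i\neq j}\alpha_i^j=1$ is a useful piece of bookkeeping the paper leaves implicit, but it does not change the argument.
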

   \hfill $\square$
\subsection{Comparison of individual rationalities}
Based on the discussion in this subsection, we have shown that, 
for heterogeneous participants, both equitable tontines and the group-self annuitization are unable to satisfy Axioms \ref{NL} and \ref{AB} at the same time, and are also unable to satisfy Axiom \ref{MC}, 
    while both the fair transfer plans and the general decentralized annuities are able to satisfy Axioms \ref{AB} - \ref{MC}, which is summarized in Table \ref{tab:RP}.

\begin{table}[h]
	\centering
	\begin{tabular}{|c|c|c|c|}
		\hline
		& Axioms 1 & Axioms 2 & Axioms 3 \\
		 \hline
		Equitable Tontines    &  $\times$  &  \checkmark & $\times$ \\ \hline
        GSA Plans    &  $\times$  & \checkmark  & $\times$ \\ \hline
		Fair Transfer Tontines   & \checkmark   & \checkmark  & \checkmark \\ \hline
		Decentralized Annuities & \checkmark   & \checkmark  & \checkmark \\ \hline
	\end{tabular}\\
	\caption{\scriptsize Comparisons on rationality properties}
	\label{tab:RP}
\end{table} 

\subsection{Comparison of fairness conditions}

In particular, ``equitability" and ``lifetime fairness" are just the two existing criteria proposed by \cite{MS2016} and the actuarial fairness of the centralized annuities. While ``periodic fairness" and ``instantaneous fairness" proposed by ourselves are two criteria for the decentralized annuities as the generalization of the fairness condition in \cite{Sabin2010}. 

\subsubsection{Fairness conditions: equitable tontines \& GSA plans}
It is shown in \cite{MS2016} that it is impossible for an equitable tontine to be lifetime fair, because it is perpetual and there are always some balances left and wasted in the system at the time all of the participants deceased. However, even if we consider the modified equitable tontine dissolving with the last survivor who gets all the balances left in the system, it is shown in Eq. (\ref{Fair:equitable}) that the equitable tontine satisfying lifetime fairness if the weight coefficients $\{\pi_i,~i=1,\dots,n\}$ satisfy

\begin{equation}\nonumber
    s_i=\int_0^\infty e^{-\delta u}f_i(u)d(u)S~\mathbb{E}\left[ \frac{ \pi_i s_i}{\pi_i s_i+\sum\limits_{j\neq i}\pi_j s_j \mathbb{I}{\{\tau_j>u\}}}\right]\dif u+\int_0^\infty f_i(t)\prod\limits_{j\neq i}\left(1-F_j(t)\right)S(t)\dif t, 
\end{equation}
$i=1,\dots,n$, which is still difficult to satisfy.

Recall that the GSA plan with homogeneous participants is a special case of the equitable tontine, as such, it is also not able to satisfy lifetime fairness.

\subsubsection{Fairness conditions: fair transfer plans}

In the settings of the fair transfer plans, Periodic Fairness and Instantaneous Fairness are both equivalent to the fairness proposed in \cite{Sabin2010}, and it is shown in \cite{Sabin2010} that the fairness condition is satisfied if and only if
\begin{equation*}
    \sum_{j\neq i}\lambda_j(t)s_j\geq \lambda_i(t)s_i,\ \ \forall t\geq 0,\ i=1,\dots,n,
\end{equation*}
which is easily satisfied if there are no less than three survivors, and the fair transfer plan dissolving with two survivors is both periodically and instantaneously fair. However, if there are only two survivors, the condition becomes
\begin{equation*}
    \lambda_1(t)s_1= \lambda_2(t)s_2,\  \ \forall t\geq 0,
\end{equation*}
only satisfied when the participants are homogeneous if the mortality forces are not constant. 

Even though the mortality constants are constant, we give an example to show that there are additional constraints to the initial investments to make the fair transfer plan continue to the last survivor be periodically fair, while there is no additional constraint for DA to be periodically fair.

\begin{example}[Improvement of fair transfer plans]
    For example, the two peers' force of mortality are both constant, given by $\lambda_1$ and $\lambda_2$, and the initial investments are $s_1$ and $s_2$, respectively. Then the lifetime payments of each peer are
\[R_i(\tau_i)=(s_1+s_2)\mathbb{I}_{\{\tau_i>\tau_j\}},\	\ i=1,2,\	\ j\in\{1,2\}\setminus\{i\},\]
and the actuarial fairness is 
\[\frac{s_1}{s_1+s_2}=\mathbb{P}\{\tau_1>\tau_2\}=\frac{\lambda_2}{\lambda_1+\lambda_2},\]
which is
\[\frac{s_1}{s_2}=\frac{\lambda_2}{\lambda_1}.\]
In the decentralized annuity, also a generalized tontine, the lifetime payments of each peer are
\[R_i(\tau_i)=\left(s_1+s_2-\int_0^{\tau_j}r_j(t)\dif t\right)\mathbb{I}_{\{\tau_i>\tau_j\}}+\int_0^{\tau_i}r_i(t)\dif t~\mathbb{I}_{\{\tau_i<\tau_j\}},\	\ i=1,2,\	\ j\in\{1,2\}\setminus\{i\}.\]
If the payments are of the form $r_i(t)=s_i e^{-\theta_i t}$, then the actuarial fairness is
\[\mathbb{E}\left[s_1 e^{-\theta_1\tau_1}\mathbb{I}_{\{\tau_1<\tau_2\}}\right]=\mathbb{E}\left[s_2 e^{-\theta_2\tau_2}\mathbb{I}_{\{\tau_2<\tau_1\}}\right],\]
which is
\[\frac{s_1}{s_2}=\frac{\lambda_2}{\lambda_1}\cdot\frac{\theta_1+\lambda_1+\lambda_2}{\theta_2+\lambda_1+\lambda_2},\]
and is easily satisfied for arbitrary $s_1$, $s_2$ by choosing $\theta_1$ and $\theta_2$.

Thus, it is restricted that $\frac{s_1}{s_2}=\frac{\lambda_2}{\lambda_1}$ in a periodically fair transfer plan, while $\frac{s_1}{s_2}$ is arbitrary for the periodically fair decentralized annuities.

\end{example}

For the fair transfer plan continue to the last survivor, any two peers are probably the last two survivors, as such, the plan is periodically fair if 
\begin{equation*}
    \lambda_i(t)s_i= \lambda_j(t)s_j,\  \ \forall t\geq 0,\ \ i,j=1,\dots,n,
\end{equation*}
which is an irrational constraint even if the mortality forces are all constant. Thus, the fair transfer plan is periodically and instantaneous fair only if it dissolves with two survivors.

\subsubsection{Comparison on the fairness conditions}

Based on the discussion in this subsection, we can conclude that equitable tontines cannot satisfy lifetime fairness. The modified equitable tontine, which terminates when the last survivor receives all remaining funds, still struggles to achieve lifetime fairness. 
As demonstrated earlier, it is impossible to satisfy periodic fairness or instantaneous fairness when only two survivors remain. To address these limitations, we consider a fair transfer plan that dissolves itself once only two survivors are left. While fair decentralized annuities are generally more flexible and can achieve periodic fairness, they must also dissolve when only two survivors remain to meet the instantaneous fairness condition. The summary is provided in Table \ref{tab:RP}.
		\begin{table}[h]
            \centering
            \begin{tabular}{|c|c|c|c|c|}
                \hline
                         & \multirow{2}{*}{Equitability} & Lifetime & Periodic & Instantaneous \\
                        &  & fairness & fairness & fairness \\ \hline
                Equitable tontine    &  \checkmark  & $\times$  & $\times$ & $\times$ \\ \hline
                Modified equitable tontines    &  \checkmark  & \checkmark  & $\times$ & $\times$ \\ \hline
                Fair transfer plan  & \multirow{2}{*}{\checkmark}   & \multirow{2}{*}{\checkmark}  & \multirow{2}{*}{$\times$} & \multirow{2}{*}{$\times$} \\ 
                (Continue to the last survivor) &    &    &    & \\ \hline
                Fair transfer plans  & \multirow{2}{*}{\checkmark}   & \multirow{2}{*}{\checkmark}  & \multirow{2}{*}{\checkmark} & \multirow{2}{*}{\checkmark} \\ 
                (Dissolve with two survivors) &    &    &    & \\ \hline
                Fair decentralized annuities   & \multirow{2}{*}{\checkmark}   & \multirow{2}{*}{\checkmark}  & \multirow{2}{*}{\checkmark} & \multirow{2}{*}{$\times$} \\ 
                (Continue to the last survivor) &    &    &    & \\ \hline
                Fair decentralized annuities   & \multirow{2}{*}{\checkmark}   & \multirow{2}{*}{\checkmark}  & \multirow{2}{*}{\checkmark} & \multirow{2}{*}{\checkmark} \\ 
                (Dissolve with two survivors) &    &    &    & \\ \hline
            \end{tabular}
            \caption{\scriptsize Comparisons on fairness conditions}
            \label{tab:RP}
        \end{table}

\section{Proofs}
The following is the collection of all proofs to theorems in the main body of the paper.
\subsection{Proof of Theorem \ref{thm3.2}}\label{condA1}
Use Eqs. (\ref{rels}) and (\ref{rels2}) to connect the conditions of the payments and cash values in Axiom \ref{NL} and the conditions of the credit transfers in Theorem \ref{thm3.2}.

\begin{proof}   
Based on Eqs. (\ref{rels}) and (\ref{rels2}), for $i=1,\cdots,n$, we have
\begin{equation}\nonumber
	\begin{split}
		&\int_0^{T_{n-1}}e^{-\delta u}\dif R_i(u)~ \mathbb{I}_{\{\tau_i>T_{n-1}\}}\\
		&=s_i(T_{n-1})e^{-\delta T_{n-1}}+\int_0^{T_{n-1}}e^{-\delta u} r_i(u)\dif u~\mathbb{I}_{\{\tau_i>T_{n-1}\}}\\
		&=e_{i}^{(n-1)}e^{-\delta T_{n-1}}+\left[s_i(T_{n-1}-)e^{-\delta T_{n-1}}+\int_0^{T_{n-1}}e^{-\delta u} r_i(u)\dif u\right]\mathbb{I}_{\{\tau_i>T_{n-1}\}}\\
		&=e_{i}^{(n-1)}e^{-\delta T_{n-1}}+\left[s_i(T_{n-2})e^{-\delta T_{n-2}}+\int_0^{T_{n-2}}e^{-\delta u} r_i(u)\dif u\right]\mathbb{I}_{\{\tau_i>T_{n-1}\}}\\
		&\cdots\\
		&=\sum_{k=1}^{n-1}e_i^{(k)} e^{-\delta T_k}\mathbb{I}_{\{\tau_i>T_{n-1}\}}+s_i(0)\mathbb{I}_{\{\tau_i>T_{n-1}\}}\\
		&=\sum_{k=1}^{n-1}e_i^{(k)} e^{-\delta T_k}\mathbb{I}_{\{\tau_i>T_{n-1}\}}+\left(e_i^0+s_i\right)\mathbb{I}_{\{\tau_i>T_{n-1}\}}.
	\end{split}
\end{equation}
As such, Axiom \ref{NL} is equivalent to 
\begin{equation}\nonumber
	e_i^0+\sum_{k=1}^{n-1}e_i^{(k)}e^{-\delta T_k}\geq 0,\	\ i=1,\cdots,n.
\end{equation}
\end{proof}

\subsection{Proof of Theorem \ref{RI}}\label{ProofRI}
We derive the conditions for the payout functions and the credit transfers separately.

\begin{proof} 
For the payout functions, based on Eq. (\ref{IFT}), we obtain
\begin{equation}\label{account}
	s_i(t)\mathbb{I}_{\{T_{k-1}<\tau_i<T_k\}}=r_i(t)\frac{1-F_i(t)}{f_i^k(t)},
\end{equation}
where $F_i^k(t)\mathbb{P}\{\tau_i\leq t|\mathcal{F}_{T_{k-1}}\}$.
Moreover, because there are no agents leaving the fund, the payments of any agent is afford by the one's own shares, i.e.,
\begin{equation}\label{payment}
	s_i(t)\mathbb{I}_{\{T_{k-1}<\tau_i<T_k\}}=s_i(T_{k-1})-\int_{T_{k-1}}^t r_i(s)\dif s.
\end{equation}
Combining Eqs. (\ref{account}) and (\ref{payment}), we have an ODE of $r_i(t)$:
\begin{equation}\label{ODE1}
	r_i(t)\mathbb{I}_{\{T_{k-1}<\tau_i<T_k\}}=r_i(T_{k-1})\frac{f_i^k(t)}{f_i^k(T_{k-1})}.
\end{equation}
Substituting (\ref{ODE1}) into Eq. (\ref{DAr}) yields
\begin{equation}\label{solstrong}
	\begin{split}
		&r_i(t)\mathbb{I}_{\{T_{k-1}<\tau_i<T_k\}}=s_i(T_{k-1})f_i^k(t),\\
		&s_i(t)\mathbb{I}_{\{T_{k-1}<\tau_i<T_k\}}=s_i(T_{k-1})\mathbb{P}\{\tau_i> t|\mathcal{F}_{T_{k-1}}\},\\
	\end{split}
\end{equation}
which also obtains the uniqueness of payout functions of the instantaneous fair DA.

For the credit transfers, using Eqs. (\ref{PFT}) and (\ref{DFT}), we have
\begin{equation*}
	\mathbb{E}\left[\left.e_{i}^{(k)}\right|\mathcal{F}_{T_{k-1}},\tau_j=T_k=T_{k-1}+t\right]p_j^k(t)=p_i^k(t)\left(s_i(T_{k-1})-\mathbb{E}\left[\left.\int_{T_{k-1}}^{T_k}e^{-\delta u}r_i(u)\dif u\right|\mathcal{F}_{T_{k-1}},\tau_i=T_k=T_{k-1}+t\right]\right),
\end{equation*}
where
\begin{equation*}
	p_i^k(t)=\mathbb{P}\{\tau_i\leq T_k|\mathcal{F}_{T_{k-1}},T_k=T_{k-1}+t\}\mathbb{I}_{\{\tau_i>T_{k-1}\}}.
\end{equation*}
Because the credit transfers $e_i^{(k)}$ are $\F_{T_{k-1}}\bigvee\sigma\left(T_k,(k)\right)$-measurable, there exist $\F_{T_{k-1}}$-measurable functions $\alpha_i^{j}(t)$, $j\neq i$, such that $e_i^{(k)}=\alpha_i^{(k)}(T_k)s_{(k)}(T_k-)$, where the coefficients $\alpha_i^j$ in the cases $\{(k)=j\}$ are determined by
    \begin{equation}\label{PIalphaA}
        \begin{split}
            &\sum_{j\neq i}\alpha_i^j(t)s_j(T_{k-1})f_j^k(t)=s_i(T_{k-1})f_i^k(t),\\
            &\sum_{j\neq i}\alpha_i^j(t)=1.
        \end{split}
    \end{equation}
    Similar with \cite{Sabin2010}, the non-negative solution of Eq. (\ref{PIalphaA}) exists if and only if 
\begin{equation}\label{condstrong}
	s_i(T_{k-1}) f_i^k(t)\leq\sum_{j\neq i} s_j(T_{k-1}) f_j^k(t),\	\ \forall t\geq 0,\ \ i=1,\cdots,n,\  k=1,\cdots,n-1.
\end{equation}
\end{proof}

\subsection[]{Proof of Theorem \ref{RP}}\label{thm4.2}
We first derive the conditions of the credit risks to make the DA be periodically fair for any fixed payments, and then find the condition of the payments to make the DA proper.

\begin{proof} 
For fixed payments $r_i$, as \[s_i(T_k-)=s_i(T_{k-1})-\int_{T_{k-1}}^{T_k}e^{-\delta u}r_i(u)\dif u,\]
 the conditions of periodic fairness are equivalent to
\begin{equation*}
	\mathbb{E}\left[\left.\left(s_i(T_{k-1})-\int_{T_{k-1}}^{T_k}e^{-\delta u}r_i(u)\dif u\right)\mathbb{I}_{\{T_k=\tau_i\}}\right|\F_{T_{k-1}}\right]=\mathbb{E}\left[\left.e_i^{(k)}\right|\F_{T_{k-1}}\right].
\end{equation*}

As such, the credit transfers satisfy
\begin{equation}\label{interval}
	\sum_{j\neq i}\mathbb{E}\left[\left.e_i^{(k)}\right|\mathcal{F}_{T_{k-1}},T_k=\tau_j\right]p_j^k=p_i^k\left(s_i(T_{k-1})-\mathbb{E}[R_i^k|\mathcal{F}_{T_{k-1}},T_k=\tau_i]\right)\mathbb{I}_{\{\tau_i>T_{k-1}\}},
\end{equation}
where $p_i^k=\mathbb{P}\{T_{k-1}<\tau_i\leq T_k\leq T|\mathcal{F}_{T_{k-1}}\}$. Because the credit transfers are of the form $e_i^{(k)}=\alpha_i^{(k)}(T_k)s_{(k)}(T_k-)$, the coefficients $\alpha_i^j$ in the cases $\{(k)=j\}$ are determined by
    \begin{equation}\label{PCTA}
		\left\{
		\begin{split}
			&\sum_{j\neq i}p_j^k\left[s_j(T_{k-1})-ER_j^k\right]\alpha_{i}^j(t)=p_i^k\left[s_i(T_{k-1})-ER_i^k\right],\	\ i=1\cdots,n,\\
			&\sum_{i\neq j}\alpha_{i}^j(t)=1,\	\ j=1, \cdots,n.
		\end{split}
		\right.
    \end{equation}
    Notice that Eq. (\ref{PCTA}) does not depends on time $t$, as such, the coefficients $\alpha_i^j$ are also independent of $t$.

Therefore, if the payments $r_i(t)\mathbb{I}_{\{t\in[T_{k-1},T_k)\}}=r_i^k(t-T_{k-1})\mathbb{I}_{\{t\in[T_{k-1},T_k)\}}$, $i=1,\cdots,n$, satisfy the following conditions 
\begin{equation}\label{PRP}
	\sum_{j\neq i}p_j^k\left[s_j(T_{k-1})-ER_j^k\right]\geq p_i^k\left[s_i(T_{k-1})-ER_i^k\right],\    \ i=1,\cdots,n,
\end{equation}
 where $ER_i^k=\mathbb{E}\left[\left.\int_{T_{k-1}}^{T_k}e^{-\delta u}r_i(u)\dif u\right|\mathcal{F}_{T_{k-1}},T_k=\tau_i\right]$, then there exists non-negative coefficients $\alpha_i^{(k)}$ such that the tontine generated by $r_i$ and $e_{i}^{(k)}$ is periodically fair.

\end{proof}

\end{document}